\newtheorem{theorem}{Theorem}[section]
\newtheorem{lemma}[theorem]{Lemma}
\newtheorem{proposition}[theorem]{Proposition}
\newtheorem{corollary}[theorem]{Corollary}
\newtheorem{definition}[theorem]{Definition}
\newtheorem{example}[theorem]{Example}
\newtheorem{remark}[theorem]{Remark}
\numberwithin{equation}{section}
\def\boxspan{\mathit{span}}
\def\params{{\mathsf{q}}}
\newcommand{\R}{{\mathbb{R}}}
\newcommand{\Ze}{{\mathbb Z}}
\newcommand{\N}{{\mathbb{N}}}
\newcommand{\X}{{\mathbf{X}}}
\begin{document}

\begin{abstract}
 Opacity is an important information-flow security property in the analysis of cyber-physical systems.
It captures the plausible deniability of the system's secret behavior in the presence of an intruder that may access the information flow.
Existing works on opacity only consider non-metric systems by assuming that the intruder can always distinguish two different outputs precisely.
In this paper, we extend the concept of opacity to systems whose output sets are equipped with metrics.
Such systems are widely used in the modeling of many real-world systems whose measurements are physical signals.
A new concept called approximate opacity is proposed in order to quantitatively evaluate the security guarantee level with respect to the measurement precision of the intruder.
Then we propose a new simulation-type relation, called approximate opacity preserving simulation relation, which characterizes how close two systems are in terms of the satisfaction of approximate opacity.
This allows us to verify approximate opacity for large-scale, or even infinite systems, using their abstractions.
We also discuss how to construct approximate opacity preserving  symbolic models for a class of discrete-time control systems.
Our results extend the definitions and analysis techniques for opacity from non-metric systems to metric systems.
\end{abstract}

\title{On Approximate Opacity of Cyber-Physical Systems}

\author[X. Yin]{Xiang Yin$^1$}
\author[M. Zamani]{Majid Zamani$^2$}

\address{$^1$Department of Automation,  Shanghai Jiao Tong University, Shanghai, China.}
\email{yinxiang@sjtu.edu.cn}

\address{$^2$Computer Science Department, University of Colorado Boulder, CO 80309, USA.}
\email{majid.zamani@colorado.edu}

\maketitle

\section{Introduction}
\subsection{Motivations}
Cyber-physical systems (CPS) are complex systems resulting from tight interactions of dynamical systems and computational devices.
Such systems are generally very complex posing both continuous and discrete behaviors which makes the verification and design of such systems significantly challenging.
In particular, components in CPS are usually connected via communication networks in order to acquire and exchange information so that some global functionality of the system can be achieved.
However, this also brings new challenges for the verification and design of CPS since the communication between system components may release information that might compromise the security of the system.
Therefore, how to analyze and enforce security for CPS is becoming an increasingly important issue and has drawn considerable attention in the literature in the past few years \cite{kim2012cyber,sandberg2015cyberphysical}.

In this paper, we investigate an important information-flow security property called \emph{opacity}.
Roughly speaking, opacity is a confidentiality property that captures whether or not the ``secret" of the system can be revealed to an intruder that can infer the system's actual behavior based on the information flow.
A system is said to be opaque if it always has the plausible deniability for any of its secret behavior.
The concept of opacity was originally proposed in the computer science literature as a unified notion for several security properties \cite{mazare2004using,Bryans2008OpacityTransitionSystems}.
Since then, opacity has been studied more extensively in the context of Discrete-Event Systems (DES), an important class of event-driven dynamical systems with discrete state spaces.
For example, in \cite{Saboori2011KStepOpacityJournal,Saboori2012InfiniteStepOpacity,Saboori2013InitialStateOpacity}, several state-based notions of opacity were proposed, which include current-state opacity, initial-state opacity, $K$-step opacity and infinite-step opacity.
In \cite{Lin2011OpacityDES}, the author proposed two language-based opacity called strong opacity and weak opacity and investigated their relationships with some other properties.
In \cite{Wu2013ComparativeAnalysisOpacity}, transformation algorithms among different notions of opacity were proposed.
The above mentioned works mainly consider DES modeled by finite-state automata.
More recently, the definitions and verification algorithms for different notions of opacity have been extended to other classes of (discrete) systems,
including Petri nets \cite{Tong17OpacityPetriNets,tong2017verification,cong2018line,basile2018algebraic}, stochastic systems \cite{saboori2014current,keroglou2017probabilistic,wuhai2018co},   recursive tile systems \cite{chedor2014diagnosis} and pushdown systems \cite{kobayashi2013verification}.
The interested readers are referred to recent surveys \cite{Jacob2016OverviewDESOpacity,lafortune2018history} for more references and recent developments on this active research area.

Since opacity is an information-flow property, its definition strictly depends on the information model of the system.
Most of the existing works in the literature formulate opacity by adopting  the event-based observation model, i.e., some events of the system (either transition labels or state labels) are observable or distinguishable while some are not.
This essentially assumes that the output of the system is symbolic in the sense that we can precisely distinguish two outputs with different labels.
Hereafter, we will also refer to opacity under this setting as \emph{exact opacity}.
Exact opacity is very meaningful for systems whose output sets are non-metric, e.g., discrete systems whose outputs are logic events.
However, for many real-world applications whose outputs are physical signals, instead of just saying that two events are distinguishable or indistinguishable, we may have a measurement to quantitatively evaluate how close two outputs are.
Such systems are referred to as \emph{metric systems}, where the output sets are equipped with appropriate metrics.
For metric systems, if two signals are very close to each other, then it will be very hard to distinguish them unambiguously due to the measurement precision or potential measurement noises.
A typical example of this scenario is linear or nonlinear discrete-time control systems with continuous state-spaces and continuous output mappings.
Therefore, existing definitions of opacity are too strong for metric systems since they implicitly assume that the intruder can always distinguish two output signals even when they are arbitrarily close to each other, which is not practical.

\subsection{Our Contributions}
In this paper, we propose a new concept called \emph{approximate opacity} that is  more applicable to metric systems.
In particular, we treat two outputs as ``indistinguishable" outputs if their distance is smaller than a given threshold parameter $\delta\geq 0$.
We consider three basic types of opacity, initial-state opacity, current-state opacity and infinite-step opacity,
and  propose three new notions of opacity as their approximate counterparts.
For example, $\delta$-approximate initial-state opacity (respectively, $\delta$-approximate current-state opacity) requires that, for any state run starting from a secret state (respectively, leading to a secret state),
there exists another state run starting from a non-secret state (respectively, leading to a non-secret state), such that their corresponding output runs are $\delta$-close to each other.
By ``$\delta$-close", we mean that the largest distance between two output runs is smaller than $\delta$.
Intuitively, $\delta$-approximate initial-state opacity (respectively, $\delta$-approximate current-state opacity) says that
the intruder can never determine that the system is initiated from a secret state (respectively, currently at a secret state) if it does not have an enough measurement precision which is captured by parameter $\delta$.
Similarly, $\delta$-approximate infinite-step opacity requires that the intruder can never determine that the system was at secret state \emph{for any specific instant} if its measurement precision is no more than $\delta$.
In other words, instead of requiring that the system is exactly opaque,
our new definitions essentially provide relaxed versions of opacity with a quantitative security guarantees level with respect to the measurement precision of the intruder.
Clearly, approximate opacity boils down to the exact one when $\delta=0$.
Effective verification algorithms are also provided to verify approximate opacity for the case of finite systems.

It is worth noting that the complexity of verifying exact opacity is already PSPACE-hard \cite{cassez2012synthesis}.
As a generalization of exact opacity, verifying  approximate   opacity unavoidably requires very high computational complexity.
Therefore, for systems whose state-spaces are very large or even infinite, it is desirable to construct abstract models that preserve opacity, to some extent, for the propose of verification.
To this end,  for each type of approximate opacity, we propose the concept of $\varepsilon$-approximate  opacity preserving  simulation relation.
The proposed simulation relations characterize how close two systems are, specified by parameter $\varepsilon\geq 0$, in terms of the satisfaction of approximate opacity.
More specifically, we show that if there is an $\varepsilon$-approximate opacity preserving simulation relation from system $S_a$ to system $S_b$,
then $S_b$ being $\delta$-approximate opaque implies that $S_a$ is $(\delta+2\varepsilon)$-approximate opaque.
In particular, for a class of incrementally input-to-state stable discrete-time control systems with possibly infinite state-spaces,
we propose an effective approach to construct symbolic models (a.k.a. finite abstractions) that approximately simulate the original systems in the sense of opacity preserving and vice versa.
The resulting symbolic model is finite if the state-space of the original continuous system is within a bounded region.
Therefore, the proposed abstraction technique together with the verification algorithm for the finite case provide a sound way for verifying opacity of discrete-time control systems with continuous state-spaces.

\subsection{Related Works}
Our work is closely related to several works in the literature.
First, several different approaches have been proposed in the literature to evaluate opacity more quantitatively rather than  requiring that the system is opaque exactly \cite{saboori2014current,berard2015quantifying,chen2017quantification,yin2019infinite}.
For example, in \cite{chen2017quantification}, the authors adopt the Jensen-Shannon divergence as the measurement to quantify secrecy loss.
In \cite{saboori2014current,berard2015quantifying,yin2019infinite}, stochastic DES models are used to study the probabilistic measurement of opacity.
These approaches essentially aim to analyze how opaque a single system is, e.g., the probability of being opaque.
However, they neither consider  how close two systems are in terms of being opaque nor consider under what observation precision level, we can guarantee opacity.

There are also attempts in the literature that extend opacity from discrete systems to continuous systems.
For example, in the recent results in \cite{ramasubramanian2016framework,ramasubramanian2016decentralized,ramasubramanian2017opacity}, the authors extended the notion of opacity to (switched) linear systems.
However, their definition of opacity is more related to an output reachability property rather than an information-flow property.
Moreover, their formulation is mostly based on the setting of exact opacity, i.e., we can always distinguish two different outputs precisely no matter how close they are,
In \cite{ramasubramanian2016framework}, the authors mentioned the direction of using output metric to quantify opacity and a property called strong $\epsilon$-$\mathcal{K}$-initial-state opacity was proposed,
which is closely related to our notions.
However, no systematic study, e.g., verification and abstraction as we consider in this paper,  was provided for this property.

Regarding the techniques used in this paper, first, our algorithms for the verification of approximate notions of opacity are motivated by the verification algorithms for exact opacity studied in \cite{Saboori2011KStepOpacityJournal,Yin2017TWObserverInfiniteStepOpacity}.
In particular, we use the idea of constructing a new system, called the state-estimator, that tracks all possible states  consistent with the observation.
However, our construction of state-estimator is not exactly the same as the existing one as additional state information is needed in order to handle the issue of approximation.

Abstraction-based techniques have also been investigated in the literature for the verification and synthesis of opacity; see, e.g., \cite{Zhang2018OpacitySimilation,noori2018compositional,noori2018incremental,wu2018privacy,mohajerani2018efficient}.  
In particular, in our recent work \cite{Zhang2018OpacitySimilation}, we propose several notions of opacity preserving (bi)simulation relations.
However, these relations only preserve exact opacity for  non-metric systems.
Our new relations extend the relations in \cite{Zhang2018OpacitySimilation} to metric systems by taking  into account how close two systems are.
Such an extension is motivated by the definition of approximate (bi)simulation relation originally proposed in \cite{girard07}.
However, the original definition of approximate (bi)simulation relation does not necessarily preserves approximate opacity.
Constructing symbolic models for control systems is also an active research area; see, e.g., \cite{girard2010approximately,reissig11,zam12,Zam15}.
However, most of the existing works on the construction of symbolic models only consider the dynamics of the systems and are not taking into account the opacity property.
In our approach, we need to consider both the dynamic and the secret of the system while constructing the symbolic model and guarantee the preservation of approximate opacity across related systems.

Finally, approximate notions of two related properties called diagnosability and predictability are investigated recently in \cite{pola2018approximate,fiore2018approximate}.
Their setting is very similar to us as we both consider a measurement uncertainty threshold.
However, diagnosability and predictability are \emph{language-based} properties, which can be preserved by standard approximate simulation relation.
Our notions of opacity are \emph{state-based} and we show that standard approximate simulation relation does not preserve opacity.
Therefore, the proposed approximate opacity preserving simulation relation is different from the standard approximate simulation relation in the literature.
\subsection{Organization}
The rest of this paper is organized as follows.
In Section~\ref{sec:pre}, we first introduce some necessary preliminaries.
Then we propose the concept of approximate opacity in Section~\ref{sec:def-opx}.
The verification procedures for approximate opacity are provided in Section~\ref{sec:ver}.
In Section~\ref{sec:rel}, approximate opacity preserving simulation relations are proposed and their properties are also discussed.
In Section~\ref{sec:contrl}, we describe how to construct approximate opacity preserving symbolic models for incrementally stable discrete-time control systems with continuous state-spaces.
Finally, we conclude the paper by Section~\ref{sec:conclu}. 
Preliminary and partial version of this paper is presented as an extended abstract in \cite{iccps19}. 

\section{Preliminaries}\label{sec:pre}

\subsection{Notation}
The symbols $\N$, $\N_0$, $\Ze$, $\R$, $\R^+$, and $\R_0^+$ denote the set of natural, nonnegative integer, integer, real, positive, and nonnegative real numbers, respectively. Given a vector \mbox{$x\in\mathbb{R}^{n}$}, we denote by $x_{i}$ the $i$--th element of $x$, and by $\Vert x\Vert$ the infinity norm of $x$.

The closed ball centered at $u\in{\mathbb{R}}^{m}$ with radius $\lambda$ is defined by $\mathcal{B}_{\lambda}(u)=\{v\in{\mathbb{R}}^{m}\,|\,\Vert u-v\Vert\leq\lambda\}$. A set $B\subseteq \R^m$ is called a
{\em box} if $B = \prod_{i=1}^m [c_i, d_i]$, where $c_i,d_i\in \R$ with $c_i < d_i$ for each $i\in\{1,\ldots,m\}$.
The {\em span} of a box $B$ is defined as $\boxspan(B) = \min\{ | d_i - c_i| \mid i=1,\ldots,m\}$.
For a box $B\subseteq\R^m$ and $\mu \leq \boxspan(B)$,
define the $\mu$-approximation $[B]_\mu = [\R^m]_{\mu}\cap{B}$, where $[\R^m]_{\mu}=\{a\in \R^m\mid a_{i}=k_{i}\mu,k_{i}\in\mathbb{Z},i=1,\ldots,m\}$.
Remark that $[B]_{\mu}\neq\varnothing$ for any $\mu\leq\boxspan(B)$.
Geometrically, for any $\mu\in{\mathbb{R}^+}$ with $\mu\leq\boxspan(B)$ and $\lambda\geq\mu$, the collection of sets
\mbox{$\{\mathcal{B}_{\lambda}(p)\}_{p\in [B]_{\mu}}$}
is a finite covering of $B$, i.e. \mbox{$B\subseteq\bigcup_{p\in[B]_{\mu}}\mathcal{B}_{\lambda}(p)$}.
We extend the notions of $\boxspan$ and {\em approximation} to finite unions of boxes as follows.
Let $A = \bigcup_{j=1}^M A_j$, where each $A_j$ is a box.
Define $\boxspan(A) = \min\{\boxspan(A_j)\mid j=1,\ldots,M\}$,
and for any $\mu \leq \boxspan(A)$, define $[A]_\mu = \bigcup_{j=1}^M [A_j]_\mu$.

Given a function \mbox{$f:\mathbb{N}_{0}^{+}\rightarrow\mathbb{R}^n$}, the (essential) supremum of $f$ is denoted by
$\Vert f\Vert_{\infty}:=\text{(ess)sup}\{\Vert f(k)\Vert,k\geq0\}$. A continuous function \mbox{$\gamma:\mathbb{R}_{0}^{+}\rightarrow\mathbb{R}_{0}^{+}$} is said to belong to class $\mathcal{K}$ if it is strictly increasing and \mbox{$\gamma(0)=0$}; $\gamma$ is said to belong to class $\mathcal{K}_{\infty}$ if \mbox{$\gamma\in\mathcal{K}$} and $\gamma(r)\rightarrow\infty$ as $r\rightarrow\infty$. A continuous function \mbox{$\beta:\mathbb{R}_{0}^{+}\times\mathbb{R}_{0}^{+}\rightarrow\mathbb{R}_{0}^{+}$} is said to belong to class $\mathcal{KL}$ if, for each fixed $s$, the map $\beta(r,s)$ belongs to class $\mathcal{K}$ with respect to $r$ and, for each fixed nonzero $r$, the map $\beta(r,s)$ is decreasing with respect to $s$ and $\beta(r,s)\rightarrow 0$ as \mbox{$s\rightarrow\infty$}. We identify a relation \mbox{$R\subseteq A\times B$} with the map \mbox{$R:A \rightarrow 2^{B}$} defined by $b\in R(a)$ iff \mbox{$(a,b)\in R$}. Given a relation \mbox{$R\subseteq A\times B$}, $R^{-1}$ denotes the inverse relation defined by \mbox{$R^{-1}=\{(b,a)\in B\times A:(a,b)\in R\}$}.

\subsection{System Model}
In this paper, we employ a notion of ``\emph{system}'' introduced in \cite{tab09} as the underlying model of CPS describing both continuous-space and finite control systems.
\begin{definition}
	\label{system}
	A system $S$ is a tuple
	\begin{equation}
		S=(X,X_0,U,\rTo,Y,H),
	\end{equation}
	where
	\begin{itemize}
		\item $X$ is a (possibly infinite) set of states;
		\item $X_0\subseteq X$ is a (possibly infinite) set of initial states;
		\item $U$ is a (possibly infinite) set of inputs;
		\item $\rTo\subseteq X\times U\times X$ is a transition relation;
		\item $Y$ is a set of outputs;
		\item $H:X\rightarrow Y$ is an output map.
	\end{itemize}
	A transition \mbox{$(x,u,x')\in\rTo$} is also denoted by $x\rTo^ux'$.
	For a transition $x\rTo^ux'$, state $x'$ is called a \mbox{$u$-successor}, or simply a successor, of state $x$;
	state $x$ is called a \mbox{$u$-predecessor}, or simply a predecessor, of state $x'$.
	We  denote by $\mathbf{Post}_{u}(x)$ the set of all \mbox{$u$-successors} of state $x$
	and by $\mathbf{Pre}_{u}(x)$ the set of all \mbox{$u$-predecessors} of state $x$.
	For a set of states $q\in 2^X$, we define
	$\mathbf{Post}_{u}(q)=\cup_{x\in q}\mathbf{Post}_{u}(x)$  and $\mathbf{Pre}_{u}(q)=\cup_{x\in q}\mathbf{Pre}_{u}(x)$.
	A system $S$ is said to be
	\begin{itemize}
		\item \textit{metric}, if the output set $Y$ is equipped with a metric
		$\mathbf{d}:Y\times Y\rightarrow\mathbb{R}_{0}^{+}$;
		\item \textit{finite} (or \textit{symbolic}), if $X$ and $U$ are finite sets;
		\item \textit{deterministic}, if for any state $x\in{X}$ and any input $u\in{U}$, $\left\vert\mathbf{Post}_{u}(x)\right\vert\leq1$ and \textit{nondeterministic} otherwise.
	\end{itemize}
\end{definition}

Given a system $S=(X,X_0,U,\rTo,Y,H)$ and any initial state $x_0\in X_0$, a finite state run generated from $x_0$ is a finite sequence of transitions:
\begin{align}\label{run}
	x_0\rTo^{u_1}x_1\rTo^{u_2}\cdots\rTo^{u_{n-1}}x_{n-1}\rTo^{u_{n}}x_n,
\end{align}
such that $x_i\rTo^{u_{i+1}}x_{i+1}$ for all $0\leq i<n$. A finite state run can be readily extended to an infinite state run as well. A finite output run is a sequence $y_0y_1\ldots y_n$ such that there exists a finite state run of the form \eqref{run} with $y_i=H(x_i)$, for $i=0,\ldots,n$. A finite output run can also be directly extended to an infinite output run as well.

\section{Exact and  Approximate Opacity}\label{sec:def-opx}

In this section, we first review the notion of exact opacity.
Then we introduce the notion of approximate opacity.
\subsection{Exact Opacity}

In many applications, systems may have some ``secrets" that do not want to be revealed to intruders that are potentially malicious.
In this paper, we adopt a state-based formulation of secrets.
Specifically, we assume that $X_S\subseteq X$ is a set of \emph{secret states}.
Hereafter, we will always consider systems with secret states and
we write a system $S=(X,X_0,U,\rTo,Y,H)$  with secret  states $X_S$ by a new tuple  $S=(X,X_0,X_S,U,\rTo,Y,H)$.

In order to characterize whether or not a system is secure, the concept of opacity was proposed in the literature.
We review three basic notions of opacity \cite{Wu2013ComparativeAnalysisOpacity} as follows.

\begin{definition}\label{def:opa}
	Consider a system $S=(X,X_0,X_S,U,\rTo,Y,H)$.
	System $S$ is said to be
	\begin{itemize}
		\item
		\textbf{initial-state opaque}
		if for any $x_0\in X_0\cap X_S$ and   finite state run $x_0\rTo^{u_1}x_1\rTo^{u_2} \cdots \rTo^{u_{n}}x_n$,
		there exist $x_0'\in X_0\setminus X_S$ and a finite state run $x_0'\rTo^{u_1'}x_1'\rTo^{u_2'}\cdots\rTo^{u_{n}'}x_n'$
		such that $H(x_i)=H(x_i')$ for any $i=0,1,\dots,n$;
		\item
		\textbf{current-state opaque}
		if for any $x_0\in X_0$ and finite state run $x_0\rTo^{u_1}x_1\rTo^{u_2}\!\cdots\!\rTo^{u_{n}}x_n$ such that $x_n\in X_S$,
		there exist $x_0'\in X_0$ and finite state run $x_0'\rTo^{u_1'}x_1'\rTo^{u_2'}\cdots\rTo^{u_{n}'}x_n'$
		such that $x_n'\in X\setminus X_S$ and $H(x_i)=H(x_i')$ for any $i=0,1,\dots,n$;
		\item
		\textbf{infinite-step opaque}
		if for any $x_0\in X_0$ and finite state run $x_0\rTo^{u_1}x_1\rTo^{u_2}\!\cdots\!\rTo^{u_{n}}x_n$ such that $x_k\in X_S$ for some $k=0,\dots n$,
		there exist $x_0'\in X_0$ and finite state run $x_0'\rTo^{u_1'}x_1'\rTo^{u_2'}\cdots\rTo^{u_{n}'}x_n'$
		such that $x_k'\in X\setminus X_S$ and $H(x_i)=H(x_i')$ for any $i=0,1,\dots,n$.
	\end{itemize}
\end{definition}

The intuitions of the above definitions are as follows.
Suppose that the output run of the system can be observed by a passive intruder that may use this information to infer the secret of the system.
Then initial-state opacity requires that the intruder should never know for sure that the system is initiated from a secret state no matter what output run is generated.
Similarly, current-state opacity says that the intruder should never know for sure that the system is currently at a secret state no matter what output run is generated.
Infinite-step opacity is stronger than both initial-state opacity and current-state opacity as it requires that
the intruder should never know that the system is/was at a secret state for any specific instant $k$.
For any system $S=(X,X_0,X_S,U,\rTo,Y,H)$, we   assume without loss of generality that
$\forall x_0\in X_0: \{x\in X_0: H(x)=H(x_0)  \}\not\subseteq X_S$.
This assumption essentially requires that the secret of the system cannot be revealed initially; otherwise, the system is not opaque trivially.

\begin{remark}
	Definition~\ref{def:opa}  implicitly assumes that the intruder only has the output information of the system.
	In other words, the input information is assumed to be internal and intruder does not know which input the system takes.
	This setting can be easily relaxed and all results in this paper can be extended to the case where both input and output information  are available by the intruder.
	For example, we can simply refine the model of the system such that the output space of the refined system is a pair and the input leading to a state is also encoded in the output of this state.
\end{remark}

\subsection{Approximate Opacity}
Note that Definition~\ref{def:opa} requires that for any secret behavior, there exists a non-secret behavior such that they generate exactly the same output.
Therefore, we will also refer to these definitions as \emph{exact opacity}.
Exact opacity essentially assumes that the intruder or the observer can always measure each output or distinguish between two different outputs precisely.
This setting is reasonable for non-metric systems where outputs are symbols or events.
However, for metric systems, e.g., when the outputs are physical signals, this setting may be too restrictive.
In particular, due to the imperfect measurement precision, which is almost the case for all physical systems,
it is very difficult to distinguish two observations  if their difference is very small.
Therefore, exact opacity may be too strong for metric systems and it will be useful to define a weak and ``robust" version of opacity by characterizing under which measurement precision the system is opaque.
To this end, we define new notions of opacity called \emph{approximate opacity} for metric systems.

\begin{definition}\label{def:opa-app}
	Let $S=(X,X_0,X_S,U,\rTo,Y,H)$ be a metric system, with the metric $\mathbf{d}$ defined over the output set, and a constant $\delta\geq 0$.
	System $S$ is said to be
	\begin{itemize}
		\item
		\textbf{$\delta$-approximate initial-state opaque}
		if for any $x_0\in X_0\cap X_S$ and   finite state run $x_0\rTo^{u_1}x_1\rTo^{u_2}\cdots\\ \rTo^{u_{n}}x_n$,
		there exist $x_0'\in X_0\setminus X_S$ and a finite state run $x_0'\rTo^{u_1'}x_1'\rTo^{u_2'}\cdots \rTo^{u_{n}'}x_n'$
		such that $$\max_{i\in\{0,\dots,n\}}\mathbf{d}( H(x_i),H(x_i'))\leq \delta$$
		\item
		\textbf{$\delta$-approximate current-state opaque} if for any $x_0\in X_0$ and finite state run $x_0\rTo^{u_1}x_1\rTo^{u_2}\cdots\\\rTo^{u_{n}}x_n$ such that $x_n\in X_S$,
		there exist $x_0'\in X_0$ and finite state run $x_0'\rTo^{u_1}x_1' \rTo^{u_2'}\cdots\rTo^{u_{n}'}x_n'$
		such that $x_n'\in X\setminus X_S$ and
		$$\max_{i\in\{0,\dots,n\}}\mathbf{d}( H(x_i),H(x_i'))\leq \delta$$
		\item
		\textbf{$\delta$-approximate infinite-step opaque} if for any $x_0\in X_0$ and finite state run $x_0\rTo^{u_1}x_1\rTo^{u_2}\cdots\\\rTo^{u_{n}}x_n$ such that $x_k\in X_S$  for some $k=0,\dots n$,
		there exist $x_0'\in X_0$ and finite state run $x_0'\rTo^{u_1}x_1' \rTo^{u_2'}\cdots\rTo^{u_{n}'}x_n'$ such that $x_k'\in X\setminus X_S$ and
		$$\max_{i\in\{0,\dots,n\}}\mathbf{d}( H(x_i),H(x_i'))\leq \delta.$$
	\end{itemize}
\end{definition}

The notions of  $\delta$-approximate initial-state,  current-state opacity and infinite-step opacity are very similar to their exact counterparts.
The main difference is how we treat two outputs as indistinguishable outputs.
Intuitively, the approximate version of opacity can be interpreted as
``\emph{the secret of the system cannot be revealed to an intruder that does not have an enough measurement precision related to parameter $\delta$}".
In other words, instead of providing an exact security guarantee, approximate opacity provides a relaxed and quantitative security guarantee with respect to the measurement precision of the intruder.
Clearly, when $\delta=0$, each notion of $\delta$-approximate opacity   reduces to its exact version.
Similar to the exact case, hereafter, we assume without loss of generality that
$$\forall x_0\in X_0: \{x\in X_0: \mathbf{d}(H(x_0),H(x))\leq \delta  \}\not\subseteq X_S,$$
for any system $S=(X,X_0,X_S,U,\rTo,Y,H)$.
This assumption can be easily checked and its non-satisfaction means that
$\delta$-approximate  initial-state opacity, $\delta$-approximate current-state opacity and $\delta$-approximate infinite-step opacity are all violated trivially.

We illustrate exact opacity and approximate opacity by the following example.
\begin{example}
	\begin{figure}
		\centering
		\includegraphics[width=0.32 \textwidth]{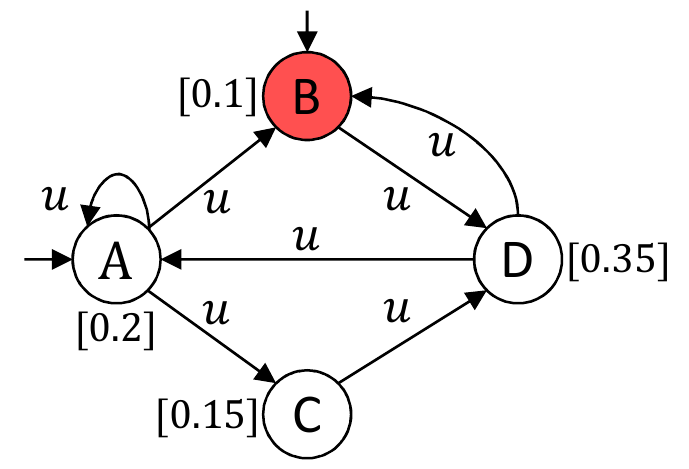}
		\caption{An example for approximate opacity, where states marked by red denote secret states,
			states marked by input arrows denote initial states and the output map is specified by the value associated to each state.}\label{fig:1}
	\end{figure}
	Consider system $S=(X,X_0,X_S,U,\rTo,Y,H)$ depicted in Figure~\ref{fig:1},
	where $X=\{A,B,C,D\},\\X_0=\{A,B\}, X_S=\{B\},U=\{u\},H=\{0.1,0.15,0.2,0.35\}\subseteq \R$ and the output map is specified by the value associated to each state.
	Clearly, none of exact initial-state opacity, exact current-state opacity and exact infinite-step opacity is satisfied
	since we know immediately that the system is at secret state $B$ when value $0.1$ is observed.
	
	Now, let us assume that the output set $Y$ is equipped with  metric $\mathbf{d}$   defined by $\mathbf{d}(y_1,y_2)=|y_1-y_2|$.
	We claim that $S$ is not $0.05$-approximate current-state opaque.
	For example, let us consider finite run $B \rTo^{u} D \rTo^{u} B$ that generates output run $[0.1][0.35][0.1]$.
	However, there does not exists a finite run leading to a non-secret state whose output run is $0.05$-close to the above output run.
	To see this, in order to match the above output run, we must consider a run starting from state $B$, since for the initial state $A$, we have $\mathbf{d}(H(A),H(B))=0.1\geq 0.05$, and the next state reached can only be $D$.
	From state $D$, we can reach states $A$ and $B$, but $\mathbf{d}(H(A),0.1)=0.1\geq 0.05=:\delta$.
	Therefore, the only finite run that approximately matches the above output will end up with secret state $B$, i.e.,
	we know unambiguously that the system is currently at a secret state even when we cannot measure the output precisely.
	On the other hand, one can check that the system is $0.1$-approximate current-state opaque.
	
	Similarly, system $S$ is not $0.1$-approximate initial-state opaque, since for output run $[0.1][0.35]$ starting from the secret state $B$,
	there is no run starting from a non-secret initial state that can approximately match it.
	One can also check that the system is $\delta$-approximate initial-state opaque only when $\delta\geq 0.15$.
	We will provide formal procedures for verifying approximate opacity later.
\end{example}

\begin{remark}
	Let $S=(X,X_0,X_S,U,\rTo,Y,H)$ be a metric system.
	If the output map $H$ is identity, i.e. $H(x)=x$, $\forall x\in X$, then $S$ is trivially not exactly opaque as in Definition~\ref{def:opa} since we know the exact state of the system directly.
	However, this is not the case for the approximate notions of opacity as in Definition~\ref{def:opa-app} since the distance between a secret state and a non-secret state can be very small even if their values are not exactly the same.
\end{remark}

\section{Verification of Approximate Opacity for Finite Systems}\label{sec:ver}

In this section, we show how to verify approximate opacity for finite systems.
This will also provide the basis for the verification of approximate opacity for infinite systems.

\subsection{Verification of Approximate Initial-State Opacity}
In order to verify $\delta$-approximate initial-state opacity, we construct a new system called the \emph{$\delta$-approximate initial-state estimator} defined  as follows.
\begin{definition}
	Let $S=(X,X_0,X_S,U,\rTo,Y,H)$ be a metric system, with the metric $\mathbf{d}$ defined over the output set, and a constant $\delta\geq 0$.
	The $\delta$-approximate initial-state estimator is a system (without outputs)
	\[
	S_I=(X_I,X_{I0},U,\rTo_I),
	\]
	where
	\begin{itemize}
		\item
		$X_I\subseteq X\times 2^X$ is the set of states;
		\item
		$X_{I0}=\{(x,q)\!\in\! X\times 2^X:   x'\!\in\! q\Leftrightarrow \mathbf{d}(H(x),H(x'))\leq \delta   \}$ is the set of initial states;
		\item
		$U$ is the set of inputs, which is the same as the one in $S$;
		\item
		$\rTo_I\subseteq X_I\times U\times X_I$ is the transition function defined by:
		for any $(x,q),(x',q')\in X\times 2^X$ and $u\in U$,   $(x,q)\rTo^{u}_I(x',q')$ if
		\begin{enumerate}
			\item
			$(x',u,x)\in \rTo$; and
			\item
			$q'=\cup_{\hat{u}\in U} \mathbf{Pre}_{\hat{u}}(q) \cap \{x''\!\in\! X:   \mathbf{d}(H(x'),H(x''))\!\leq \!\delta  \}$.
		\end{enumerate}
	\end{itemize}
	For the sake of simplicity, we only consider the part of $S_I$ that is reachable from initial states.
\end{definition}

Intuitively, the $\delta$-approximate initial-state estimator works as follows.
Each initial state of $S_I$ is a pair consisting of a system state and its $\delta$-closed states; we consider all each pairs as the set of initial states.
Then from each state, we track \emph{backwards} states that are consistent with the output information recursively.
Our construction is motivated by the reversed-automaton-based initial-state-estimator proposed in \cite{Wu2013ComparativeAnalysisOpacity}
but with the following differences.
First, the way we defined information-consistency is different. Here we treat states whose output are $\delta$-close to each other as consistent states.
Moreover, the structure in \cite{Wu2013ComparativeAnalysisOpacity} only requires a state space of $2^X$, while our state space is $X\times 2^X$.
The additional first component can be understood as the ``reference trajectory"  that is used to determine what is ``$\delta$-close" at each instant.
We use the following result to show the main property of  $S_I$.

\begin{proposition}\label{prop:obs}
	Let $S=(X,X_0,X_S,U,\rTo_I,Y,H)$ be a metric system, with the metric $\mathbf{d}$ defined over the output set, and a constant $\delta\geq 0$.
	Let $S_I=(X_I,X_{I0},U,\rTo_I)$ be its $\delta$-approximate initial-state estimator.
	Then for any $(x_0,q_0)\in X_{I0}$ and any finite run
	\[
	(x_0,q_0)\rTo^{u_1}_I  (x_1,q_1)\rTo^{u_2}_I  \cdots  \rTo^{u_n}_I(x_n,q_n)
	\]
	we have
	\begin{enumerate}[(i)]
		\item
		$x_n\rTo^{u_n}x_{n-1}\rTo^{u_{n-1}}\cdots    \rTo^{u_1}x_{0}$; and
		\item
		$q_n\!=\!\left\{\!x_0'\!\in\! X\!:\!
		\exists  x_0'\rTo^{u_n'}  x_1'\rTo^{u_{n-1}'} \cdots  \rTo^{u_1'} x_n'\text{ s.t. } 
		\max_{i\in\{0,1,\dots,n\}}\mathbf{d}( H(x_i),H(x_{n-i}'))\leq \delta
		\right\}$.
	\end{enumerate}
\end{proposition}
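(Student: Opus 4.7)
The plan is to prove both (i) and (ii) simultaneously by induction on the length $n$ of the run in $S_I$. Part (i) will be essentially a bookkeeping observation coming directly from condition 1 of the transition relation, while part (ii) is the substantive claim and will drive the argument.

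For the base case $n=0$, part (i) is vacuous. For part (ii), the right-hand side of the claimed equality reduces to $\{x_0'\in X : \mathbf{d}(H(x_0),H(x_0'))\leq\delta\}$ (the ``run'' has length zero, so the only closeness condition is at index $i=0$), and this is exactly the definition of $X_{I0}$ applied to the initial state $(x_0,q_0)$.

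For the inductive step, suppose the claim holds for runs of length $n$ and consider a run of length $n+1$ extending it by $(x_n,q_n)\rTo^{u_{n+1}}_I (x_{n+1},q_{n+1})$. Part (i) follows immediately: condition 1 of the transition relation gives $x_{n+1}\rTo^{u_{n+1}}x_n$, which prepends to the backward $S$-run guaranteed by the induction hypothesis. For part (ii), I would prove the two set inclusions separately using condition 2 of the transition relation, namely $q_{n+1}=\bigcup_{\hat u\in U}\mathbf{Pre}_{\hat u}(q_n) \cap \{x''\in X : \mathbf{d}(H(x_{n+1}),H(x''))\leq\delta\}$. For the forward inclusion, any $x_0''\in q_{n+1}$ has some $\hat u$ and some $y\in q_n$ with $x_0''\rTo^{\hat u} y$ and $\mathbf{d}(H(x_{n+1}),H(x_0''))\leq\delta$; the induction hypothesis supplies a length-$n$ run from $y$ satisfying the required closeness for the reversed suffix $H(x_n),\ldots,H(x_0)$, and prepending the transition $x_0''\rTo^{\hat u}y$ produces the desired length-$(n+1)$ run, with the extra $\delta$-closeness condition at index $i=n+1$ handled precisely by $\mathbf{d}(H(x_{n+1}),H(x_0''))\leq\delta$. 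The reverse inclusion is the symmetric argument: given a length-$(n+1)$ run from $x_0''$ with the required closeness, its length-$n$ tail witnesses $x_1''\in q_n$ by the induction hypothesis, and then the first transition together with the closeness at index $n+1$ places $x_0''$ in $q_{n+1}$.

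The main subtlety I anticipate is keeping the reversed indexing straight: the run inside $S_I$ traces states \emph{backward} through $S$, so the ``reference trajectory'' $x_0,x_1,\ldots,x_n$ along the $S_I$-run must be paired with the candidate $S$-runs $x_0'\rTo^{u_n'}x_1'\rTo\cdots\rTo^{u_1'}x_n'$ under the matching $H(x_i)\leftrightarrow H(x_{n-i}')$. In the inductive step this means carefully reindexing both the inputs and the outputs when concatenating the new transition onto the induction hypothesis; once this reindexing is set up cleanly the two inclusions are routine. No additional results from earlier in the paper are needed beyond the definitions of $S_I$ and of the pre-image operator $\mathbf{Pre}_{\hat u}$.
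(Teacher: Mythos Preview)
Your proposal is correct and follows essentially the same approach as the paper: induction on the run length $n$, with the base case coming directly from the definition of $X_{I0}$ and the inductive step unfolding the transition rule $q_{n+1}=\bigcup_{\hat u}\mathbf{Pre}_{\hat u}(q_n)\cap\{x'':\mathbf{d}(H(x_{n+1}),H(x''))\leq\delta\}$ and combining it with the induction hypothesis. The only cosmetic difference is that the paper writes the inductive step as a chain of set equalities rather than two separate inclusions, but the content is identical, and your emphasis on the reversed indexing is exactly the bookkeeping the paper carries out when reindexing $x_0''=x$ and $x_i''=x_{i-1}'$ in its final equation.
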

\begin{proof}
	See the Appendix.
\end{proof}

The next theorem provides one of the main results of this section on the verification of $\delta$-approximate initial-state opacity of finite metric systems.

\begin{theorem}\label{thm:int}
	Let $S=(X,X_0,X_S,U,\rTo_I,Y,H)$ be a finite metric system, with the metric $\mathbf{d}$ defined over the output set, and a constant $\delta\geq 0$.
	Let $S_I=(X_I,X_{I0},U,\rTo_I)$ be its $\delta$-approximate initial-state estimator.
	Then, $S$ is $\delta$-approximate initial-state opaque if and only if
	\begin{equation}\label{eq:thmop}
		\forall(x,q)\in X_I: x\in X_0\cap X_S \Rightarrow q\cap X_0 \not\subseteq X_S.
	\end{equation}
\end{theorem}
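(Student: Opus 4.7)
The plan is to reduce both directions of the biconditional directly to Proposition~\ref{prop:obs}, which already asserts that the set-component $q$ of a reachable $S_I$-state records exactly the initial states of length-$n$ forward $S$-runs whose outputs are $\delta$-close, step by step, to the reference trajectory encoded in that $S_I$-run. Granted this characterization, the theorem becomes essentially a rephrasing: failure of $\delta$-approximate initial-state opacity is equivalent to the existence of some reachable $(x,q) \in X_I$ with $x \in X_0 \cap X_S$ and $q \cap X_0 \subseteq X_S$.

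For the sufficient direction ($\Leftarrow$), I would start from an arbitrary forward $S$-run $x_0 \rTo^{u_1} x_1 \rTo^{u_2} \cdots \rTo^{u_n} x_n$ with $x_0 \in X_0 \cap X_S$. Taking the $S_I$-initial pair to be $(x_n,q_0)$ with $q_0 = \{x' \in X : \mathbf{d}(H(x_n),H(x')) \leq \delta\}$, which lies in $X_{I0}$ by construction, I would iteratively apply the transition rule of $S_I$ along the reversed input sequence $u_n, u_{n-1}, \ldots, u_1$ to obtain a reachable $S_I$-run ending at $(x_0,q)$ for some $q$. Proposition~\ref{prop:obs}(ii), after unfolding the $n-i$ re-indexing, identifies $q$ with the set of initial endpoints of length-$n$ forward $S$-runs whose outputs are $\delta$-close to $H(x_0), H(x_1), \ldots, H(x_n)$ in the natural forward order. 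The hypothesis~\eqref{eq:thmop} applied to $(x_0,q)$ then produces some $x_0' \in q \cap X_0 \setminus X_S$, and the corresponding $\delta$-matching $S$-run from $x_0'$ is exactly the non-secret witness required by Definition~\ref{def:opa-app}.

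For the necessary direction ($\Rightarrow$), I would pick any reachable $(x,q) \in X_I$ with $x \in X_0 \cap X_S$, together with an $S_I$-run into it. Proposition~\ref{prop:obs}(i) turns that backward $S_I$-run into a forward $S$-run starting at $x$, which is a legitimate secret-initial-state run since $x \in X_0 \cap X_S$. Invoking $\delta$-approximate initial-state opacity yields a $\delta$-matching forward $S$-run starting at some $x_0' \in X_0 \setminus X_S$, and Proposition~\ref{prop:obs}(ii) places $x_0'$ in $q$. Consequently $x_0' \in q \cap X_0$ with $x_0' \notin X_S$, so $q \cap X_0 \not\subseteq X_S$.

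The one substantive obstacle is purely notational: Proposition~\ref{prop:obs} indexes the forward $S$-run in the reverse of the $S_I$-run, while the matching condition in part~(ii) further flips the index alignment via $\mathbf{d}(H(x_i),H(x_{n-i}'))$. I will need to carry out a careful re-indexing in both directions so that it is manifest that the $S$-run and its $\delta$-matching companion are parallel in forward time, and that membership in $q$ genuinely corresponds to a length-$n$ forward run emanating from a non-secret initial state. Once this bookkeeping is written out transparently, both implications follow immediately from Proposition~\ref{prop:obs}, with no further delicate argument needed.
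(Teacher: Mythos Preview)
Your proposal is correct and follows essentially the same route as the paper: both directions reduce immediately to Proposition~\ref{prop:obs}, with the only work being the index-reversal bookkeeping you already flagged. The paper phrases ($\Rightarrow$) by contraposition and ($\Leftarrow$) by contradiction, whereas you argue each direction directly, but the underlying constructions (reversing the $S$-run to obtain an $S_I$-run starting at $(x_n,q_0)\in X_{I0}$ and reading off the witness from $q$) are identical.
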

\begin{proof}
	See the Appendix.
\end{proof}

We illustrate how to verify $\delta$-approximate initial-state opacity by the following example.
\begin{example}
	\begin{figure}
		\centering
		\subfigure[$S_{I}$ when $\delta=0.1$]{\label{fig:2}
			\includegraphics[width=0.42\textwidth]{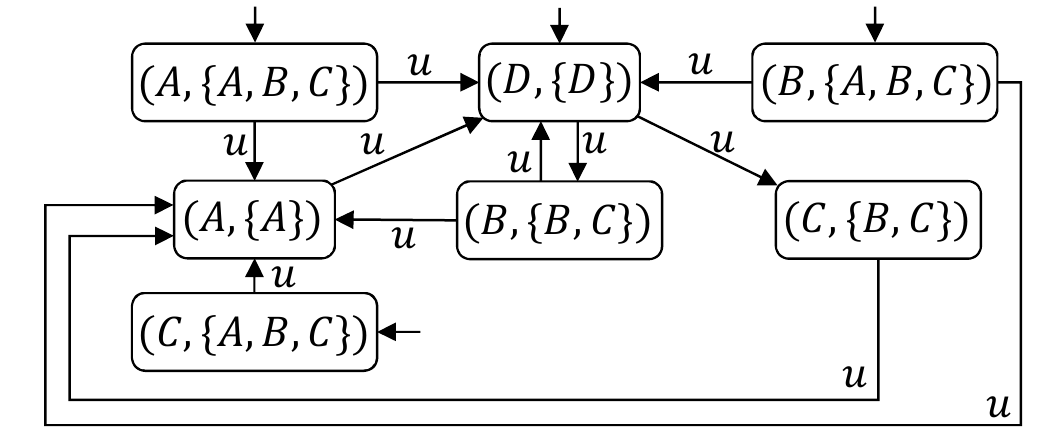}}
		\subfigure[$S_{I}$ when $\delta=0.15$]{\label{fig:3}
			\includegraphics[width=0.42\textwidth]{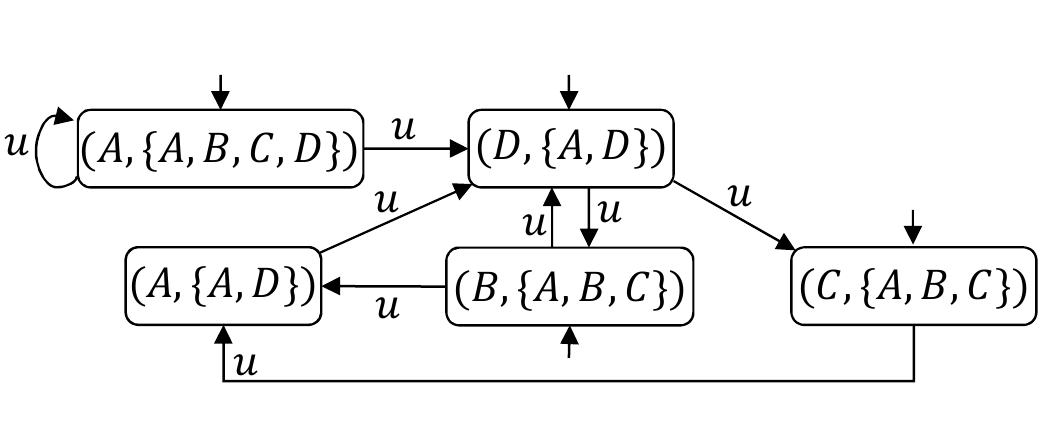}}
		\caption{Examples of $\delta$-approximate initial-state estimators.} \label{exmaple2}
	\end{figure}
	Let us still consider system $S$ shown in Figure~\ref{fig:1}.
	The $\delta$-approximate initial-state estimator $S_I$ when $\delta=0.1$ is shown in Figure~\ref{fig:2}.
	For example, for initial state $(D,\{D\})$, we have $(D,\{D\})\rTo_{I}^u (B,\{B,C\})$ since
	$B\rTo^u D$ and
	$\{B,C\}=  \mathbf{Pre}_{u}(\{D\}) \cap \{x\in X:   \mathbf{d}(H(x),0.1)\leq 0.1 \}
	= \{B,C\}\cap\{A,B,C\}$.
	However, for state $(B,\{B,C\})\in X_I$, we have $B\in X_0\cap X_S$ and $\{B,C\}\cap X_0=\{B\}\subseteq X_S$.
	Therefore, by Theorem~\ref{thm:int}, we know that the system is not $0.1$-approximate initial-state opaque.
	Similarly, we can also construct $S_I$ for the case of $\delta=0.15$, which is shown in Figure~\ref{fig:3}.
	Since for state $(B,\{A,B,C\})\in X_I$,
	which is the only state whose first component is in $X_0\cap X_S$,
	we have $\{A,B,C\}\cap X_0=\{A,B\}\not\subseteq X_S$.
	By Theorem~\ref{thm:int}, we know that the system is $0.15$-approximate initial-state opaque.
\end{example}

\subsection{Verification of Approximate Current-State Opacity}
In order to verify $\delta$-approximate current-state opacity,
we also need to construct a new system called the \emph{$\delta$-approximate current-state estimator} defined  as follows.
\begin{definition}
	Let $S=(X,X_0,X_S,U,\rTo,Y,H)$ be a metric system, with the metric $\mathbf{d}$ defined over the output set, and a constant $\delta\geq 0$.
	The $\delta$-approximate current-state estimator is a system (without outputs)
	\[
	S_C=(X_C,X_{C0},U,\rTo_C),
	\]
	where
	\begin{itemize}
		\item
		$X_C\subseteq X\times 2^X$ is the set of states;
		\item
		$X_{C0}=\{(x,q)\!\in\! X_0\!\times\! 2^{X_0}:  x'\!\in\! q\Leftrightarrow \mathbf{d}(H(x),H(x'))\!\leq\! \delta    \}$ is the set of initial states;
		\item
		$U$ is the set of inputs, which is the same as the one in $S$;
		\item
		$\rTo_C\subseteq X_C\times U\times X_C$ is the transition function defined by:
		for any $(x,q),(x',q')\in X\times 2^X$ and $u\in U$,   $(x,q)\rTo^{u}_C(x',q')$ if
		\begin{enumerate}
			\item
			$(x,u,x')\in \rTo$; and
			\item
			$q'\!=\!\cup_{\hat{u}\in U} \mathbf{Post}_{\hat{u}}(x) \cap \{x''\!\in\! X:\!   \mathbf{d}(H(x'),H(x''))\!\leq\! \delta  \}$.
		\end{enumerate}
	\end{itemize}
	For the sake of simplicity, we only consider the part of $S_C$ that is reachable from initial states.
\end{definition}
The construction of $S_C$ is similar to $S_I$.
However, we need to track all forward runs from each pair of initial-state and its information-consistent states.
Still, we need the first component as the ``reference state" to determine what are ``$\delta$-close" states.
We use the following result to state the main properties of $S_C$.
\begin{proposition}\label{prop:obs-c}
	Let $S=(X,X_0,X_S,U,\rTo,Y,H)$ be a metric system, with the metric $\mathbf{d}$ defined over the output set, and a constant $\delta\geq 0$.
	Let $S_C=(X_C,X_{C0},U,\rTo_C)$ be its $\delta$-approximate current-state estimator.
	Then for any $(x_0,q_0)\in X_{C0}$ and any finite run
	\[
	(x_0,q_0)\rTo^{u_1}_C  (x_1,q_1)\rTo^{u_2}_C  \cdots  \rTo^{u_n}_C(x_n,q_n),
	\]
	we have
	\begin{enumerate}[(i)]
		\item
		$x_0\rTo^{u_1}x_{1}\rTo^{u_{2}}\cdots    \rTo^{u_n}x_{n}$; and
		\item
		$q_n=\{x_n'\in X: \exists x_0'\in X_{0},\exists  x_0'\rTo^{u_1'}  x_1'\rTo^{u_{2}'}  \cdots  \rTo^{u_n'} x_n'
		\text{ s.t. } \max_{i\in\{0,1,\dots,n\}}\mathbf{d}( H(x_i),H(x_{i}'))\leq \delta  \}$.
	\end{enumerate}
\end{proposition}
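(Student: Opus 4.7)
The plan is to prove Proposition~\ref{prop:obs-c} by induction on the length $n$ of the run in $S_C$, mirroring the argument for Proposition~\ref{prop:obs} but working forward in time instead of backward.

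For the base case $n=0$, claim (i) is vacuous, and claim (ii) follows directly from the definition of $X_{C0}$: the set $q_0$ is by construction exactly the set of initial states whose outputs are $\delta$-close to $H(x_0)$, which matches the claimed characterization when the run has length zero (so $x_0' = x_n'$ and the only constraint is $\mathbf{d}(H(x_0),H(x_0'))\le\delta$ with $x_0' \in X_0$).

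For the inductive step, assume the statement holds for runs of length $n-1$, and consider a run of length $n$. Part (i) is immediate: the first condition in the definition of $\rTo_C$ gives $x_i \rTo^{u_{i+1}} x_{i+1}$ at each step, so concatenation yields the desired run in $S$. For part (ii), I would unpack the recursive definition
\[
q_n = \bigl(\bigcup_{\hat u\in U}\mathbf{Post}_{\hat u}(q_{n-1})\bigr)\cap\bigl\{x''\in X:\mathbf{d}(H(x_n),H(x''))\le\delta\bigr\},
\]
(reading the defining equation for $q'$ with $q_{n-1}$ in place of the first-component singleton, as is needed for the result to hold) and then prove the two inclusions separately. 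For the forward inclusion, any $x_n'\in q_n$ arises as a $\hat u$-successor of some $x_{n-1}'\in q_{n-1}$ satisfying $\mathbf{d}(H(x_n),H(x_n'))\le\delta$; the inductive hypothesis supplies a run $x_0'\rTo^{u_1'}\cdots\rTo^{u_{n-1}'}x_{n-1}'$ from an initial state with $\max_{i\le n-1}\mathbf{d}(H(x_i),H(x_i'))\le\delta$, and appending the transition $x_{n-1}'\rTo^{\hat u}x_n'$ (so $u_n'=\hat u$) preserves the $\delta$-closeness at the new index. For the reverse inclusion, any $x_n'$ witnessed by a $\delta$-close run from $X_0$ yields, by truncation, an $x_{n-1}'\in q_{n-1}$ via the inductive hypothesis, and the last transition plus the bound at index $n$ place $x_n'$ in the right-hand side of the displayed equation.

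The only substantive obstacle is the bookkeeping around the ``reference trajectory'' first coordinate: one must verify that the output-closeness constraint inherited from $q_{n-1}$ combined with the fresh $\delta$-ball around $H(x_n)$ exactly reconstructs the uniform bound $\max_{i\in\{0,\dots,n\}}\mathbf{d}(H(x_i),H(x_i'))\le\delta$ required in the statement. This is why the inductive invariant must carry the quantifier $\max_{i\le n-1}$ rather than a pointwise bound, and why the construction of $S_C$ keeps $x$ alongside $q$: the first component anchors the $\delta$-test at each step, so the conjunction of per-step $\delta$-tests collected along the run gives precisely the required max-norm bound. Everything else is routine unwinding of definitions.
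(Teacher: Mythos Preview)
Your proposal is correct and follows essentially the same approach as the paper: induction on the length of the run, with the base case read off from the definition of $X_{C0}$ and the inductive step obtained by unfolding the recursive formula for $q_{n+1}$ in terms of $q_n$ and combining it with the induction hypothesis. Your parenthetical remark about reading $\mathbf{Post}_{\hat u}(q_{n-1})$ rather than $\mathbf{Post}_{\hat u}(x)$ is exactly right---the paper's definition contains a typo there, and its own proof silently uses the corrected version (the second line of the displayed computation in the appendix already has $x'\in q_n$).
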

\begin{proof}
	See the Appendix.
\end{proof}

Now, we show the second main result of this section by providing a verification scheme for $\delta$-approximate current-state opacity of finite metric systems.
\begin{theorem}\label{thm:cur-veri}
	Let $S=(X,X_0,X_S,U,\rTo,Y,H)$ be a 
	metric system, with the metric $\mathbf{d}$ defined over the output set, and a constant $\delta\geq 0$.
	Let $S_C=(X_C,X_{C0},U,\rTo_C)$ be its $\delta$-approximate current-state estimator.
	Then, $S$ is $\delta$-approximate current-state opaque if and only if
	\begin{equation}\label{eq:thmop-c}
		\forall(x,q)\in X_C: q \not\subseteq X_S.
	\end{equation}
\end{theorem}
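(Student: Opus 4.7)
The plan is to mirror the proof strategy of Theorem~\ref{thm:int}, leveraging Proposition~\ref{prop:obs-c} as the principal tool. Just as in the initial-state case, Proposition~\ref{prop:obs-c}(ii) tells us that the second component $q_n$ of a reachable state in $S_C$ is exactly the set of "current states" the intruder cannot rule out, given a run whose output is $\delta$-close to the run being tracked. The theorem then becomes a translation between the semantic statement of $\delta$-approximate current-state opacity and the syntactic property of $S_C$.

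For the necessity direction ($\Rightarrow$), assume $S$ is $\delta$-approximate current-state opaque and let $(x_n,q_n)\in X_C$ be reached via some run $(x_0,q_0)\rTo^{u_1}_C\cdots\rTo^{u_n}_C(x_n,q_n)$ with $(x_0,q_0)\in X_{C0}$. By Proposition~\ref{prop:obs-c}(i) there is a matching run $x_0\rTo^{u_1}x_1\rTo^{u_2}\cdots\rTo^{u_n}x_n$ in $S$. I would split into two cases. If $x_n\notin X_S$, then taking $x_i'=x_i$ witnesses membership of $x_n$ in the right-hand side of Proposition~\ref{prop:obs-c}(ii), so $x_n\in q_n\setminus X_S$ and $q_n\not\subseteq X_S$. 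If $x_n\in X_S$, then $\delta$-approximate current-state opacity provides $x_0'\in X_0$ and a finite run $x_0'\rTo^{u_1'}x_1'\rTo^{u_2'}\cdots\rTo^{u_n'}x_n'$ with $x_n'\in X\setminus X_S$ and $\max_i\mathbf{d}(H(x_i),H(x_i'))\leq\delta$; Proposition~\ref{prop:obs-c}(ii) then gives $x_n'\in q_n$, so again $q_n\not\subseteq X_S$.

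For the sufficiency direction ($\Leftarrow$), assume condition \eqref{eq:thmop-c} holds and let $x_0\in X_0$ together with any finite run $x_0\rTo^{u_1}x_1\rTo^{u_2}\cdots\rTo^{u_n}x_n$ in $S$ with $x_n\in X_S$ be given. I would lift this run to $S_C$ as follows: the initial state $(x_0,q_0)\in X_{C0}$ is uniquely determined by the distance condition in the definition of $X_{C0}$, and at each step the successor $(x_{i+1},q_{i+1})$ dictated by $x_i\rTo^{u_{i+1}}x_{i+1}$ is uniquely determined by the two conditions in the definition of $\rTo_C$. This produces a reachable $(x_n,q_n)\in X_C$. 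By hypothesis $q_n\not\subseteq X_S$, so pick any $x_n'\in q_n\setminus X_S$; Proposition~\ref{prop:obs-c}(ii) then furnishes an initial state $x_0'\in X_0$ and a finite run ending at $x_n'$ with outputs $\delta$-close to those of the original run, which is exactly the witness required by Definition~\ref{def:opa-app} for $\delta$-approximate current-state opacity.

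I do not expect a substantive obstacle. The one place requiring a little care is verifying that the lift to $S_C$ in the sufficiency direction is well-defined and reaches a state reachable from $X_{C0}$ (so that the hypothesis $q_n\not\subseteq X_S$ actually applies). This is immediate from the definition of $X_{C0}$ and $\rTo_C$ together with the given run in $S$, so the whole argument reduces to a careful bookkeeping of what Proposition~\ref{prop:obs-c} guarantees in each direction.
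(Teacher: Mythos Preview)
Your proposal is correct and follows essentially the same approach as the paper: both directions lift runs between $S$ and $S_C$ and invoke Proposition~\ref{prop:obs-c}(ii) to translate membership in $q_n$ into the existence of a $\delta$-close witness run. The only cosmetic difference is that the paper argues the necessity direction by contraposition (assuming some $q\subseteq X_S$ and producing a violating run) whereas you give a direct proof with an explicit case split on $x_n\in X_S$; the underlying reasoning is identical.
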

\begin{proof}
	See the Appendix.
\end{proof}

\subsection{Verification of Approximate Infinite-Step  Opacity}

Finally,   we can combine the $\delta$-approximate initial-state estimator $S_I$ and the $\delta$-approximate current-state estimator $S_C$ to verify $\delta$-approximate infinite-step opacity of finite metric systems.
The verification scheme is provided by the following theorem.

\begin{theorem}\label{thm:inf-veri}
	Let $S=(X,X_0,X_S,U,\rTo,Y,H)$ be a finite metric system, with the metric $\mathbf{d}$ defined over the output set, and a constant $\delta\geq 0$.
	Let $S_I=(X_I,X_{I0},U,\rTo_I)$ and $S_C=(X_C,X_{C0},U,\rTo_C)$ be its $\delta$-approximate initial-state estimator and $\delta$-approximate current-state estimator, respectively.
	Then, $S$ is $\delta$-approximate infinite-step opaque if and only if
	\begin{equation}\label{eq:thmop-if}
		\forall(x,q)\in X_I,(x',q')\in X_C:  x=x'\in X_S \Rightarrow q\cap q' \not\subseteq X_S.
	\end{equation}
\end{theorem}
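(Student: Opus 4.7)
The plan is to prove the theorem by combining the two estimators: $S_C$ will capture the ``past'' of any run up to the secret instant, and $S_I$ will capture the ``future'' from that instant onward, with the two halves meeting at the state where the secret is allegedly revealed. The central observation is that for any run $z_0 \rTo^{u_1} z_1 \rTo^{u_2} \cdots \rTo^{u_n} z_n$ in $S$ with $z_0 \in X_0$ and $z_k \in X_S$, the set of ``alternative'' states $z_k' \notin X_S$ at which a $\delta$-close non-secret run can pass is precisely $q^C_k \cap q^I_{n-k}$, where $(z_k, q^C_k)$ is the $S_C$-state reached by tracking the first half and $(z_k, q^I_{n-k})$ is the $S_I$-state reached by tracking the second half (in reverse, as per the construction of $S_I$). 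So infinite-step opacity is equivalent to non-emptiness of this intersection outside $X_S$ for every such pair, which is exactly condition \eqref{eq:thmop-if}.

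For the ($\Leftarrow$) direction, I would fix an arbitrary run $z_0 \rTo^{u_1} \cdots \rTo^{u_n} z_n$ with $z_0 \in X_0$ and $z_k \in X_S$. First, applying Proposition~\ref{prop:obs-c} to the prefix $z_0 \rTo^{u_1} \cdots \rTo^{u_k} z_k$ yields a reachable state $(z_k, q^C_k) \in X_C$ in which $q^C_k$ consists of all $z_k'$ reachable by some run from $X_0$ whose outputs are $\delta$-close to those of the prefix. Second, applying Proposition~\ref{prop:obs} to the reversal of the suffix $z_k \rTo^{u_{k+1}} \cdots \rTo^{u_n} z_n$, starting from the $S_I$-initial state $(z_n, \{x : \mathbf{d}(H(z_n), H(x)) \leq \delta\}) \in X_{I0}$, yields a reachable state $(z_k, q^I_{n-k}) \in X_I$ in which (after re-indexing in Proposition~\ref{prop:obs}(ii)) $q^I_{n-k}$ consists of all $z_k'$ from which a forward run $z_k' \rTo \cdots \rTo z_n'$ exists with outputs $\delta$-close to the suffix. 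Condition \eqref{eq:thmop-if} then provides some $z_k' \in q^C_k \cap q^I_{n-k} \setminus X_S$; concatenating the two half-runs through $z_k'$ gives the alternative non-secret run required by Definition~\ref{def:opa-app}.

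For the ($\Rightarrow$) direction, I would run the same correspondence in reverse. Given reachable states $(x, q) \in X_I$ and $(x', q') \in X_C$ with $x = x' \in X_S$, Propositions~\ref{prop:obs} and~\ref{prop:obs-c} supply concrete $S$-runs: a prefix $z_0 \rTo \cdots \rTo z_k = x'$ with $z_0 \in X_0$ realizing $(x', q')$, and a suffix $z_k = x \rTo z_{k+1} \rTo \cdots \rTo z_n$ realizing $(x, q)$ (obtained by reversing the $S$-trajectory extracted via Proposition~\ref{prop:obs}(i)). Concatenation produces a valid $S$-run from $X_0$ passing through the secret state $z_k$ at instant $k$. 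Applying $\delta$-approximate infinite-step opacity furnishes a matching run with $z_k' \notin X_S$ at instant $k$; by the defining properties of $q$ and $q'$ given in the two propositions, this $z_k'$ lies in $q \cap q'$, witnessing $q \cap q' \not\subseteq X_S$.

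The main delicate point will be bookkeeping the backward indexing of $S_I$ correctly when identifying $q^I_{n-k}$ with the ``future alternative states''; the re-indexing $j \mapsto n-j$ in Proposition~\ref{prop:obs}(ii) needs to be carried out carefully so that the output-distance conditions of the two halves agree at $i = k$ and combine to give a uniform $\delta$-closeness bound over $i = 0, \ldots, n$. The boundary cases $k=0$ (where the $S_I$ branch has zero length and reduces to the $X_{I0}$ definition) and $k=n$ (where the $S_C$ branch plays the dominant role and the argument collapses to Theorem~\ref{thm:cur-veri}) should be handled explicitly to confirm the argument is uniform.
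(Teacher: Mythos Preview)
Your proposal is correct and follows essentially the same strategy as the paper's proof: split the run at the secret instant $k$, use Proposition~\ref{prop:obs-c} on the prefix to obtain $(z_k,q^C_k)\in X_C$ and Proposition~\ref{prop:obs} on the (reversed) suffix to obtain $(z_k,q^I_{n-k})\in X_I$, and identify $q^C_k\cap q^I_{n-k}$ with the set of admissible alternative states at instant $k$. The only cosmetic difference is that the paper argues the $(\Rightarrow)$ direction by contraposition whereas you argue it directly, and you are somewhat more explicit about the re-indexing and the boundary cases $k=0$, $k=n$.
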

\begin{proof}
	See the Appendix.
\end{proof}

\begin{remark}
	We conclude this section by discussing the complexity of verifying approximate opacity.
	Let $S=(X,X_0,X_S,U,\rTo,Y,H)$ be a finite metric system.
	The complexity of the verification algorithms for both approximate initial-state and current-state opacity is
	$O(\vert U\vert\times \vert X\vert\times2^{\vert X\vert})$, which is the size of $S_I$ or $S_C$.
	For approximate infinite-step opacity, we need to construct both $S_I$ and $S_C$, and compare each pair of states in  $S_I$ and $S_C$.
	Therefore, the complexity for verifying approximate infinite-step opacity using Theorem~\ref{thm:inf-veri} is
	$O(\vert U\vert\times \vert X\vert^2 \times 4^{\vert X\vert})$.
	It is worth noting that the complexity of verifying exact opacity as in Definition~\ref{def:opa}  is already known to be PSPACE-complete \cite{cassez2012synthesis}. Therefore, we can conclude that the complexity of verifying approximate opacity as in Definition~\ref{def:opa-app}  is also PSPACE-complete.
\end{remark}

\section{Approximate Simulation Relations for Opacity}\label{sec:rel}

In this section, we introduce new notions of approximate opacity preserving simulation relations,
inspired by the one in \cite{girard07}, which is crucial when analyzing opacity or synthesizing controllers enforcing opacity for deterministic systems.
The newly proposed simulation relations will also provide the basis for abstraction-based verification of approximate opacity.

\subsection{Approximate Initial-State Opacity Preserving Simulation Relation}

First, we introduce a new notion of approximate initial-state opacity preserving simulation relation.
\begin{definition}(Approximate Initial-State Opacity Preserving Simulation Relation)\label{InitSOP}
	Consider two metric systems $S_{a}=(X_{a},X_{a0},X_{aS}, U_{a},\rTo_{a},Y_a,H_{a})$ and $S_{b}=(X_{b},X_{b0},X_{bS},U_{b},\rTo_{b},Y_b,H_{b})$ with the same output sets $Y_a=Y_b$ and metric $\mathbf{d}$.
	For $\varepsilon\in\mathbb{R}_0^{+}$,
	a relation \mbox{$R\subseteq X_{a}\times X_{b}$} is called an $\varepsilon$-approximate initial-state opacity preserving simulation relation ($\varepsilon$-InitSOP simulation relation) from $S_{a}$ to $S_{b}$
	if
	\begin{enumerate}
		\item
		\begin{enumerate}
			\item
			$\forall x_{a0}\!\in\! X_{a0}\cap X_{aS},\exists x_{b0}\!\in\! X_{b0}\cap X_{bS}: (x_{a0},x_{b0})\in R$;
			\item
			$\forall x_{b0}\in X_{b0}\setminus X_{bS},\exists x_{a0}\in X_{a0}\setminus X_{aS}:(x_{a0},x_{b0})\in R$;
		\end{enumerate}
		\item
		$\forall (x_a,x_b)\in R:\mathbf{d}(H_{a}(x_{a}),H_{b}(x_{b}))\leq\varepsilon$;	
		\item
		For any $(x_a,x_b)\in R$, we have
		\begin{enumerate}
			\item
			$\forall x_a\rTo_{a}^{u_a} x_a',\exists x_b\rTo_{b}^{u_b} x_b':(x_a',x_b')\in R$;
			\item
			$\forall x_b\rTo_{b}^{u_b} x_b',\exists x_a\rTo_{a}^{u_a} x_a':(x_a',x_b')\in R$.
		\end{enumerate}
	\end{enumerate}
	We say that $S_a$ is $\varepsilon$-InitSOP simulated by $S_b$, denoted by $S_a\preceq_I^\varepsilon S_b$,
	if there exists an $\varepsilon$-InitSOP  simulation relation $R$ from $S_a$ to $S_b$.
\end{definition}
Note that although the above relation is similar to the approximate bisimulation relation proposed in \cite{girard07}, it is still a one sided relation here because condition (1) is not symmetric. We refer the interested readers to \cite{Zhang2018OpacitySimilation} to see why one needs strong condition (3) in Definition \ref{InitSOP} to show preservation of initial-state opacity in one direction when $\varepsilon=0$.

The following main theorem provides a sufficient condition for $\delta$-approximate initial-state opacity based on related systems as in Definition \ref{InitSOP}.

\begin{theorem}\label{thm:InitSOP}
	Let  $S_{a}=(X_{a},X_{a0},X_{aS},U_{a},\rTo_{a},Y_a,H_{a})$ and $S_{b}=(X_{b},X_{b0},X_{bS},U_{b},\rTo_{b},Y_b,H_{b})$ be two metric systems with the same output sets $Y_a=Y_b$ and metric $\mathbf{d}$ and
	let $\varepsilon,\delta\in\mathbb{R}_0^{+}$.
	If  $S_a\preceq_I^\varepsilon S_b$ and $\varepsilon\leq \frac{\delta}{2}$,
	then the following implication hold:
	\begin{align}
		&S_b\text{ is ($\delta-2\varepsilon$)-approximate initial-state opaque} \nonumber\\
		\Rightarrow &S_a \text{ is $\delta$-approximate initial-state opaque}.\nonumber
	\end{align}
\end{theorem}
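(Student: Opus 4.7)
The plan is to establish the implication by a lift-and-descend argument: given a secret-initial finite run in $S_a$, first lift it through the relation $R$ to a run in $S_b$ starting from a secret state, then invoke the $(\delta-2\varepsilon)$-approximate initial-state opacity of $S_b$ to obtain a non-secret witness run in $S_b$, and finally descend that witness back to a non-secret run in $S_a$. The output bound is then closed by a single application of the triangle inequality.

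First I would fix $x_{a0}\in X_{a0}\cap X_{aS}$ and an arbitrary finite run $x_{a0}\rTo^{u_1}_a x_{a1}\rTo_a\cdots\rTo^{u_n}_a x_{an}$ in $S_a$. Condition (1)(a) of Definition~\ref{InitSOP} provides a related secret initial state $x_{b0}\in X_{b0}\cap X_{bS}$, and iterating condition (3)(a) along the run produces a matching run $x_{b0}\rTo_b\cdots\rTo_b x_{bn}$ in $S_b$ with $(x_{ai},x_{bi})\in R$ for each $i$. Applying the hypothesized $(\delta-2\varepsilon)$-approximate initial-state opacity of $S_b$ to this lifted run yields a non-secret initial state $x_{b0}'\in X_{b0}\setminus X_{bS}$ and a finite run $x_{b0}'\rTo_b\cdots\rTo_b x_{bn}'$ with $\max_i \mathbf{d}(H_b(x_{bi}),H_b(x_{bi}'))\leq\delta-2\varepsilon$.

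The descent step then uses condition (1)(b) to pick $x_{a0}'\in X_{a0}\setminus X_{aS}$ with $(x_{a0}',x_{b0}')\in R$, followed by iterated application of condition (3)(b) along the witness to produce a run $x_{a0}'\rTo_a\cdots\rTo_a x_{an}'$ in $S_a$ with $(x_{ai}',x_{bi}')\in R$ for every $i$. To close the argument, condition (2) gives $\mathbf{d}(H_a(x_{ai}),H_b(x_{bi}))\leq\varepsilon$ and $\mathbf{d}(H_a(x_{ai}'),H_b(x_{bi}'))\leq\varepsilon$, so the triangle inequality combined with the $(\delta-2\varepsilon)$-bound from $S_b$ yields $\mathbf{d}(H_a(x_{ai}),H_a(x_{ai}'))\leq\delta$ for all $i$, as required.

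There is no deep obstacle here; the key structural point is that the lift must land in $X_{b0}\cap X_{bS}$ while the descent must land in $X_{a0}\setminus X_{aS}$, which is precisely why Definition~\ref{InitSOP} imposes the asymmetric pair (1)(a)--(1)(b), and analogously (3)(a)--(3)(b) at the transition level. The hypothesis $\varepsilon\leq\delta/2$ is only needed to keep the quantity $\delta-2\varepsilon$ nonnegative so that the assumed opacity of $S_b$ is meaningful. This asymmetry is exactly what distinguishes the new relation from the classical approximate simulation of \cite{girard07}, whose symmetric form would not allow the non-secret witness in $S_b$ to be pulled back into $X_{a0}\setminus X_{aS}$.
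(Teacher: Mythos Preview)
Your proposal is correct and follows essentially the same approach as the paper's proof: lift the secret-initial run from $S_a$ to $S_b$ via conditions (1)(a) and (3)(a), apply $(\delta-2\varepsilon)$-approximate initial-state opacity of $S_b$, descend the non-secret witness back to $S_a$ via conditions (1)(b) and (3)(b), and close with the triangle inequality using condition (2). Your added commentary on the role of the asymmetric conditions (1)(a)--(1)(b) and the hypothesis $\varepsilon\leq\delta/2$ is accurate and goes slightly beyond what the paper's proof states explicitly.
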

\begin{proof}
	Consider an arbitrary secret initial state $x_0\in X_{0a}\cap X_{Sa}$ and a run $x_0\rTo^{u_1}_a x_1\rTo^{u_2}_a\cdots\rTo^{u_n}_ax_n$ in $S_a$.
	Since  $S_a\preceq_I^\varepsilon S_b$, by conditions~(1)-(a), (2)  and~(3)-(a) in Definition~\ref{InitSOP}, there exist a secret  initial state $x_0'\in X_{b0}\cap X_{bS}$ and a run
	$x_0'\rTo^{u_1'}_b x_1'\rTo^{u_2'}_b\cdots\rTo^{u_n'}_bx_n'$ in $S_b$ such that
	\begin{equation}\label{eq:ep1-I}
		\forall i\in \{0,1,\dots, n\}: \mathbf{d}(H_a(x_i),H_b(x_i')) \leq \varepsilon.
	\end{equation}
	Since $S_b$ is $(\delta-2\varepsilon)$-approximate initial-state opaque,
	there exist  a non-secret initial state $x_0''\in X_{b0}\setminus X_{bS}$ and
	a run $x_0''\rTo^{u_1''}_bx_1''\rTo^{u_2''}_b \cdots\rTo^{u_n''}_bx_n''$ such that
	\begin{equation}\label{eq:ep2-I}
		\max_{i\in\{0,1,\dots,n\}}\mathbf{d}( H_b(x_i'),H_b(x_i''))\leq \delta-2\varepsilon.
	\end{equation}
	Again, since  $S_a\preceq_I^\varepsilon S_b$,  by conditions~(1)-(b), (2) and~(3)-(b) in Definition~\ref{InitSOP}, there exist an initial state
	$x_0'''\in X_{a0}\setminus X_{aS}$ and a run $x_0'''\rTo^{u_1'''}_a x_1'''\rTo^{u_2'''}_a \cdots\rTo^{u_n'''}_a x_n'''$
	such that
	\begin{equation}\label{eq:ep3-I}
		\forall i\in \{0,1,\dots, n\}:\mathbf{d}( H_a(x_i'''),H_b(x_i'')) \leq \varepsilon.
	\end{equation}
	Combining equations~(\ref{eq:ep1-I}),~(\ref{eq:ep2-I}),~(\ref{eq:ep3-I}), and using the triangle inequality,
	we have
	\begin{equation}\label{eq:ep4-I}
		\max_{i\in\{0,1,\dots,n\}}:\mathbf{d}( H_a(x_i),H_a(x_i''')) \leq \delta.
	\end{equation}
	Since $x_0\in X_{a0}\cap X_{aS}$ and  $x_0\rTo^{u_1}_a x_1\rTo^{u_2}_a \cdots\rTo^{u_n}_a x_n$ are arbitrary, we conclude that
	$S_a$ is $\delta$-approximate initial-state opaque.
\end{proof}

The following corollary is a simple consequence of the result in Theorem \ref{thm:InitSOP} but for the lack of $\delta$-approximate initial-state opacity.
\begin{corollary}
	Let  $S_{a}=(X_{a},X_{a0},X_{aS},U_{a},\rTo_{a},Y_a,H_{a})$ and $S_{b}=(X_{b},X_{b0},X_{bS},U_{b},\rTo_{b},Y_b,H_{b})$ be two metric systems with the same output sets $Y_a=Y_b$ and metric $\mathbf{d}$ and
	let $\varepsilon,\delta\in\mathbb{R}_0^{+}$.
	If  $S_b\preceq_I^\varepsilon S_a$, then  the following implication hold:
	\begin{align}
		&S_b\text{ is not ($\delta+2\varepsilon$)-approximate initial-state opaque}  \nonumber\\
		\Rightarrow  &S_a  \text{ is not $\delta$-approximate initial-state opaque}\nonumber.
	\end{align}
\end{corollary}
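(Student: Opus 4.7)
The plan is to deduce the corollary as a direct contrapositive of Theorem~\ref{thm:InitSOP}, with the roles of $S_a$ and $S_b$ interchanged and the opacity parameter shifted. Since Theorem~\ref{thm:InitSOP} is stated with a hypothesis $S_a \preceq_I^\varepsilon S_b$, and here we are given $S_b \preceq_I^\varepsilon S_a$, the theorem applies verbatim after swapping the two system names throughout its statement.

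First I would invoke Theorem~\ref{thm:InitSOP} in the following renamed form: for any $\delta' \in \R_0^+$ with $\varepsilon \leq \delta'/2$, the hypothesis $S_b \preceq_I^\varepsilon S_a$ implies that if $S_a$ is $(\delta' - 2\varepsilon)$-approximate initial-state opaque, then $S_b$ is $\delta'$-approximate initial-state opaque. Next, I would instantiate $\delta' := \delta + 2\varepsilon$. With this choice, the side condition becomes $\varepsilon \leq (\delta + 2\varepsilon)/2$, which is equivalent to $\delta \geq 0$ and therefore automatically satisfied, and the implication reads: if $S_a$ is $\delta$-approximate initial-state opaque, then $S_b$ is $(\delta + 2\varepsilon)$-approximate initial-state opaque.

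Finally, taking the contrapositive of this implication yields exactly the statement of the corollary: if $S_b$ fails to be $(\delta + 2\varepsilon)$-approximate initial-state opaque, then $S_a$ fails to be $\delta$-approximate initial-state opaque. No new constructions are needed and there is no substantive obstacle; the only bookkeeping to double-check is that the renaming of $S_a$ and $S_b$ in Definition~\ref{InitSOP} and in Theorem~\ref{thm:InitSOP} is consistent (in particular that conditions (1)-(a), (1)-(b), (3)-(a), (3)-(b) remain symmetric under swapping, which they do after renaming, since the definition is already phrased as a one-sided relation) and that the arithmetic with $\delta$ and $\varepsilon$ is correct. Thus the proof is essentially a one-line application of the previous theorem with a relabelling of parameters.
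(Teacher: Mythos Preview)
Your proposal is correct and follows exactly the paper's own argument: apply Theorem~\ref{thm:InitSOP} with the roles of $S_a$ and $S_b$ swapped and with parameter $\delta' = \delta + 2\varepsilon$, then take the contrapositive. Your explicit verification of the side condition $\varepsilon \leq \delta'/2$ (which reduces to $\delta \geq 0$) is a nice detail that the paper's proof leaves implicit.
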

\begin{proof}
	Since $S_b\preceq_I^\varepsilon S_a$, by Theorem~\ref{thm:InitSOP}, we know that $S_a$ being $\delta$-approximate initial-state opaque implies that $S_b$ is $(\delta+2\varepsilon)$-approximate initial-state opaque.
	Hence, $S_b$ not being $(\delta+2\varepsilon)$-approximate initial-state opaque implies that $S_a$ is not $\delta$-approximate initial-state opaque.
\end{proof}

\begin{remark}
	It is worth remarking that $\delta$ and $\varepsilon$ are parameters specifying two different types of precision.
	Parameter $\delta$ is used to specify the measurement precision under which we can guarantee opacity for a single system,
	while parameter $\varepsilon$ is used to characterize the ``distance" between two systems in terms of being approximate opaque.
	The reader should not be confused by the different roles of these two parameters.
\end{remark}

We illustrate $\delta$-approximate initial-state opacity and its property by the following example.
\begin{example}
	Let us consider systems $S_a$ and $S_b$ shown in Figures~\ref{fig:4} and~\ref{fig:5}, respectively.
	We mark all secret states by red and the output map is specified by the value associated to each state.
	Let us consider the following relation
	$R=\{(A,J),(B,K),(C,K),(D,K),(E,N),(F,M),(G,M),\\(I,M)\}$.
	We claim that $R$ is an $\varepsilon$-approximate initial-state opacity preserving simulation relation from $S_{a}$ to $S_{b}$ when $\varepsilon=0.1$. We check item by item following Definition~\ref{def:opa}.
	First, for $E\in X_{a0}\cap X_{aS}$, we have $N\in  X_{b0}\cap X_{bS}$ such that $(E,N)\in R$.
	Similarly, for $J\in X_{b0}\setminus X_{bS}$, we have $A\in   X_{a0}\setminus X_{aS}$ such that $(A,J)\in R$.
	Therefore, condition~(a) in Definition~\ref{def:opa} holds.
	Also, for any $(x_a,x_b)\in R$, we have $\mathbf{d}(H_a(x_a),H_a(x_b))\leq 0.1$, e.g.,
	$\mathbf{d}(H_a(A),H_b(J))=0.1$ and $\mathbf{d}(H_a(C),H_b(K))=0$.
	Therefore, condition~(b) in Definition~\ref{def:opa} holds.
	Finally, we can also check that  condition~(c) in Definition~\ref{def:opa} holds.
	For example,
	for  $(D,K)\in R$ and $D\rTo_{a}^{u}B$, we can choose $K\rTo_{b}^{u}K$ such that $(B,K)\in R$;
	for  $(E,M)\in R$ and $N\rTo_{b}^{u}M$, we can choose $E\rTo_{b}^{u}F$ such that $(F,M)\in R$.
	Therefore, we know that $R$ is an $\varepsilon$-InitSOP simulation relation from $S_{a}$ to $S_{b}$, i.e.,
	$S_a\preceq_I^\varepsilon S_b$.
	
	Then, by applying the verification algorithm in Section~\ref{sec:ver},
	we can check that $S_b$ is $\delta$-approximate initial-state opaque for $\delta=0.1$.
	Therefore, according to Theorem~\ref{thm:InitSOP}, we conclude that $S_a$ is $0.3$-approximate initial-state opaque, where $0.3=\delta+2\varepsilon$, without applying the verification algorithm to $S_a$ directly.
	\begin{figure}
		\centering
		\subfigure[$S_{a}$]{\label{fig:4}
			\includegraphics[width=0.4\textwidth]{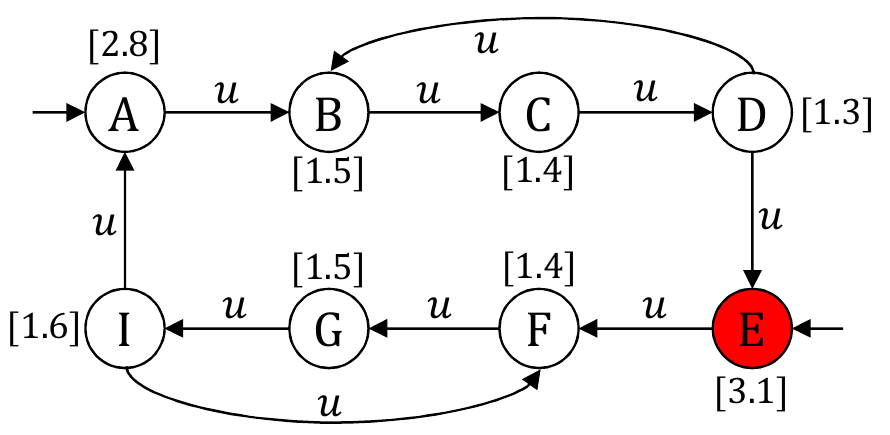}}
		\subfigure[$S_{b}$]{\label{fig:5}
			\includegraphics[width=0.4\textwidth]{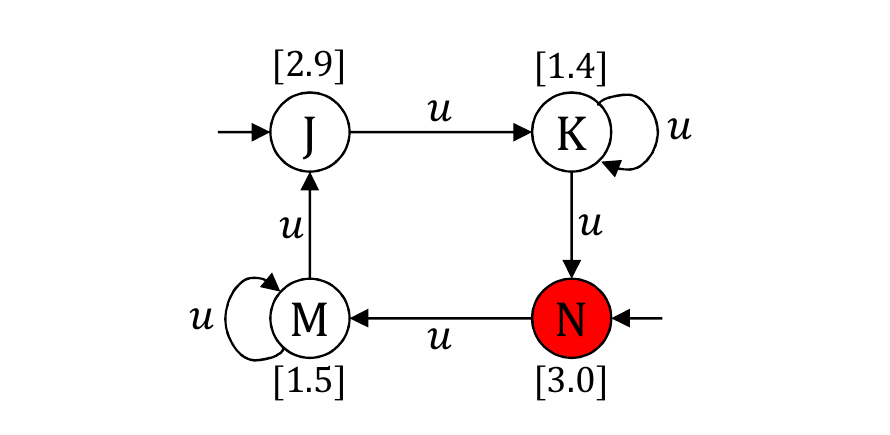}}
		\caption{Example of $\varepsilon$-approximate initial-state opacity preserving simulation relation.} \label{exmaple3}
	\end{figure}
\end{example}

\subsection{Approximate Current-State Opacity Preserving Simulation Relation}
Now, we provide a notion of approximate simulation relation for preserving current-state opacity.
\begin{definition}(Approximate Current-State Opacity Preserving Simulation Relation)\label{CurSOP}
	Let $S_{a}=(X_{a},X_{a0},X_{aS},\\U_{a},\rTo_{a},Y_a,H_{a})$ and $S_{b}=(X_{b},X_{b0},X_{bS},U_{b},\rTo_{b},Y_b,H_{b})$ be two metric systems with the
	same output sets $Y_a=Y_b$ and metric $\mathbf{d}$.
	For $\varepsilon\in\mathbb{R}_0^{+}$, a relation \mbox{$R\subseteq X_{a}\times X_{b}$} is called an $\varepsilon$-approximate current-state opacity preserving simulation relation ($\varepsilon$-CurSOP simulation relation) from $S_{a}$ to $S_{b}$
	if
	\begin{enumerate}
		\item
		$\forall x_{a0}\in X_{a0},\exists x_{b0}\in X_{b0}: (x_{a0},x_{b0})\in R$;
		\item
		$\forall (x_a,x_b)\in R:\mathbf{d}(H_{a}(x_{a}),H_{b}(x_{b}))\leq\varepsilon$;	
		\item
		For any $(x_a,x_b)\in R$, we have
		\begin{enumerate}
			\item
			$\forall x_a\rTo_{a}^{u_a} x_a',\exists x_b\rTo_{b}^{u_b} x_b':(x_a',x_b')\in R$;
			\item
			$\forall x_a\!\rTo_{a}^{u_a} \!x_a'\!\in\! X_{aS},\exists x_b\!\rTo_{b}^{u_b} \!x_b'\!\in\! X_{bS}:(x_a',x_b')\!\in\! R$;
			\item
			$\forall x_b\rTo_{b}^{u_b} x_b',\exists x_a\rTo_{a}^{u_a} x_a':(x_a',x_b')\in R$.
			\item
			$\forall x_b\rTo_{b}^{u_b} x_b'\in X_b\setminus X_{bS},\exists x_a\rTo_{a}^{u_a} x_a'\in X_a\setminus X_{aS}:(x_a',x_b')\in R$.
		\end{enumerate}
	\end{enumerate}
	We say that $\Sigma_a$ is $\varepsilon$-CurSOP simulated by $\Sigma_b$, denoted by $\Sigma_a\preceq_C^\varepsilon\Sigma_b$,
	if there exists an $\varepsilon$-CurSOP  simulation relation $R$ from $S_a$ to $S_b$.
\end{definition}

The following theorem provides a sufficient condition for $\delta$-approximate current-state opacity based on related systems as in Definition \ref{CurSOP}.

\begin{theorem}\label{thm:CurSOP}
	Let  $S_{a}=(X_{a},X_{a0},X_{aS},U_{a},\rTo_{a},Y_a,H_{a})$ and $S_{b}=(X_{b},X_{b0},X_{bS},U_{b},\rTo_{b},Y_b,H_{b})$ be two metric systems with the same output sets $Y_a=Y_b$ and metric $\mathbf{d}$ and
	let $\varepsilon,\delta\in\mathbb{R}_0^{+}$.
	If  $S_a\preceq_C^\varepsilon S_b$ and $\varepsilon\leq \frac{\delta}{2}$,
	then the following implication hold:
	\begin{align}
		&S_b\text{ is ($\delta-2\varepsilon$)-approximate current-state opaque} \nonumber\\
		\Rightarrow  &S_a \text{ is $\delta$-approximate current-state opaque}\nonumber.
	\end{align}
\end{theorem}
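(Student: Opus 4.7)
The plan is to mimic the structure of the proof of Theorem~\ref{thm:InitSOP}, with the modification that the ``witness'' constraint now concerns the last state of the run (being/not being secret) rather than the initial state. The argument will proceed by a three-hop chasing diagram through the relation: $S_a \rightsquigarrow S_b \rightsquigarrow S_b \rightsquigarrow S_a$, where the middle hop invokes the $(\delta-2\varepsilon)$-approximate current-state opacity of $S_b$, and the two outer hops convert between $S_a$ and $S_b$ using the $\varepsilon$-CurSOP simulation relation, so that the triangle inequality will yield the desired $\delta$ bound.

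More concretely, first I would fix an arbitrary initial state $x_0 \in X_{a0}$ and a finite run $x_0 \rTo^{u_1}_a x_1 \rTo^{u_2}_a \cdots \rTo^{u_n}_a x_n$ in $S_a$ with $x_n \in X_{aS}$. Using condition~(1) of Definition~\ref{CurSOP}, obtain an initial state $x_0' \in X_{b0}$ with $(x_0,x_0') \in R$. Then iteratively apply condition~(3)(a) for the transitions up to step $n-1$ to build a matching run $x_0' \rTo^{u_1'}_b x_1' \rTo^{u_2'}_b \cdots \rTo^{u_{n-1}'}_b x_{n-1}'$ in $S_b$ with $(x_i,x_i') \in R$ for $i = 0, \dots, n-1$; for the last transition, apply condition~(3)(b) (which is exactly the strengthened clause that guarantees the image state $x_n'$ can be chosen in $X_{bS}$) to obtain $x_{n-1}' \rTo^{u_n'}_b x_n'$ with $x_n' \in X_{bS}$ and $(x_n, x_n') \in R$. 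Condition~(2) then gives $\mathbf{d}(H_a(x_i), H_b(x_i')) \leq \varepsilon$ for every $i$.

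Next, since $x_n' \in X_{bS}$, invoke $(\delta-2\varepsilon)$-approximate current-state opacity of $S_b$ on this run to produce another run $x_0'' \rTo^{u_1''}_b x_1'' \rTo^{u_2''}_b \cdots \rTo^{u_n''}_b x_n''$ in $S_b$ with $x_0'' \in X_{b0}$, $x_n'' \in X_b \setminus X_{bS}$, and $\mathbf{d}(H_b(x_i'), H_b(x_i'')) \leq \delta - 2\varepsilon$ for all $i$. The final hop reverses the first: use condition~(1) (for its reverse direction, which actually we must be careful about---Definition~\ref{CurSOP} only states condition~(1) in one direction; however, the reverse-direction clauses~(3)(c) and~(3)(d) together with a small adjustment handle this). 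In fact, for this hop one should use the mirror of condition~(1) to pick an initial $x_0''' \in X_{a0}$ with $(x_0''', x_0'') \in R$---looking at the definition carefully, condition~(1) only provides forward coverage of initial states, but the backward direction from $S_b$ to $S_a$ for the initial state is implicit from how the proof of Theorem~\ref{thm:InitSOP} used its condition (1)(b). I will assume the analogous property is intended (or must be added): otherwise we would invoke whichever clause gives a preimage of $x_0''$ in $X_{a0}$ related via $R$, then apply condition~(3)(c) at intermediate steps and condition~(3)(d) at the last step to obtain a run $x_0''' \rTo^{u_1'''}_a x_1''' \rTo \cdots \rTo^{u_n'''}_a x_n'''$ in $S_a$ with $x_n''' \in X_a \setminus X_{aS}$ and $(x_i''', x_i'') \in R$, hence $\mathbf{d}(H_a(x_i'''), H_b(x_i'')) \leq \varepsilon$.

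Finally, the triangle inequality applied three times gives, for each $i \in \{0, \ldots, n\}$,
\[
\mathbf{d}(H_a(x_i), H_a(x_i''')) \leq \mathbf{d}(H_a(x_i), H_b(x_i')) + \mathbf{d}(H_b(x_i'), H_b(x_i'')) + \mathbf{d}(H_b(x_i''), H_a(x_i''')) \leq \varepsilon + (\delta - 2\varepsilon) + \varepsilon = \delta,
\]
so $x_n''' \in X_a \setminus X_{aS}$ witnesses $\delta$-approximate current-state opacity of $S_a$ along the original run, which was arbitrary. The main obstacle, as flagged above, is the backward-simulation step at the initial state: condition~(1) of Definition~\ref{CurSOP} is only written in one direction, so the proof will need to invoke the (apparently intended) symmetric companion to it, or else extract the initial-state preimage from condition~(3)(c) under the convention that initial states of $S_b$ have a ``virtual'' predecessor related to some initial state of $S_a$; the rest of the argument is a clean diagram chase and triangle-inequality computation identical in spirit to the proof of Theorem~\ref{thm:InitSOP}.
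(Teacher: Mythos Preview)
Your overall three-hop strategy is exactly the paper's, and your concern about condition~(1) being one-directional is legitimate: the paper's own proof simply cites ``conditions (1), (2), (3)-(c) and (3)-(d)'' to produce $x_0'''\in X_{a0}$ from $x_0''\in X_{b0}$ without any clause in Definition~\ref{CurSOP} that actually supplies this preimage, so you have spotted a genuine soft spot in the definition/proof pair rather than a flaw in your reasoning.

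There is, however, one real gap in your proposal that the paper does close. When $n=0$, the run is the single state $x_0\in X_{aS}$ and there is no ``last transition'' on which to invoke condition~(3)(b); condition~(1) alone gives you some $x_0'\in X_{b0}$ with $(x_0,x_0')\in R$, but nothing forces $x_0'\in X_{bS}$, so you cannot fire the current-state opacity of $S_b$. The paper handles this case separately by appealing to the standing assumption made after Definition~\ref{def:opa-app}, namely that $\{x\in X_{a0}:\mathbf{d}(H_a(x_0),H_a(x))\le\delta\}\not\subseteq X_{aS}$, which directly furnishes a non-secret $x_0'\in X_{a0}$ that is $\delta$-close to $x_0$. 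You should add this $n=0$ case to your argument; once you do, your proof coincides with the paper's.
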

\begin{proof}
	Let us consider an arbitrary initial state $x_0\in X_{a0}$ and finite run
	$x_0\rTo_a^{u_1} x_1\rTo_a^{u_2}\cdots\rTo_a^{u_n}x_n$ in $S_a$ such that $x_n\in X_{aS}$.
	We consider the following two cases: $n=0$ and $n\neq 0$.
	If $n=0$, we know that $x_0\in X_{aS}$. Since we assume that
	$\{x\in X_0:  \mathbf(H_a(x_0),H_a(x))\leq \delta\}\not\subseteq X_{aS}$,
	we observe immediately that there exists $x_0'\in X_{a0}\setminus X_{aS}$ such that  $\mathbf{d}(H_a(x_0),H_a(x))\leq \delta$.
	Then, we consider the case of $n\geq 1$.
	Since  $S_a\preceq_C^\varepsilon S_b$, by conditions (1), (2), (3)-(a) and (3)-(b) in Definition~\ref{CurSOP}, there exist an initial state $x_0'\in X_{b0}$ and  a finite run
	$x_0'\rTo_b^{u_1'}x_1'\rTo_b^{u_2'}\cdots\rTo_b^{u_n'}x_n'$ in $S_b$ such that $x_n'\in X_{bS}$ and
	\begin{equation}\label{eq:ep1c}
		\forall i\in \{0,1,\dots, n\}: \mathbf{d}( H_a(x_i),H_b(x_i')) \leq \varepsilon.
	\end{equation}
	Since $S_b$ is $(\delta-2\varepsilon)$-approximate current-state opaque, there exist an  initial state $x_0''\in X_{0b}$ and
	a finite run $x_0''\rTo_b^{u_1''}x_1''\rTo_b^{u_2''}\cdots\rTo_b^{u_n''}x_n''$ such that  $x_n''\in X_b\setminus X_{bS}$ and
	\begin{equation}\label{eq:ep2c}
		\max_{i\in\{0,1,\dots,n\}}\mathbf{d}( H_b(x_i'),H_b(x_i''))\leq \delta-2\varepsilon.
	\end{equation}
	Again, since $S_a\preceq_C^\varepsilon S_b$, by conditions (1), (2), (3)-(c) and (3)-(d) in Definition~\ref{CurSOP}, there exist an  initial state  $x_0'''\in X_{0a}$ and a finite run  $x_0'''\rTo_a^{u_1'''}x_1'''\rTo_a^{u_2'''}\cdots\rTo_a^{u_n'''}x_n'''$
	such that  $x_n'''\in X_a\setminus X_{aS}$  and
	\begin{equation}\label{eq:ep3c}
		\forall i\in \{0,1,\dots, n\}:\mathbf{d}( H_a(x_i'''),H_b(x_i'')) \leq \varepsilon.
	\end{equation}
	Combining equations~(\ref{eq:ep1c}),~(\ref{eq:ep2c}),~(\ref{eq:ep3c}), and using the triangle inequality,
	we have
	\begin{equation}\label{eq:ep4c}
		\max_{i\in\{0,1,\dots,n\}}\mathbf{d}( H_a(x_i),H_a(x_i''')) \leq \delta.
	\end{equation}
	Since $x_0\in X_{0a}$ and $x_0\rTo_a^{u_1} x_1\rTo_a^{u_2}\cdots\rTo_a^{u_n}x_n$ are arbitrary, we conclude that
	$S_a$ is $\delta$-approximate current-state opaque.
\end{proof}

The following corollary is a simple consequence of the result in Theorem \ref{CurSOP} but for the lack of $\delta$-approximate current-state opacity.
\begin{corollary}
	For any two systems  $S_a$ and $S_b$   with  $S_b\preceq_C^\varepsilon S_a$,
	the following implication holds:
	\begin{align}
		&S_b\text{ is not ($\delta+2\varepsilon$)-approximate current-state opaque}  \nonumber\\
		\Rightarrow  &S_a  \text{ is not $\delta$-approximate current-state opaque}\nonumber.
	\end{align}
\end{corollary}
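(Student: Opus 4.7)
The plan is to deduce this corollary directly from Theorem~\ref{thm:CurSOP} by taking the contrapositive and applying the theorem with the roles of $S_a$ and $S_b$ exchanged. Concretely, I would set $\delta' := \delta + 2\varepsilon$ and reinterpret the hypothesis $S_b \preceq_C^\varepsilon S_a$ as an $\varepsilon$-CurSOP simulation from the ``first'' system $S_b$ to the ``second'' system $S_a$ in the sense required by Theorem~\ref{thm:CurSOP}.

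The key verification step is that the parameter assumption $\varepsilon \leq \tfrac{\delta'}{2}$ of Theorem~\ref{thm:CurSOP} is automatically satisfied. Indeed, with $\delta' = \delta + 2\varepsilon$, the condition $\varepsilon \leq \tfrac{\delta + 2\varepsilon}{2}$ simplifies to $0 \leq \tfrac{\delta}{2}$, which holds since $\delta \in \mathbb{R}_0^+$. Hence Theorem~\ref{thm:CurSOP} applies with no further assumptions beyond those already in the corollary.

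Applying the theorem then yields the implication: if $S_a$ is $(\delta'-2\varepsilon)$-approximate current-state opaque, i.e., $\delta$-approximate current-state opaque, then $S_b$ is $\delta'$-approximate current-state opaque, i.e., $(\delta+2\varepsilon)$-approximate current-state opaque. Taking the contrapositive gives exactly the desired statement: if $S_b$ fails to be $(\delta+2\varepsilon)$-approximate current-state opaque, then $S_a$ fails to be $\delta$-approximate current-state opaque.

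There is no real obstacle here; the entire argument is a one-line reduction to Theorem~\ref{thm:CurSOP} via parameter substitution and contraposition, exactly mirroring how the corresponding corollary for initial-state opacity was derived from Theorem~\ref{thm:InitSOP}. The only thing worth flagging in the writeup is to make the parameter bookkeeping explicit so the reader can see that no circularity or hidden condition on $\delta$ is introduced by the shift $\delta \mapsto \delta + 2\varepsilon$.
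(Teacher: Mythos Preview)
Your proposal is correct and follows essentially the same approach as the paper: apply Theorem~\ref{thm:CurSOP} with the roles of $S_a$ and $S_b$ swapped and with $\delta' = \delta + 2\varepsilon$, then take the contrapositive. Your explicit check that $\varepsilon \leq \tfrac{\delta'}{2}$ holds automatically is a nice clarification that the paper leaves implicit.
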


\subsection{Approximate Infinite-Step Opacity Preserving Simulation Relation}
Finally,
by combing $\varepsilon$-CurSOP simulation relation and $\varepsilon$-InitSOP simulation relation,
we provide a notion of approximate simulation relation for preserving infinite-step opacity.
\begin{definition}(Approximate Infinite-Step Opacity Preserving Simulation Relation)\label{InfSOP}
	Let $S_{a}=(X_{a},X_{a0},X_{aS},\\U_{a},\rTo_{a},Y_a,H_{a})$ and $S_{b}=(X_{b},X_{b0},X_{bS},U_{b},\rTo_{b},Y_b,H_{b})$ be two metric systems with the
	same output sets $Y_a=Y_b$ and metric $\mathbf{d}$.
	For $\varepsilon\in\mathbb{R}_0^{+}$, a relation \mbox{$R\subseteq X_{a}\times X_{b}$} is called an $\varepsilon$-approximate infinite-step opacity preserving simulation relation ($\varepsilon$-InfSOP simulation relation) from $S_{a}$ to $S_{b}$
	if it is both an $\varepsilon$-CurSOP simulation relation from $S_{a}$ to $S_{b}$ and an $\varepsilon$-InitSOP simulation relation from $S_{a}$ to $S_{b}$, i.e.,
	\begin{enumerate}
		\item
		\begin{enumerate}
			\item
			$\forall x_{a0}\in X_{a0},\exists x_{b0}\in X_{b0}: (x_{a0},x_{b0})\in R$;
			\item
			$\forall x_{a0}\!\in\! X_{a0}\cap X_{aS},\exists x_{b0}\!\in\! X_{b0}\cap X_{bS}: (x_{a0},x_{b0})\in R$;
			\item
			$\forall x_{b0}\in X_{b0}\setminus X_{bS},\exists x_{a0}\in X_{a0}\setminus X_{aS}:(x_{a0},x_{b0})\in R$;
		\end{enumerate}
		\item
		$\forall (x_a,x_b)\in R:\mathbf{d}(H_{a}(x_{a}),H_{b}(x_{b}))\leq\varepsilon$;	
		\item
		For any $(x_a,x_b)\in R$, we have
		\begin{enumerate}
			\item
			$\forall x_a\rTo_{a}^{u_a} x_a',\exists x_b\rTo_{b}^{u_b} x_b':(x_a',x_b')\in R$;
			\item
			$\forall x_a\!\rTo_{a}^{u_a} \!x_a'\!\in\! X_{aS},\exists x_b\!\rTo_{b}^{u_b} \!x_b'\!\in\! X_{bS}:(x_a',x_b')\!\in\! R$;
			\item
			$\forall x_b\rTo_{b}^{u_b} x_b',\exists x_a\rTo_{a}^{u_a} x_a':(x_a',x_b')\in R$.
			\item
			$\forall x_b\rTo_{b}^{u_b} x_b'\in X_b\setminus X_{bS},\exists x_a\rTo_{a}^{u_a} x_a'\in X_a\setminus X_{aS}:(x_a',x_b')\in R$.
		\end{enumerate}
	\end{enumerate}
	We say that $\Sigma_a$ is $\varepsilon$-InfSOP simulated by $\Sigma_b$, denoted by $\Sigma_a\preceq_{IF}^\varepsilon\Sigma_b$,
	if there exists an $\varepsilon$-InfSOP  simulation relation $R$ from $S_a$ to $S_b$.
\end{definition}

Similar to the cases of initial-state opacity and current-state opacity,
we have the following theorem as a sufficient condition for $\delta$-approximate infinite-step opacity based on related systems as in Definition~\ref{InfSOP}.

\begin{theorem}\label{thm:InfSOP}
	Let  $S_{a}=(X_{a},X_{a0},X_{aS},U_{a},\rTo_{a},Y_a,H_{a})$ and $S_{b}=(X_{b},X_{b0},X_{bS},U_{b},\rTo_{b},Y_b,H_{b})$ be two metric systems with the same output sets $Y_a=Y_b$ and metric $\mathbf{d}$ and
	let $\varepsilon,\delta\in\mathbb{R}_0^{+}$.
	If  $S_a\preceq_{IF}^\varepsilon S_b$ and $\varepsilon\leq \frac{\delta}{2}$,
	then the following implication hold:
	\begin{align}
		&S_b\text{ is ($\delta-2\varepsilon$)-approximate infinite-step opaque} \nonumber\\
		\Rightarrow  &S_a \text{ is $\delta$-approximate infinite-step opaque}\nonumber.
	\end{align}
\end{theorem}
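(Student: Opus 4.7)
The plan is to combine the arguments of Theorems~\ref{thm:InitSOP} and~\ref{thm:CurSOP}, performing the standard three-stage ``lift, apply, descend'' pattern, with a case analysis on the location $k$ of the secret state along the run. The point is that an $\varepsilon$-InfSOP simulation relation packages exactly the structural ingredients needed to propagate a run forward through an arbitrary index $k$ while tracking either the secret or the non-secret status of the $k$-th state.

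Fix an arbitrary $x_0\in X_{a0}$ and a finite run $x_0\rTo_a^{u_1} x_1\rTo_a^{u_2}\cdots\rTo_a^{u_n} x_n$ with $x_k\in X_{aS}$ for some $k\in\{0,\dots,n\}$. First I would lift this to $S_b$, preserving secrecy at instant $k$. If $k=0$, use condition~(1)(b) of Definition~\ref{InfSOP} to choose $x_0'\in X_{b0}\cap X_{bS}$ with $(x_0,x_0')\in R$, then iterate~(3)(a) to build $x_1',\dots,x_n'$. If $k\geq 1$, use~(1)(a) to choose $x_0'\in X_{b0}$ with $(x_0,x_0')\in R$, iterate~(3)(a) up to $x_{k-1}'$, then apply~(3)(b) to the transition $x_{k-1}\rTo_a^{u_k} x_k\in X_{aS}$ to obtain $x_{k-1}'\rTo_b^{u_k'} x_k'\in X_{bS}$ with $(x_k,x_k')\in R$, then iterate~(3)(a) to finish. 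In either case, condition~(2) yields
\begin{equation*}
\forall i\in\{0,\dots,n\}:\ \mathbf{d}(H_a(x_i),H_b(x_i'))\leq\varepsilon.
\end{equation*}

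Next, since $x_k'\in X_{bS}$ and $S_b$ is $(\delta-2\varepsilon)$-approximate infinite-step opaque, there exist $x_0''\in X_{b0}$ and a run $x_0''\rTo_b^{u_1''}\cdots\rTo_b^{u_n''} x_n''$ with $x_k''\in X_b\setminus X_{bS}$ and $\max_i\mathbf{d}(H_b(x_i'),H_b(x_i''))\leq\delta-2\varepsilon$. I would then descend this counter-run back to $S_a$ using the symmetric ingredients of the relation, mirroring the earlier case split: if $k=0$, apply~(1)(c) to get $x_0'''\in X_{a0}\setminus X_{aS}$ with $(x_0''',x_0'')\in R$, then iterate~(3)(c); if $k\geq 1$, apply~(1)(a), iterate~(3)(c) up to index $k-1$, apply~(3)(d) at step $k$ using $x_{k-1}''\rTo_b^{u_k''} x_k''\in X_b\setminus X_{bS}$ to obtain $x_{k-1}'''\rTo_a^{u_k'''} x_k'''\in X_a\setminus X_{aS}$, then iterate~(3)(c) to the end. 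Condition~(2) again gives $\mathbf{d}(H_a(x_i'''),H_b(x_i''))\leq\varepsilon$ for all $i$. A triangle inequality on the three bounds then yields $\max_i\mathbf{d}(H_a(x_i),H_a(x_i'''))\leq\delta$ with $x_k'''\notin X_{aS}$, completing the argument since $x_0$ and the run were arbitrary.

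The main obstacle I anticipate is purely organizational: the case split on $k=0$ versus $k\geq 1$ occurs twice (once in the lift, once in the descent), and one must verify that the choice of initial-state condition (either the plain~(1)(a) or the secret-preserving~(1)(b)/(1)(c)) matches the availability of~(3)(b)/(3)(d) at the correct index so that $(x_i,x_i')\in R$ is maintained for every $i$. The boundary case $k=n$ is handled uniformly by the same lift/descend at the final transition, and the degenerate subcase $n=0$ reduces to the standing non-triviality assumption on initial states stated just after Definition~\ref{def:opa-app}. No new analytic content beyond the triangle inequality is required, which is why the result is a clean combination of Theorems~\ref{thm:InitSOP} and~\ref{thm:CurSOP}.
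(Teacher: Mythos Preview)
Your proposal is correct and follows essentially the same approach as the paper's proof: a case split on $k=0$ versus $k\geq 1$, invoking the $\varepsilon$-InitSOP argument in the first case and the $\varepsilon$-CurSOP lift/apply/descend pattern in the second, then finishing with the triangle inequality. The paper's proof is terser---it simply refers back to the proofs of Theorems~\ref{thm:InitSOP} and~\ref{thm:CurSOP}---but the conditions of Definition~\ref{InfSOP} invoked and the overall logical structure are identical to what you spell out.
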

\begin{proof}
	Let us consider an arbitrary initial state $x_0\in X_{a0}$ and finite run
	$x_0\rTo_a^{u_1} x_1\rTo_a^{u_2}\cdots\rTo_a^{u_n}x_n$ in $S_a$ such that $x_k\in X_{aS}$ for some $k=0,\dots,n$.
	We consider the following two cases:
	
	If $k=0$, then we have $x_0\in X_{aS}$.
	Since $S_a\preceq_{IF}^\varepsilon S_b$ implies $S_a\preceq_{I}^\varepsilon S_b$,
	by the proof of Theorem~\ref{thm:InitSOP},
	we know that
	there exist an initial state
	$x_0'\in X_{a0}\setminus X_{aS}$
	and a run $x_0'\rTo^{u_1'}_a x_1'\rTo^{u_2'}_a \cdots\rTo^{u_n'}_a x_n'$
	such that
	$\max_{i\in\{0,1,\dots,n\}}\mathbf{d}( H_a(x_i),H_a(x_i')) \leq \delta$.
	
	If $k\geq 1$, then similar to the proof of Theorem~\ref{thm:CurSOP},
	by conditions (1)-(a), (2), (3)-(a), (3)-(b), (3)-(c) and (3)-(d)  in Definition~\ref{InfSOP}
	and the fact the  $S_b$ is $(\delta-2\varepsilon)$-approximate infinite-step opaque,
	there exist an  initial state  $x_0'\in X_{0a}$ and a finite run
	$x_0'\rTo_a^{u_1'}x_1'\rTo_a^{u_2'}\cdots\rTo_a^{u_n'}x_n'$
	such that  $x_k'\in X_a\setminus X_{aS}$  and
	$\max_{i\in\{0,1,\dots,n\}}\mathbf{d}( H_a(x_i),H_a(x_i')) \leq \delta$.
	
	Since $x_0\in X_{0a}$, $x_0\rTo_a^{u_1} x_1\rTo_a^{u_2}\cdots\rTo_a^{u_n}x_n$ and index $k$ are arbitrary,
	we conclude that $S_a$ is $\delta$-approximate infinite-step opaque.
\end{proof}

We can  obtain the following  corollary immediately.
%
\begin{corollary}
	For any two systems  $S_a$ and $S_b$   with  $S_b\preceq_{IF}^\varepsilon S_a$,
	the following implication holds:
	\begin{align}
		&S_b\text{ is not ($\delta+2\varepsilon$)-approximate infinite-step opaque}  \nonumber\\
		\Rightarrow  &S_a  \text{ is not $\delta$-approximate infinite-step opaque}\nonumber.
	\end{align}
\end{corollary}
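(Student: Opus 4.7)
The plan is to derive this corollary as an immediate consequence of Theorem~\ref{thm:InfSOP} by swapping the roles of $S_a$ and $S_b$ and then taking the contrapositive, exactly mirroring the proof pattern used for the analogous corollaries after Theorems~\ref{thm:InitSOP} and~\ref{thm:CurSOP}.

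More precisely, I would start from the hypothesis $S_b \preceq_{IF}^{\varepsilon} S_a$ and apply Theorem~\ref{thm:InfSOP} with the labels $a$ and $b$ interchanged, using some dummy parameter $\delta' \geq 2\varepsilon$ in place of $\delta$. The theorem then yields the implication: if $S_a$ is $(\delta' - 2\varepsilon)$-approximate infinite-step opaque, then $S_b$ is $\delta'$-approximate infinite-step opaque. Choosing $\delta' := \delta + 2\varepsilon$, the side condition $\varepsilon \leq \delta'/2$ becomes $\varepsilon \leq (\delta + 2\varepsilon)/2$, which simplifies to $\delta \geq 0$ and hence is automatic for any $\delta \in \mathbb{R}_0^+$.

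With this choice the implication reads: $S_a$ being $\delta$-approximate infinite-step opaque implies $S_b$ is $(\delta + 2\varepsilon)$-approximate infinite-step opaque. Taking the contrapositive gives precisely the statement of the corollary, namely that $S_b$ failing to be $(\delta + 2\varepsilon)$-approximate infinite-step opaque forces $S_a$ to fail to be $\delta$-approximate infinite-step opaque.

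There is essentially no technical obstacle here, since all the real work was done inside Theorem~\ref{thm:InfSOP} (which in turn relied on the case analysis $k=0$ vs.\ $k \geq 1$ together with the four back-and-forth conditions of Definition~\ref{InfSOP}). The only minor thing to be careful about is the bookkeeping with the parameter shift from $\delta$ to $\delta + 2\varepsilon$ and verifying that the hypothesis $\varepsilon \leq \delta'/2$ of the theorem is trivially satisfied after this substitution; once that is checked, a one-line application of the contrapositive concludes the proof.
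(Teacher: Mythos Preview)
Your proposal is correct and follows exactly the approach the paper uses for the analogous corollary after Theorem~\ref{thm:InitSOP}; the paper in fact omits any explicit proof for this corollary, simply stating that it is obtained ``immediately,'' so your argument (swap the roles of $S_a$ and $S_b$, apply Theorem~\ref{thm:InfSOP} with $\delta':=\delta+2\varepsilon$, take the contrapositive) is precisely what is intended.
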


In the next section, we study opacity of continuous-space control systems by constructing their finite abstractions and analyzing their opacity.

\section{Opacity of Control Systems}\label{sec:contrl}

In this section, we show how to analyze approximate opacity for a class of discrete-time control systems of the following form.
\begin{definition}\label{def:sys1}
	A discrete-time control system $\Sigma$ is defined by the tuple	
	$\Sigma=(\mathbb X,\mathbb S,\mathbb U,f,\mathbb Y,h)$,
	where $\mathbb X$, $\mathbb U$, and $\mathbb Y$ are the state, input, and output sets, respectively, and are subsets of normed vector spaces with appropriate finite dimensions. Set $\mathbb S\subseteq \mathbb X$ is a set of secret states. The map $f: \mathbb X\times \mathbb U \rightarrow \mathbb X $ is called the transition function, and $h:\mathbb X \rightarrow \mathbb Y$ is the output map and assumed to satisfy the following Lipschitz condition: $\Vert h(x)-h(y)\Vert\leq\alpha(\Vert x-y\Vert)$ for some $\alpha\in\mathcal{K}_\infty$ and all $x,y\in\mathbb{X}$.
	The discrete-time control system $\Sigma $ is described by difference equations of the form
	\begin{align}\label{eq:2}
		\Sigma:\left\{
		\begin{array}{rl}
			\xi(k+1)\!\!\!\!\!\!&=f(\xi(k),\upsilon(k)),\\
			\zeta(k)\!\!\!\!\!\!&=h(\xi(k)),
		\end{array}
		\right.
	\end{align}
	where $\xi:\mathbb{N}_0\rightarrow \mathbb X $, $\zeta:\mathbb{N}_0\rightarrow \mathbb Y$, and $\upsilon:\mathbb{N}_0\rightarrow \mathbb U$ are the state, output, and input signals, respectively.
\end{definition}
We write $\xi_{x\upsilon}(k)$ to denote the point reached at time $k$ under the input signal $\upsilon$ from initial condition $x=\xi_{x\upsilon}(0)$. Similarly, we denote by $\zeta_{x\upsilon}(k)$ the output corresponding to state $\xi_{x\upsilon}(k)$, i.e. $\zeta_{x\upsilon}(k)=h(\xi_{x\upsilon}(k))$. In the above definition, we implicitly assumed that set $\mathbb X$ is positively invariant\footnote{Set $\mathbb X$ is called positively invariant under \eqref{eq:2} if $\xi_{x\upsilon}(k)\in\mathbb X$ for any $k\in\N$, any $x\in\mathbb X$ and any $\upsilon:\mathbb{N}_0\rightarrow \mathbb U$.}. 

Now, we introduce the notion of incremental input-to-state stability ($\delta$-ISS) leveraged later to show some of the main results of the paper.

\begin{definition}\label{ISS}
	System $\Sigma=(\mathbb X,\mathbb S,\mathbb U,f,\mathbb Y,h)$ is called incrementally input-to-state stable ($\delta$-ISS) if there exist a $\mathcal{KL}$ function $\beta$ and $\mathcal{K}_\infty$ function $\gamma$ such that $\forall x,x'\in \mathbb X$ and $\forall \upsilon,\upsilon':\N_0\to \mathbb U$, the following inequality holds for any $k\in\N$:
	\begin{align}\label{ISS_enq}
		\Vert \xi_{x\upsilon}(k)\!-\!\xi_{x'\upsilon'}(k)\Vert\!\leq\!\beta(\Vert x-x'\Vert,k)\!+\!\gamma(\Vert \upsilon-\upsilon'\Vert_\infty).
	\end{align}
\end{definition}
\begin{example}
	As an example, for a linear control system:
	\begin{equation}\label{linear}
		\xi(k+1)=A\xi(k)+B\upsilon(k),\quad \zeta(k)=C\xi(k),
	\end{equation}
	where all eigenvalues of $A$ are inside the unit circle, the functions $\beta$ and $\gamma$ can be chosen as:
	\begin{equation}\label{beta}
		\beta(r,k)=\Vert A^k\Vert r;\quad\gamma(r)=\Vert B\Vert\left(\sum_{m=0}^{\infty}\Vert A^m\Vert\right)r.
	\end{equation}
\end{example}
In general, it is difficult to check inequality \eqref{ISS_enq} directly for nonlinear systems. Fortunately, $\delta$-ISS can be characterized using Lyapunov functions.
\begin{definition}
	Consider a control system $\Sigma$ and a continuous function $V:\mathbb X\times\mathbb X\to\R_0^+$. Function $V$ is called a $\delta$-ISS Lyapunov function for $\Sigma$ if there exist $\mathcal{K}_\infty$ functions $\alpha_1,\alpha_2,\rho$ and $\mathcal{K}$ function $\sigma$ such that:
	\begin{itemize}
		\item[(i)] for any $x,x'\in\mathbb{X}$\\$\alpha_1(\Vert x-x'\Vert)\leq V(x,x')\leq\alpha_2(\Vert x-x'\Vert)$;
		\item[(ii)] for any $x,x'\in\mathbb{X}$ and $u,u'\in\mathbb{U}$\\$V\!(f(x,u)\!,\!f(x',u')\!)\!-\!V(x,x')\!\leq\!-\!\rho(\!V( x,x')\!)\!+\!\sigma\!(\!\Vert u-u'\Vert\!)$;
	\end{itemize} 
\end{definition}
The following result characterizes $\delta$-ISS in terms of existence of $\delta$-ISS Lyapunov functions.
\begin{theorem}\cite{tran_thesis}
	Consider a control system $\Sigma$.
	\begin{itemize}
		\item $\Sigma$ is $\delta$-ISS if it admits a $\delta$-ISS Lyapunov function;
		\item If $\mathbb U$ is compact and convex and $\mathbb X$ is compact, tehn the existence of a $\delta$-ISS Lyapunov function is equivalent to $\delta$-ISS.
	\end{itemize}
\end{theorem}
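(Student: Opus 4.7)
The statement has two directions: sufficiency (Lyapunov function implies $\delta$-ISS) on a general state/input space, and necessity (converse Lyapunov) under compactness and convexity. I would handle them separately, since sufficiency is a direct dissipation-inequality argument while necessity requires a careful construction.

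\textbf{Sufficiency.} Let $V$ be a $\delta$-ISS Lyapunov function. Fix arbitrary $x,x'\in\mathbb{X}$ and arbitrary input signals $\upsilon,\upsilon':\mathbb{N}_0\to\mathbb{U}$, and let $v_k\Let V(\xi_{x\upsilon}(k),\xi_{x'\upsilon'}(k))$. Property (ii) then yields the scalar recursion
\begin{equation*}
v_{k+1}\leq v_k-\rho(v_k)+\sigma(\Vert\upsilon-\upsilon'\Vert_\infty).
\end{equation*}
The plan is to apply a standard discrete-time comparison/ISS lemma to this recursion: because $\rho\in\mathcal{K}_\infty$, the map $r\mapsto r-\rho(r)$ is a $\mathcal{K}_\infty$-type contraction on a neighborhood of the origin, so one obtains a $\mathcal{KL}$ estimate $v_k\leq\widetilde\beta(v_0,k)+\widetilde\gamma(\Vert\upsilon-\upsilon'\Vert_\infty)$ for suitable $\widetilde\beta\in\mathcal{KL}$ and $\widetilde\gamma\in\mathcal{K}_\infty$ (for instance, splitting into the regimes $\rho(v_k)\ge 2\sigma(\Vert\upsilon-\upsilon'\Vert_\infty)$ and its complement). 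Combining this with property (i) gives
\begin{equation*}
\Vert\xi_{x\upsilon}(k)-\xi_{x'\upsilon'}(k)\Vert\leq\alpha_1^{-1}\!\bigl(2\widetilde\beta(\alpha_2(\Vert x-x'\Vert),k)\bigr)+\alpha_1^{-1}\!\bigl(2\widetilde\gamma(\Vert\upsilon-\upsilon'\Vert_\infty)\bigr),
\end{equation*}
which is exactly the $\delta$-ISS inequality \eqref{ISS_enq} with $\beta(r,k)\Let\alpha_1^{-1}(2\widetilde\beta(\alpha_2(r),k))\in\mathcal{KL}$ and $\gamma(r)\Let\alpha_1^{-1}(2\widetilde\gamma(r))\in\mathcal{K}_\infty$.

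\textbf{Necessity (converse).} Assume $\mathbb{X}$ and $\mathbb{U}$ are compact with $\mathbb{U}$ convex, and suppose $\Sigma$ is $\delta$-ISS with functions $\beta,\gamma$. The plan is to mimic the discrete-time converse ISS construction. Define a candidate
\begin{equation*}
V(x,x')\Let\sup_{k\in\mathbb{N}_0,\,\upsilon,\upsilon':\mathbb{N}_0\to\mathbb{U}}\max\!\bigl\{0,\;\alpha\!\bigl(\Vert\xi_{x\upsilon}(k)-\xi_{x'\upsilon'}(k)\Vert\bigr)-\chi(\Vert\upsilon-\upsilon'\Vert_\infty)\bigr\},
\end{equation*}
where $\alpha\in\mathcal{K}_\infty$ is a suitable reshaping of the identity (needed to make the decay in $k$ exponentially fast, as in Sontag--Wang) and $\chi\in\mathcal{K}_\infty$ dominates $\alpha\circ\gamma$. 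Compactness of $\mathbb{X}$ and $\mathbb{U}$ (together with continuity of $f$, which follows from the Lipschitz assumption on $h$ via a standard argument, or must be independently assumed) ensures the supremum is finite and attained on bounded extractions, so $V$ is well defined. The $\delta$-ISS bound gives the upper estimate $V(x,x')\leq\alpha_2(\Vert x-x'\Vert)$, and taking $k=0$ with $\upsilon=\upsilon'$ gives the lower estimate $V(x,x')\geq\alpha(\Vert x-x'\Vert)$. The dissipation inequality (ii) is obtained by the usual one-step argument: propagating the supremum one step forward strips off the $k=0$ contribution, yielding $V(f(x,u),f(x',u'))\leq V(x,x')-\rho(V(x,x'))+\sigma(\Vert u-u'\Vert)$ with $\rho,\sigma$ derived from $\alpha,\chi,\beta,\gamma$.

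\textbf{Main obstacle.} The sufficiency direction is routine. The hard part is the converse: showing that the candidate $V$ is actually continuous (not just bounded) and genuinely verifies the dissipation inequality with $\mathcal{K}_\infty$ functions of the \emph{claimed} form. Continuity requires the right reshaping $\alpha$ together with exponential-decay majorization of $\beta$, and the strict decrease needs convexity of $\mathbb{U}$ so that concatenations of input signals still lie in admissible signal spaces. These are precisely the reasons the hypotheses on $\mathbb{X}$ and $\mathbb{U}$ appear only in the second bullet, and I would spend most of the proof time verifying these two properties of $V$.
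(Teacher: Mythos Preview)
The paper does not prove this theorem: it is quoted with a citation to \cite{tran\_thesis} and no argument is given in the text. So there is no ``paper's own proof'' to compare against; the result is imported wholesale.

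Your sketch is in line with how such results are proved in the literature (discrete-time ISS comparison for sufficiency; a trajectory-supremum construction for the converse). Two small points to tighten. First, the remark that ``continuity of $f$\ldots follows from the Lipschitz assumption on $h$'' is not right: the Lipschitz bound on the output map $h$ says nothing about $f$, so continuity of $f$ must be taken as a standing assumption (as is usual in converse-Lyapunov theorems). Second, convexity of $\mathbb{U}$ is not what makes concatenations of input signals admissible---concatenation already stays in $\mathbb{U}^{\mathbb{N}_0}$ whenever the pieces do; rather, convexity (together with compactness) is used in the converse construction to obtain continuity of the candidate $V$ via interpolation/approximation arguments on the input side. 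These are refinements of justification, not gaps in the overall strategy.
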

The next technical lemma will be used later to show some of the main results of this section.
\begin{lemma}
	Consider a control system $\Sigma$. Suppose $V$ is a $\delta$-ISS Lyapunov function for $\Sigma$. Then there exist $\kappa,\lambda\in\mathcal{K_\infty}$, where $\kappa(s)<s$ for any $s\in\R^+$, such that
	\begin{equation}\label{decay}
		V(f(x,u),f(x',u'))\!\leq\!\max\{\!\kappa(V(x,x')),\lambda(\Vert u-u'\Vert)\!\},
	\end{equation}  
	for any $x,x'\in\mathbb X$ and any $u,u'\in\mathbb U$.
\end{lemma}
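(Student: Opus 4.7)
The plan is to convert the additive $\delta$-ISS Lyapunov inequality~(ii) into the desired max-form inequality by a standard dichotomy on whether the state-error decay term $\rho(V)$ or the input-error term $\sigma(\Vert u-u'\Vert)$ dominates.

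First, I would argue that without loss of generality $\rho(s)\le s$ for every $s\ge 0$: instantiating~(ii) with $u=u'$ together with $V\ge 0$ forces $\rho(V(x,x'))\le V(x,x')$, so one may replace $\rho$ by $\min\{\rho,\mathrm{id}\}\in\mathcal{K}_\infty$ without weakening~(ii). A standard smoothing from below (e.g., Sontag-type smooth class-$\mathcal{K}_\infty$ lower envelopes) further lets us assume $\rho$ is $C^1$ with $\rho'$ uniformly bounded by some constant $L$; this is needed only to secure strict monotonicity of the gain $\kappa$ built below. Fixing $\theta\in(0,1/L)$, for concreteness $\theta=\tfrac12$ after rescaling, I then do a case analysis on an arbitrary quadruple $(x,x',u,u')$. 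In Case~A, $\sigma(\Vert u-u'\Vert)\le(1-\theta)\rho(V(x,x'))$, and plugging this into~(ii) yields
\begin{equation*}
V(f(x,u),f(x',u'))\le V(x,x')-\theta\rho(V(x,x'))=:\kappa(V(x,x')).
\end{equation*}
In Case~B, $\sigma(\Vert u-u'\Vert)>(1-\theta)\rho(V(x,x'))$, equivalently $V(x,x')<\rho^{-1}\!\bigl(\sigma(\Vert u-u'\Vert)/(1-\theta)\bigr)$; dropping the non-positive term $-\rho(V)$ in~(ii) then gives
\begin{equation*}
V(f(x,u),f(x',u'))\le V(x,x')+\sigma(\Vert u-u'\Vert)<\rho^{-1}\!\bigl(\sigma(\Vert u-u'\Vert)/(1-\theta)\bigr)+\sigma(\Vert u-u'\Vert)=:\lambda(\Vert u-u'\Vert).
\end{equation*}
Since exactly one case always holds, the two bounds combine into the asserted max-form inequality.

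The hard part is verifying $\kappa,\lambda\in\mathcal{K}_\infty$ together with $\kappa(s)<s$ for $s>0$. For $\kappa(s)=s-\theta\rho(s)$, the inequality $\kappa(s)<s$, the identity $\kappa(0)=0$, and unboundedness $\kappa(s)\ge(1-\theta)s\to\infty$ are immediate from $\rho\in\mathcal{K}_\infty$ with $\rho\le\mathrm{id}$; strict monotonicity is the one non-automatic property, and is exactly what the preliminary smoothing of $\rho$ secures, since then $\kappa'(s)=1-\theta\rho'(s)\ge 1-\theta L>0$. For $\lambda$, the map $r\mapsto\rho^{-1}(\sigma(r)/(1-\theta))+\sigma(r)$ already lies in class $\mathcal{K}$, and we may freely enlarge it to $\lambda(r):=\rho^{-1}(\sigma(r)/(1-\theta))+\sigma(r)+r$, which is class $\mathcal{K}_\infty$ while still preserving the Case~B inequality. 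The principal technical obstacle is thus the regularization of $\rho$ to obtain a strictly monotone $\kappa$; this step is standard in the $\mathcal{K}_\infty$-calculus literature and does not affect the validity of~(ii) because it only replaces $\rho$ by a pointwise smaller class-$\mathcal{K}_\infty$ function.
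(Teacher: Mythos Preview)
Your argument is correct and follows the standard route for converting an additive dissipation inequality into max form. The paper itself does not prove this lemma: it simply states that the proof is similar to that of Theorem~1 in \cite{swikir} and omits it for space reasons, so there is no in-paper argument to compare against. Your two-case split on whether $\sigma(\Vert u-u'\Vert)$ dominates $(1-\theta)\rho(V(x,x'))$ is exactly the expected technique, and you correctly isolate the only genuinely non-automatic point, namely strict monotonicity of $\kappa(s)=s-\theta\rho(s)$.

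One minor remark: the Lipschitz regularization of $\rho$, while perfectly legitimate, is not strictly necessary. An alternative that avoids smoothing is to majorize the possibly non-monotone map $s\mapsto s-\theta\rho(s)$ by
\[
\kappa(s):=\tfrac12\Bigl(s+\sup_{0\le t\le s}\bigl[t-\theta\rho(t)\bigr]\Bigr),
\]
which is directly in $\mathcal{K}_\infty$, satisfies $\kappa(s)<s$ for $s>0$ (since the supremum is strictly below $s$ by continuity and $\rho(t)>0$), and dominates $s-\theta\rho(s)$. Either route yields the claim; your version has the advantage of producing an explicit $\kappa$ from $\rho$ without a post-hoc envelope.
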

The proof is similar to that of Theorem 1 in \cite{swikir} and is omitted here due to lack of space.

In order to provide the main results of this section, we first describe control systems in Definition \ref{def:sys1} as metric systems as in Definition \ref{system}. More precisely, given a control system $\Sigma=(\mathbb X,\mathbb S,\mathbb U,f,\mathbb Y,h)$, we define an associated metric system
\begin{equation}
	S(\Sigma)=(X,X_0,X_S,U,\rTo,Y,H),
\end{equation}
where $X=\mathbb X$, $X_0=\mathbb X$, $X_S=\mathbb S$, $U=\mathbb U$, $Y=\mathbb Y$, $H=h$, and $x\rTo^{u}x'$ if and only if $x'=f(x,u)$. We assume that the output set $Y$ is equipped with the infinity norm: $\mathbf{d}(y_1,y_2)=\Vert y_1-y_2\Vert$, $\forall y_1,y_2\in Y$. We have a similar assumption for the state set $X$.

Now, we introduce a symbolic system for the control system $\Sigma=(\mathbb X,\mathbb S,\mathbb U,f,\mathbb Y,h)$. To do so, from now on we assume that sets $\mathbb X,\mathbb S$ and $\mathbb U$ are of the form of finite union of boxes. Consider a concrete control system $\Sigma$ and a tuple $\mathsf{q}=(\eta,\mu)$ of parameters, where $0<\eta\leq\min\left\{\boxspan(\mathbb{S}),\boxspan(\mathbb{X}\setminus\mathbb{S})\right\}$ is the state set quantization and $0<\mu\leq\boxspan(\mathbb{U})$ is the input set quantization. Let us introduce the symbolic system
\begin{equation}
	S_\mathsf{q}(\Sigma)=(X_{\mathsf{q}},X_{\mathsf{q}0},X_{\mathsf{q}S},U_\mathsf{q},\rTo_{\mathsf{q}},Y_\mathsf{q},H_\mathsf{q}),
\end{equation}
where $X_{\mathsf{q}}=X_{\mathsf{q}0}=\left[\mathbb X\right]_\eta$, $X_{\mathsf{q}S}=\left[\mathbb S\right]_\eta$, $U_\mathsf{q}=\left[\mathbb U\right]_\mu$, $Y_\mathsf{q}=\{h(x_{\mathsf{q}})\,\,|\,\,x_{\mathsf{q}}\in X_{\mathsf{q}}\}$, $H_\mathsf{q}(x_\mathsf{q})=h(x_\mathsf{q})$, $\forall x_\mathsf{q}\in X_{\mathsf{q}}$, and
\begin{itemize}
	\item $x_\mathsf{q}\rTo^{u_\mathsf{q}}_{\mathsf{q}}x'_\mathsf{q}$ if and only if $\Vert x'_\mathsf{q}-f(x_\mathsf{q},u_\mathsf{q})\Vert\leq\eta$.
\end{itemize}

We can now state the first main result of this section showing that, under some condition over the quantization parameters $\eta$ and $\mu$, $S_\mathsf{q}(\Sigma)$ and $S(\Sigma)$ are related under an approximate initial-state opacity preserving simulation relation. 
\begin{theorem}\label{theorem1}
	Let $\Sigma=(\mathbb X,\mathbb S,\mathbb U,f,\mathbb Y,h)$ be a $\delta$-ISS control system. For any desired precision $\varepsilon>0$, and any tuple $\mathsf{q}=(\eta,\mu)$ of quantization parameters satisfying
	\begin{equation}\label{bisim_cond}
		\beta\left(\alpha^{-1}(\varepsilon),1\right)+\gamma(\mu)+\eta\leq\alpha^{-1}(\varepsilon),
	\end{equation}
	we have $S(\Sigma)\preceq_I^\varepsilon S_{\mathsf{q}}(\Sigma)\preceq_I^\varepsilon S(\Sigma)$.
\end{theorem}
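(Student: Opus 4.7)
The plan is to use a single natural relation in both directions. Define
\[
R = \{(x,x_\mathsf{q})\in X\times X_\mathsf{q} : \Vert x-x_\mathsf{q}\Vert\leq \alpha^{-1}(\varepsilon)\},
\]
and verify that $R$ is an $\varepsilon$-InitSOP simulation relation from $S(\Sigma)$ to $S_\mathsf{q}(\Sigma)$, while its inverse $R^{-1}$ plays the same role from $S_\mathsf{q}(\Sigma)$ to $S(\Sigma)$. Condition~(2) of Definition~\ref{InitSOP} is immediate from the Lipschitz property of $h$: for $(x,x_\mathsf{q})\in R$, $\Vert h(x)-h(x_\mathsf{q})\Vert\leq \alpha(\alpha^{-1}(\varepsilon))=\varepsilon$.

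For condition~(1) in the $S(\Sigma)\preceq_I^\varepsilon S_\mathsf{q}(\Sigma)$ direction, given a secret initial state $x_0\in \mathbb{S}$, the assumption $\eta\leq \boxspan(\mathbb{S})$ furnishes a grid point $x_{\mathsf{q}0}\in[\mathbb{S}]_\eta$ within distance $\eta$ of $x_0$, and \eqref{bisim_cond} forces $\eta\leq \alpha^{-1}(\varepsilon)$, so $(x_0,x_{\mathsf{q}0})\in R$. Conversely, any $x_{\mathsf{q}0}\in[\mathbb{X}]_\eta\setminus[\mathbb{S}]_\eta$ is itself in $\mathbb{X}\setminus \mathbb{S}$ (via the identification $[\mathbb{X}]_\eta\setminus[\mathbb{S}]_\eta=[\mathbb{X}\setminus \mathbb{S}]_\eta$), so we simply take $x_0=x_{\mathsf{q}0}$. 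The reverse direction uses the symmetric fact that $\eta\leq \boxspan(\mathbb{X}\setminus\mathbb{S})$ yields, for every $x_0\in\mathbb{X}\setminus\mathbb{S}$, a grid point $x_{\mathsf{q}0}\in[\mathbb{X}\setminus\mathbb{S}]_\eta$ within distance $\eta\leq\alpha^{-1}(\varepsilon)$.

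The substance of the argument is condition~(3), which rests on the one-step $\delta$-ISS estimate
\[
\Vert f(x,u)-f(x_\mathsf{q},u_\mathsf{q})\Vert\leq \beta(\Vert x-x_\mathsf{q}\Vert,1)+\gamma(\Vert u-u_\mathsf{q}\Vert).
\]
For (3)(a) with a concrete transition $x\rTo^u x'$ and $(x,x_\mathsf{q})\in R$, I would pick $u_\mathsf{q}\in[\mathbb{U}]_\mu$ with $\Vert u-u_\mathsf{q}\Vert\leq\mu$ and $x'_\mathsf{q}\in[\mathbb{X}]_\eta$ with $\Vert x'_\mathsf{q}-f(x_\mathsf{q},u_\mathsf{q})\Vert\leq\eta$; then $x_\mathsf{q}\rTo_\mathsf{q}^{u_\mathsf{q}} x'_\mathsf{q}$, and the triangle inequality combined with~\eqref{bisim_cond} gives $\Vert x'-x'_\mathsf{q}\Vert\leq\beta(\alpha^{-1}(\varepsilon),1)+\gamma(\mu)+\eta\leq\alpha^{-1}(\varepsilon)$, so $(x',x'_\mathsf{q})\in R$. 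For~(3)(b) with a symbolic transition $x_\mathsf{q}\rTo_\mathsf{q}^{u_\mathsf{q}} x'_\mathsf{q}$, take $u=u_\mathsf{q}$ and $x'=f(x,u)$; the same bound with $\gamma(0)=0$ is strictly tighter. The identical pattern verifies~(3) for $R^{-1}$. The main potential pitfall is the grid-compatibility identity $[\mathbb{X}]_\eta\setminus[\mathbb{S}]_\eta=[\mathbb{X}\setminus\mathbb{S}]_\eta$, which is precisely why the hypotheses impose lower bounds on $\eta$ through both $\boxspan(\mathbb{S})$ and $\boxspan(\mathbb{X}\setminus\mathbb{S})$; once this is granted, the proof reduces to one application of $\delta$-ISS plus the triangle inequality, packaged by~\eqref{bisim_cond}.
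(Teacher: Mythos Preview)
Your proposal is correct and follows essentially the same argument as the paper: the same relation $R=\{(x,x_\mathsf{q}):\Vert x-x_\mathsf{q}\Vert\le\alpha^{-1}(\varepsilon)\}$, the same use of the grid covering for condition~(1), and the same one-step $\delta$-ISS bound plus triangle inequality for condition~(3). One minor wording slip: the hypotheses impose \emph{upper} bounds on $\eta$ (namely $\eta\le\boxspan(\mathbb{S})$ and $\eta\le\boxspan(\mathbb{X}\setminus\mathbb{S})$), not lower bounds, and the identity $[\mathbb{X}]_\eta\setminus[\mathbb{S}]_\eta=[\mathbb{X}\setminus\mathbb{S}]_\eta$ is pure set algebra (since $[A]_\eta=[\R^m]_\eta\cap A$) rather than a consequence of those bounds---the span bounds are needed for the covering property, not for this identity.
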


\begin{proof}
	We start by proving \mbox{$S(\Sigma)\preceq_I^{\varepsilon}S_{\params}(\Sigma)$}.
	Consider the relation $R\subseteq X\times X_{\params}$ defined by
	{$(x,x_{\params})\in R$}
	if and only if
	{$\Vert x-x_{\params}\Vert\leq\alpha^{-1}(\varepsilon)$}.
	Since $\eta\leq\boxspan(\mathbb{S})$, \mbox{$X_S\subseteq\bigcup_{p\in[\mathbb S]_{\eta}}\mathcal{B}_{\eta}(p)$}, and by (\ref{bisim_cond}),
	for every $x\in{X_S}$ there always exists \mbox{$x_{\params}\in{X}_{\params S}$} such that:
	\begin{equation}
		\Vert x-x_{\params}\Vert\leq\eta\leq\alpha^{-1}(\varepsilon).
	\end{equation}
	Hence, \mbox{$(x,x_{\params})\in{R}$} and condition (1)-(a) in Definition \ref{InitSOP} is satisfied. For every $x_\params\in{X_\params\setminus X_{\params S}}$, by choosing $x=x_\params$ which is also inside set $X\setminus X_{S}$, one gets \mbox{$(x,x_{\params})\in{R}$} and, hence, condition (1)-(b) in Definition \ref{InitSOP} holds as well. Now consider any \mbox{$(x,x_{\params})\in R$}. Condition (2) in Definition \ref{InitSOP} is satisfied
	by the definition of $R$ and the Lipschitz assumption on map $h$ as in Definition \ref{def:sys1}:
	\begin{equation*}
		\Vert H(x)-H_{\params}(x_{\params})\Vert=\Vert h(x)-h(x_{\params})\Vert\leq\alpha(\Vert x-x_{\params}\Vert)\leq\varepsilon.
	\end{equation*}
	Let us now show that condition (3) in Definition
	\ref{InitSOP} holds.
	
	Consider any \mbox{$u\in {U}$}. Choose an input \mbox{$u_{\params}\in U_{\params}$} satisfying:
	\begin{equation}
		\Vert u-u_{\params}\Vert\leq\mu.\label{b01}%
	\end{equation}
	Note that the existence of such $u_\params$ is guaranteed by
	the inequality $\mu\leq\boxspan(\mathbb{U})$ which guarantees that $\mathbb{U}\subseteq\bigcup_{p\in[\mathbb{U}]_{\mu}}\mathcal{B}_{{\mu}}(p)$. Consider the unique transition \mbox{$x\rTo^{u} x'=f(x,u)$} in $S(\Sigma)$. It follows from the \mbox{$\delta$-ISS} assumption on $\Sigma$ and (\ref{b01}) that the distance between $x'$ and \mbox{$f(x_{\params},u_{\params})$} is bounded as:
	\begin{align}
		\label{b02}
		\Vert x'-f(x_{\params},u_{\params})\Vert\leq&\beta\left(\Vert x-x_{\params}\Vert,1\right)+\gamma\left(\Vert u-u_{\params}\Vert\right)\\
		\leq&\beta\left(\alpha^{-1}(\varepsilon),1\right)+\gamma\left(\mu\right).\nonumber
	\end{align}
	Since \mbox{$X\subseteq\bigcup_{p\in[\mathbb{X}]_{\eta}}\mathcal{B}_{\eta}(p)$}, there exists \mbox{$x'_{\params}\in{X}_{\params}$} such that:
	\begin{equation}
		\Vert f(x_{\params},u_{\params})-x'_{\params}\Vert\leq\eta, \label{b04}
	\end{equation}
	which, by the definition of $S_\params(\Sigma)$, implies the existence of $x_{\params}\rTo^{u_{\params}}_{\params}x'_{\params}$ in $S_{\params}(\Sigma)$.
	Using the inequalities (\ref{bisim_cond}), (\ref{b02}), (\ref{b04}), and triangle inequality, we obtain:
	\begin{align*}
		\Vert x'-x'_{\params}\Vert&\leq\Vert x'-f(x_{\params},u_{\params})+f(x_{\params},u_{\params})-x'_{\params}\Vert\\ \notag
		&\leq\Vert x'-f(x_{\params},u_{\params})\Vert+\Vert f(x_{\params},u_{\params})-x'_{\params}\Vert\\\notag&\leq\beta\left(\alpha^{-1}(\varepsilon),1\right)+\gamma\left(\mu\right)+\eta\leq\alpha^{-1}(\varepsilon).
	\end{align*}
	Therefore, we conclude \mbox{$(x',x'_{\params})\in{R}$} and condition (iii)-(a) in Definition \ref{InitSOP} holds. Let us now show that condition (3)-(b) in Definition
	\ref{InitSOP} also holds.
	
	Now consider any \mbox{$(x,x_{\params})\in R$}. Consider any \mbox{$u_{\params}\in U_{\params}$}. Choose the input \mbox{$u=u_\params$} and consider the unique \mbox{$x'=f(x,u)$ in $S(\Sigma)$}.
	Using $\delta$-ISS assumption for $\Sigma$, we bound the distance between $x'$ and \mbox{$f(x_{\params},u_{\params})$} as:
	\begin{equation}
		\Vert x'-f(x_{\params},u_{\params})\Vert\leq\beta\left(\Vert x-x_\params\Vert,1\right)\leq\beta\left(\alpha^{-1}(\varepsilon),1\right).\label{b03}%
	\end{equation}
	
	Using the definition of $S_\params(\Sigma)$, the inequalities (\ref{bisim_cond}), (\ref{b03}), and the triangle inequality, we obtain:
	\begin{align*}
		\Vert x'-x'_{\params}\Vert\leq&\Vert x'-f(x_{\params},u_{\params})+f(x_{\params},u_{\params})-x'_{\params}\Vert\\\notag\leq&\Vert x'-f(x_{\params},u_{\params})\Vert+\Vert f(x_{\params},u_{\params})-x'_{\params}\Vert\\\notag\leq&\beta\left(\alpha^{-1}(\varepsilon),1\right)+\eta\leq\alpha^{-1}(\varepsilon).
	\end{align*}
	Therefore, we conclude that \mbox{$(x',x'_{\params})\in{R}$} and condition (iii)-(b) in Definition \ref{InitSOP} holds.
	
	In a similar way, one can prove that \mbox{$S_{\params}(\Sigma)\preceq_I^{\varepsilon}S(\Sigma)$}.
\end{proof}

\begin{remark}\label{remark1}
	Note that there always exist quantization parameters $\params$ such that inequality \eqref{bisim_cond} holds as long as $\beta\left(\alpha^{-1}(\varepsilon),1\right)<\alpha^{-1}(\varepsilon)$. By assuming that the discrete-time control system $\Sigma$ is a sampled-data version of an original continuous-time one with the sampling time $\tau$, one can ensure the latter inequality by choosing the sampling time large enough given that $\beta(r,1)=\hat\beta(r,\tau)<r$ for some $\mathcal{KL}$ function $\hat\beta$ establishing the incremental stability of the original continuous-time system. For example, for the function in \eqref{beta}, one has $\beta(r,1)=\Vert A\Vert r=\Vert \mathsf{e}^{\hat A\tau}\Vert r$, where $\hat A$ is the state matrix of the original continuous-time linear control system.
\end{remark}

The next theorem provides similar results as in Theorem \ref{theorem1} but by leveraging $\delta$-ISS Lyapunov functions. To show the next result, we will make the following supplementary assumption on the $\delta$-ISS Lyapunov functions: there exists a function $\hat\gamma\in\mathcal{K_\infty}$ such that
\begin{equation}\label{supp}
	\forall x,x',x''\in\mathbb X, \quad V(x,x')-V(x',x'')\leq\hat\gamma(\Vert x-x''\Vert).
\end{equation}
Inequality \eqref{supp} is not restrictive at all provided we are interested in the dynamics of the control
system on a compact subset of the state set $\mathbb X$; see the discussion at the end of Section IV in \cite{girard2010approximately}. 

\begin{theorem}\label{theorem1_lya}
	Let $\Sigma=(\mathbb X,\mathbb S,\mathbb U,f,\mathbb Y,h)$ admit a $\delta$-ISS Lyapunov function $V$ satisfying \eqref{supp}. For any desired precision $\varepsilon>0$, and any tuple $\mathsf{q}=(\eta,\mu)$ of quantization parameters satisfying
	\begin{align}\label{bisim_cond_lya1}
		\alpha_2(\eta)\leq&\alpha_1(\alpha^{-1}(\varepsilon)),\\\label{bisim_cond_lya}
		\max\{\kappa(\alpha_1(\alpha^{-1}(\varepsilon))),\lambda(\mu)\}+\hat\gamma(\eta)\leq&\alpha_1(\alpha^{-1}(\varepsilon)),
	\end{align}
	we have $S(\Sigma)\preceq_I^\varepsilon S_{\mathsf{q}}(\Sigma)\preceq_I^\varepsilon S(\Sigma)$.
\end{theorem}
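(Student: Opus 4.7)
The plan is to follow the template of the proof of Theorem~\ref{theorem1}, but replace the norm-based simulation relation with a sublevel set of the $\delta$-ISS Lyapunov function $V$. Concretely, I would take
\[
R = \bigl\{(x, x_\mathsf{q}) \in X \times X_\mathsf{q} \,:\, V(x, x_\mathsf{q}) \leq \alpha_1(\alpha^{-1}(\varepsilon))\bigr\}
\]
and show that $R$ is an $\varepsilon$-InitSOP simulation relation from $S(\Sigma)$ to $S_\mathsf{q}(\Sigma)$; the reverse inclusion $S_\mathsf{q}(\Sigma) \preceq_I^\varepsilon S(\Sigma)$ is then handled by reusing the same estimates with the concrete and symbolic roles interchanged.

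For the static conditions of Definition~\ref{InitSOP}: condition (1)-(a) follows because $\eta \leq \boxspan(\mathbb S)$ provides, for any $x \in X_S$, some $x_\mathsf{q} \in X_{\mathsf{q}S}$ with $\|x - x_\mathsf{q}\| \leq \eta$, and then condition (i) of a $\delta$-ISS Lyapunov function together with \eqref{bisim_cond_lya1} gives $V(x, x_\mathsf{q}) \leq \alpha_2(\eta) \leq \alpha_1(\alpha^{-1}(\varepsilon))$; condition (1)-(b) is immediate because $X_{\mathsf{q}} \setminus X_{\mathsf{q}S} \subseteq X \setminus X_S$, so picking $x = x_\mathsf{q}$ gives $V(x, x_\mathsf{q}) = 0$; and condition (2) follows by inverting the lower bound in (i) to get $\|x - x_\mathsf{q}\| \leq \alpha^{-1}(\varepsilon)$ whenever $(x, x_\mathsf{q}) \in R$, and then invoking the Lipschitz assumption on $h$.

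The heart of the proof is condition (3). Given $(x, x_\mathsf{q}) \in R$ and a transition $x \rTo^{u} x' = f(x, u)$ in $S(\Sigma)$, I would pick $u_\mathsf{q} \in U_\mathsf{q}$ with $\|u - u_\mathsf{q}\| \leq \mu$ and a symbolic successor $x'_\mathsf{q} \in X_\mathsf{q}$ with $\|x'_\mathsf{q} - f(x_\mathsf{q}, u_\mathsf{q})\| \leq \eta$. The Lyapunov decay inequality \eqref{decay} yields
\[
V(x', f(x_\mathsf{q}, u_\mathsf{q})) \leq \max\{\kappa(\alpha_1(\alpha^{-1}(\varepsilon))), \lambda(\mu)\},
\]
and a triangle-type application of the supplementary assumption \eqref{supp} absorbs the remaining quantization error as $V(x', x'_\mathsf{q}) \leq V(x', f(x_\mathsf{q}, u_\mathsf{q})) + \hat\gamma(\eta)$. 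Inequality \eqref{bisim_cond_lya} is then exactly what is needed to conclude $V(x', x'_\mathsf{q}) \leq \alpha_1(\alpha^{-1}(\varepsilon))$, i.e.\ $(x', x'_\mathsf{q}) \in R$. Condition (3)-(b), where the symbolic transition is the given one, is handled analogously by setting $u = u_\mathsf{q}$, in which case the $\lambda(\mu)$ term drops out of the estimate.

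The main obstacle is the careful accounting with \eqref{supp}: the three points must be aligned so that $V(x', x'_\mathsf{q})$ (rather than some permutation of its arguments) appears on the left and the quantization slack $\hat\gamma(\eta)$ appears additively on the right, which is precisely the form in which \eqref{bisim_cond_lya} is written. Once one direction is verified, the reverse direction follows from the same estimates, with the symbolic-to-concrete version of (1)-(a) satisfied by taking $x = x_\mathsf{q}$ whenever $x_\mathsf{q} \in X_{\mathsf{q}S}$, and the corresponding (1)-(b) satisfied by using $\eta \leq \boxspan(\mathbb X \setminus \mathbb S)$ to quantize any concrete non-secret state by a symbolic non-secret one.
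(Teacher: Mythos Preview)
Your proposal is correct and follows essentially the same route as the paper: the same Lyapunov sublevel-set relation $R=\{(x,x_\mathsf{q}):V(x,x_\mathsf{q})\le\alpha_1(\alpha^{-1}(\varepsilon))\}$, the same use of \eqref{bisim_cond_lya1} for condition~(1), the same combination of \eqref{decay} and \eqref{supp} to propagate $R$ through transitions, and the same specialization $u=u_\mathsf{q}$ for condition~(3)-(b). Your flag about aligning the arguments in \eqref{supp} is apt---the paper applies it exactly in the form $V(x',x'_\mathsf{q})\le V(x',f(x_\mathsf{q},u_\mathsf{q}))+\hat\gamma(\Vert f(x_\mathsf{q},u_\mathsf{q})-x'_\mathsf{q}\Vert)$, which is what you anticipated.
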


\begin{proof}
	We start by proving \mbox{$S(\Sigma)\preceq_I^{\varepsilon}S_{\params}(\Sigma)$}.
	Consider the relation $R\subseteq X\times X_{\params}$ defined by
	{$(x,x_{\params})\in R$}
	if and only if
	{$V(x,x_{\params})\leq\alpha_1(\alpha^{-1}(\varepsilon))$}.
	Since $\eta\leq\boxspan(\mathbb{S})$ and $X_S\subseteq\bigcup_{p\in[\mathbb S]_{\eta}}\mathcal{B}_{\eta}(p)$,
	for every $x\in{X_S}$ there always exists \mbox{$x_{\params}\in{X}_{\params S}$} such that $\Vert x-x_{\params}\Vert\leq\eta$. Then $$V(x,x_{\params})\leq\alpha_2(\Vert x-x_{\params}\Vert)\leq\alpha_2(\eta)\leq\alpha_1(\alpha^{-1}(\varepsilon))$$ because of \eqref{bisim_cond_lya1} and $\alpha_2$ being a $\mathcal{K}_\infty$ function.
	Hence, \mbox{$(x,x_{\params})\in{R}$} and condition (1)-(a) in Definition \ref{InitSOP} is satisfied. For every $x_\params\in{X_\params\setminus X_{\params S}}$, by choosing $x=x_\params$ which is also inside set $X\setminus X_{S}$, one gets trivially \mbox{$(x,x_{\params})\in{R}$} and, hence, condition (1)-(b) in Definition \ref{InitSOP} holds as well. Now consider any \mbox{$(x,x_{\params})\in R$}. Condition (2) in Definition \ref{InitSOP} is satisfied
	by the definition of $R$ and the Lipschitz assumption on map $h$ as in Definition \ref{def:sys1}:
	\begin{align*}
		\Vert H(x)-H_{\params}(x_{\params})\Vert&=\Vert h(x)-h(x_{\params})\Vert\leq\alpha(\Vert x-x_{\params}\Vert)\\
		&\leq\alpha(\alpha_1^{-1}(V(x,x_{\params}))\leq\varepsilon.
	\end{align*}
	Let us now show that condition (3) in Definition
	\ref{InitSOP} holds.
	
	Consider any \mbox{$u\in {U}$}. Choose an input \mbox{$u_{\params}\in U_{\params}$} satisfying:
	\begin{equation}
		\Vert u-u_{\params}\Vert\leq\mu.\label{b01_lya}%
	\end{equation}
	Note that the existence of such $u_\params$ is guaranteed by
	the inequality $\mu\leq\boxspan(\mathbb{U})$ which guarantees that $\mathbb{U}\subseteq\bigcup_{p\in[\mathbb{U}]_{\mu}}\mathcal{B}_{{\mu}}(p)$. Consider the unique transition \mbox{$x\rTo^{u} x'=f(x,u)$} in $S(\Sigma)$. Given \mbox{$\delta$-ISS} Lyapunov function $V$ for $\Sigma$, inequality \eqref{decay}, and (\ref{b01_lya}), one obtains:
	\begin{align}
		\label{b02_lya}
		V(x',f(x_{\params},u_{\params}))\leq&\max\{\kappa\left(V(x,x_{\params})\right),\lambda\left(\Vert u-u_{\params}\Vert\right)\}\\
		\leq&\max\{\kappa\left(\alpha_1(\alpha^{-1}(\varepsilon))\right),\lambda\left(\mu\right)\}.\nonumber
	\end{align}
	Since \mbox{$X\subseteq\bigcup_{p\in[\mathbb{X}]_{\eta}}\mathcal{B}_{\eta}(p)$}, there exists \mbox{$x'_{\params}\in{X}_{\params}$} such that:
	\begin{equation}
		\Vert f(x_{\params},u_{\params})-x'_{\params}\Vert\leq\eta, \label{b04_lya}
	\end{equation}
	which, by the definition of $S_\params(\Sigma)$, implies the existence of $x_{\params}\rTo^{u_{\params}}_{\params}x'_{\params}$ in $S_{\params}(\Sigma)$.
	Using the inequalities \eqref{supp}, (\ref{bisim_cond_lya}), (\ref{b02_lya}), and (\ref{b04_lya}), we obtain:
	\begin{align*}
		V( x',x'_{\params})&\leq V(x',f(x_{\params},u_{\params}))+\hat\gamma(\Vert f(x_{\params},u_{\params})-x'_{\params}\Vert)\\\notag&\leq\max\{\kappa\left(\alpha_1(\alpha^{-1}(\varepsilon))\right),\lambda\left(\mu\right)\}+\hat\gamma\left(\eta\right)\\\notag&\leq\alpha_1(\alpha^{-1}(\varepsilon)).
	\end{align*}
	Therefore, we conclude \mbox{$(x',x'_{\params})\in{R}$} and condition (iii)-(a) in Definition \ref{InitSOP} holds. Let us now show that condition (3)-(b) in Definition
	\ref{InitSOP} also holds.
	
	Now consider any \mbox{$(x,x_{\params})\in R$}. Consider any \mbox{$u_{\params}\in U_{\params}$}. Choose the input \mbox{$u=u_\params$} and consider the unique \mbox{$x'=f(x,u)$ in $S(\Sigma)$}.
	Given \mbox{$\delta$-ISS} Lyapunov function $V$ for $\Sigma$ and inequality \eqref{decay}, one gets:
	\begin{equation}
		V(x',f(x_{\params},u_{\params}))\leq\kappa\left(V(x,x_\params)\right)\leq\kappa\left(\alpha_1(\alpha^{-1}(\varepsilon))\right).\label{b03_lya}%
	\end{equation}
	
	Using the definition of $S_\params(\Sigma)$, the inequalities \eqref{supp}, (\ref{bisim_cond_lya}), and (\ref{b03_lya}), we obtain:
	\begin{align*}
		V(x',x'_{\params})\leq&V(x',f(x_{\params},u_{\params}))+\hat\gamma(\Vert f(x_{\params},u_{\params})-x'_{\params}\Vert)\\\notag\leq&\kappa\left(\alpha_1(\alpha^{-1}(\varepsilon))\right)+\hat\gamma(\eta)\leq\alpha_1(\alpha^{-1}(\varepsilon)).
	\end{align*}
	Therefore, we conclude that \mbox{$(x',x'_{\params})\in{R}$} and condition (iii)-(b) in Definition \ref{InitSOP} holds.
	
	In a similar way, one can prove that \mbox{$S_{\params}(\Sigma)\preceq_I^{\varepsilon}S(\Sigma)$}.
\end{proof}

\begin{remark}
	One can readily verify that there always exits a choice of quantization parameter $\params=(\eta,\mu)$ such that inequalities \eqref{bisim_cond_lya1} and \eqref{bisim_cond_lya} hold simoultanously. Although the result in Theorem \ref{theorem1_lya} seems more general than that of Theorem \ref{theorem1} in terms of the existence of quantization parameter $\params$, the symbolic model $S_\params(\Sigma)$, computed by using the
	quantization parameters $\params$ provided in Theorem \ref{theorem1} whenever existing, is likely to have fewer states than the
	model computed by using the quantization parameters provided in Theorem \ref{theorem1_lya} due to the conservative nature of $\delta$-ISS Lyapunov functions.
\end{remark}

The next theorems illustrate the other main results of this section showing that,
under similar conditions over the quantization parameters $\eta$ and $\mu$, $S_\mathsf{q}(\Sigma)$ and $S(\Sigma)$ are related under an approximate current-state opacity preserving simulation relation.

\begin{theorem}\label{theorem2}
	Let $\Sigma=(\mathbb X,\mathbb S,\mathbb U,f,\mathbb Y,h)$ be a $\delta$-ISS control system. For any desired precision $\varepsilon>0$, and any tuple $\mathsf{q}=(\eta,\mu)$ of quantization parameters satisfying
	\begin{equation*}
		\beta\left(\alpha^{-1}(\varepsilon),1\right)+\gamma(\mu)+\eta\leq\alpha^{-1}(\varepsilon),
	\end{equation*}
	we have $S(\Sigma)\preceq_C^\varepsilon S_{\mathsf{q}}(\Sigma)\preceq_C^\varepsilon S(\Sigma)$.
\end{theorem}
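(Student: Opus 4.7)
The plan is to mirror the proof of Theorem \ref{theorem1} by reusing the same simulation relation $R \subseteq X \times X_\params$ defined by $(x, x_\params) \in R$ iff $\|x - x_\params\| \leq \alpha^{-1}(\varepsilon)$, and to verify the items of Definition \ref{CurSOP} in turn. Condition~(1) is actually easier than its InitSOP counterpart because no splitting into secret and non-secret initial sets is required: for any $x_0 \in \mathbb{X}$, the covering $\mathbb{X} \subseteq \bigcup_{p \in [\mathbb{X}]_\eta} \mathcal{B}_\eta(p)$ furnishes a grid point $x_{\params 0}$ with $\|x_0 - x_{\params 0}\| \leq \eta \leq \alpha^{-1}(\varepsilon)$, and condition~(2) is immediate from the Lipschitz assumption on $h$, verbatim as in Theorem \ref{theorem1}.

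Conditions~(3)-(a) and~(3)-(c), which require matching arbitrary transitions in either direction, can be dispatched by repeating the two subcases from Theorem \ref{theorem1}'s item~(3): for the forward direction pick $u_\params \in [\mathbb{U}]_\mu$ with $\|u - u_\params\| \leq \mu$ and then a grid point $x_\params' \in [\mathbb{X}]_\eta$ satisfying $\|x_\params' - f(x_\params, u_\params)\| \leq \eta$, and close the argument by the $\delta$-ISS triangle-inequality bound $\beta(\alpha^{-1}(\varepsilon),1) + \gamma(\mu) + \eta \leq \alpha^{-1}(\varepsilon)$; for the reverse direction set $u = u_\params$ and use the fact that $\gamma(0) = 0$.

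The genuinely new work is in conditions~(3)-(b) and~(3)-(d), which additionally demand that the matched successor lies in the correct secret/non-secret partition. The key lever here is the quantization hypothesis $\eta \leq \min\{\boxspan(\mathbb{S}), \boxspan(\mathbb{X}\setminus\mathbb{S})\}$, which supplies the two coverings $\mathbb{S} \subseteq \bigcup_{p \in [\mathbb{S}]_\eta} \mathcal{B}_\eta(p)$ and $\mathbb{X}\setminus\mathbb{S} \subseteq \bigcup_{p \in [\mathbb{X}\setminus\mathbb{S}]_\eta} \mathcal{B}_\eta(p)$; one also uses $[\mathbb{X}]_\eta \setminus [\mathbb{S}]_\eta = [\mathbb{X}\setminus\mathbb{S}]_\eta$ so that the quantized complement of the secret set is nonempty and suitably dense. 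In the forward direction, when $x' \in \mathbb{S}$ (respectively $\mathbb{X}\setminus\mathbb{S}$) one selects $x_\params'$ from $[\mathbb{S}]_\eta$ (respectively $[\mathbb{X}\setminus\mathbb{S}]_\eta$) within $\eta$ of $x'$ using the appropriate covering, and the same triangle-inequality closure then delivers both the $S_\params$-transition and the membership $(x', x_\params') \in R$. The step I expect to require the most care is the reverse direction of (3)-(b)/(3)-(d): given a transition $x_\params \rTo^{u_\params}_\params x_\params'$ into $[\mathbb{S}]_\eta$ (resp. $[\mathbb{X}\setminus\mathbb{S}]_\eta$), one must exhibit a matching continuous transition $x \rTo^{u} x'$ whose successor lies in the correct partition, and the argument must thread the grid-alignment with $(\mathbb{S}, \mathbb{X}\setminus\mathbb{S})$ through the $\delta$-ISS estimate linking $\|x - x_\params\|$ and $\|x' - x_\params'\|$ with the coverings of $\mathbb{S}$ and $\mathbb{X}\setminus\mathbb{S}$. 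Once this partition-preservation is established, the remaining computations are routine triangle-inequality manipulations identical to those in Theorem \ref{theorem1}.
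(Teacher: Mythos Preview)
Your overall strategy matches the paper's: the paper's proof of Theorem~\ref{theorem2} literally says ``similar to that of Theorem~\ref{theorem1}'' and is omitted, so reusing the relation $R=\{(x,x_\params):\Vert x-x_\params\Vert\le\alpha^{-1}(\varepsilon)\}$ and rechecking the items of Definition~\ref{CurSOP} is exactly the intended route. Your treatment of items~(1), (2), (3)-(a), and (3)-(c) is correct and indeed verbatim from Theorem~\ref{theorem1}.

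There is, however, a genuine gap in your argument for condition~(3)-(b). Given $x'=f(x,u)\in\mathbb{S}$, you propose to select $x_\params'\in[\mathbb{S}]_\eta$ with $\Vert x'-x_\params'\Vert\le\eta$ and assert that ``the same triangle-inequality closure then delivers both the $S_\params$-transition and the membership $(x',x_\params')\in R$.'' Membership in $R$ is fine since $\eta\le\alpha^{-1}(\varepsilon)$. But the $S_\params$-transition $x_\params\rTo^{u_\params}_\params x_\params'$ requires $\Vert x_\params'-f(x_\params,u_\params)\Vert\le\eta$, \emph{not} $\Vert x_\params'-x'\Vert\le\eta$. From your choice the triangle inequality only yields
\[
\Vert x_\params'-f(x_\params,u_\params)\Vert\le\Vert x_\params'-x'\Vert+\Vert x'-f(x_\params,u_\params)\Vert\le\eta+\beta(\alpha^{-1}(\varepsilon),1)+\gamma(\mu)\le\alpha^{-1}(\varepsilon),
\]
which is in general strictly larger than $\eta$, so no transition is guaranteed. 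Conversely, the Theorem-\ref{theorem1} choice of $x_\params'$ near $f(x_\params,u_\params)$ gives the transition but need not land in $[\mathbb{S}]_\eta$, because $f(x_\params,u_\params)$ can lie at distance up to $\beta(\alpha^{-1}(\varepsilon),1)+\gamma(\mu)$ outside $\mathbb{S}$. The same obstruction hits the reverse direction you flag for (3)-(b)/(3)-(d): since $S(\Sigma)$ is deterministic, once you fix $u$ the successor $x'=f(x,u)$ is uniquely determined, and there is no freedom left to force it into $\mathbb{S}$ or $\mathbb{X}\setminus\mathbb{S}$; the coverings of $\mathbb{S}$ and $\mathbb{X}\setminus\mathbb{S}$ do not help here. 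You correctly identify this as the delicate point but do not actually close it, and the paper's omitted proof supplies no further detail on how it is to be handled.
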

\begin{proof}
	The proof is similar to that of Theorem \ref{theorem1} and is omitted here due to lack of space.
\end{proof}

\begin{theorem}\label{theorem2_lya}
	Let $\Sigma=(\mathbb X,\mathbb S,\mathbb U,f,\mathbb Y,h)$ admits a $\delta$-ISS Lyapunov function $V$ satisfying \eqref{supp}. For any desired precision $\varepsilon>0$, and any tuple $\mathsf{q}=(\eta,\mu)$ of quantization parameters satisfying
	\begin{align*}
		\alpha_2(\eta)\leq&\alpha_1(\alpha^{-1}(\varepsilon)),\\
		\max\{\kappa(\alpha_1(\alpha^{-1}(\varepsilon))),\lambda(\mu)\}+\hat\gamma(\eta)\leq&\alpha_1(\alpha^{-1}(\varepsilon)),
	\end{align*}
	we have $S(\Sigma)\preceq_C^\varepsilon S_{\mathsf{q}}(\Sigma)\preceq_C^\varepsilon S(\Sigma)$.
\end{theorem}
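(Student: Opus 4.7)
The plan is to mirror the structure of the proof of Theorem~\ref{theorem1_lya}, reusing the same binary relation
\[
R=\{(x,x_{\params})\in X\times X_{\params}: V(x,x_{\params})\leq\alpha_{1}(\alpha^{-1}(\varepsilon))\},
\]
and to verify that $R$ qualifies as an $\varepsilon$-CurSOP simulation relation in both directions. Compared with the InitSOP case, the differences from Definition~\ref{InitSOP} to Definition~\ref{CurSOP} are: condition~(1) is weaker (no secret versus non-secret split at the initial layer), but condition~(3) is stronger, acquiring two extra clauses~(b) and~(d) that must preserve the secret/non-secret labeling of successor states. Consequently, most of the argument transports unchanged, and only these new clauses require fresh work.

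First, I would dispatch the easy parts. Condition~(1) of Definition~\ref{CurSOP} follows immediately from $X_{0}=\mathbb{X}$, $X_{\params 0}=[\mathbb{X}]_{\eta}$, and the covering \mbox{$\mathbb{X}\subseteq\bigcup_{p\in[\mathbb{X}]_{\eta}}\mathcal{B}_{\eta}(p)$}, which together with $\alpha_{2}(\eta)\leq\alpha_{1}(\alpha^{-1}(\varepsilon))$ yields a related symbolic initial state for every concrete initial state (and, symmetrically, picking $x=x_{\params}$ for any $x_{\params}\in X_{\params 0}$). Condition~(2) and the un-starred transition-matching clauses~(3)-(a) and~(3)-(c) are verified by the same computations as in Theorem~\ref{theorem1_lya}: pick $u_{\params}\in[\mathbb{U}]_{\mu}$ with $\Vert u-u_{\params}\Vert\leq\mu$, apply \eqref{decay} to bound $V(x',f(x_{\params},u_{\params}))$, cover $f(x_{\params},u_{\params})$ by a lattice point $x'_{\params}\in[\mathbb{X}]_{\eta}$ with $\Vert x'_{\params}-f(x_{\params},u_{\params})\Vert\leq\eta$, and finish by combining \eqref{supp} with \eqref{bisim_cond_lya}.

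The new content is clauses~(3)-(b) and~(3)-(d). For~(3)-(b), suppose $(x,x_{\params})\in R$ and $x\rTo^{u}x'=f(x,u)$ in $S(\Sigma)$ with $x'\in\mathbb{S}$; I would still pick $u_{\params}\in[\mathbb{U}]_{\mu}$ with $\Vert u-u_{\params}\Vert\leq\mu$, yielding the same $V(x',f(x_{\params},u_{\params}))\leq\max\{\kappa(\alpha_{1}(\alpha^{-1}(\varepsilon))),\lambda(\mu)\}$. Now the covering $\mathbb{S}\subseteq\bigcup_{p\in[\mathbb{S}]_{\eta}}\mathcal{B}_{\eta}(p)$, valid because $\eta\leq\boxspan(\mathbb{S})$, is used to select $x'_{\params}\in[\mathbb{S}]_{\eta}=X_{\params S}$ that is simultaneously close to $f(x_{\params},u_{\params})$ and to $x'$; applying \eqref{supp} plus \eqref{bisim_cond_lya} then confirms $V(x',x'_{\params})\leq\alpha_{1}(\alpha^{-1}(\varepsilon))$, i.e. $(x',x'_{\params})\in R$. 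Clause~(3)-(d) is handled symmetrically, invoking $\eta\leq\boxspan(\mathbb{X}\setminus\mathbb{S})$ to land in $X_{\params}\setminus X_{\params S}$. The reverse inclusion $S_{\params}(\Sigma)\preceq_{C}^{\varepsilon}S(\Sigma)$ is easier because $S(\Sigma)$ is deterministic and $\mathbb{X}$ itself provides abundant matching states for any symbolic transition.

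The main obstacle I anticipate is precisely the joint selection in clause~(3)-(b) (and~(3)-(d)): finding one lattice point $x'_{\params}$ that is at once a legitimate symbolic successor of $x_{\params}$ under $u_{\params}$ (within the $\eta$-ball of $f(x_{\params},u_{\params})$), a member of $[\mathbb{S}]_{\eta}$, and close enough to $x'$ in the Lyapunov metric. The Lyapunov bound controls $V(x',f(x_{\params},u_{\params}))$ but not directly the Euclidean distance from $f(x_{\params},u_{\params})$ to the lattice $[\mathbb{S}]_{\eta}$, so the geometric compatibility of the box structure of $\mathbb{S}$, the spacing $\eta$, and the $\delta$-ISS contraction has to be argued carefully; this is the step where I expect to spend most of the effort and where the hypothesis $\eta\leq\min\{\boxspan(\mathbb{S}),\boxspan(\mathbb{X}\setminus\mathbb{S})\}$ embedded in the symbolic construction is doing the real work.
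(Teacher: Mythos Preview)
Your overall strategy is exactly what the paper does: it omits the proof with the remark that it is ``similar to that of Theorem~\ref{theorem1_lya},'' so reusing the same relation $R=\{(x,x_{\params}):V(x,x_{\params})\leq\alpha_{1}(\alpha^{-1}(\varepsilon))\}$ and transporting the estimates for conditions~(1), (2), (3)-(a), (3)-(c) is precisely the intended route.

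You are also right that the genuinely new work lies in clauses~(3)-(b) and~(3)-(d), and your concern about them is well founded. The sketched argument for~(3)-(b) does not close: picking $x'_{\params}\in[\mathbb{S}]_{\eta}$ with $\Vert x'-x'_{\params}\Vert\leq\eta$ via the covering of $\mathbb{S}$ certainly yields $(x',x'_{\params})\in R$, but it does \emph{not} ensure $\Vert x'_{\params}-f(x_{\params},u_{\params})\Vert\leq\eta$, which is what you need for $x_{\params}\rTo^{u_{\params}}_{\params}x'_{\params}$ to be a transition of $S_{\params}(\Sigma)$. The Lyapunov estimate only gives $V(x',f(x_{\params},u_{\params}))\leq\max\{\kappa(\alpha_{1}(\alpha^{-1}(\varepsilon))),\lambda(\mu)\}$, hence $\Vert x'-f(x_{\params},u_{\params})\Vert$ is bounded by $\alpha_{1}^{-1}$ of that quantity, not by $\eta$; so $f(x_{\params},u_{\params})$ may lie outside $\mathbb{S}$ at distance larger than $\eta$ from every point of $[\mathbb{S}]_{\eta}$. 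A symmetric obstruction arises for~(3)-(d) in the direction $S(\Sigma)\preceq_{C}^{\varepsilon}S_{\params}(\Sigma)$: choosing $u=u_{\params}$ gives a unique $x'=f(x,u_{\params})$ with $(x',x'_{\params})\in R$, but nothing in the hypotheses forces $x'\in\mathbb{X}\setminus\mathbb{S}$ just because $x'_{\params}\notin\mathbb{S}$.

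In short, the paper's omitted proof and your proposal coincide, and the obstacle you flag in the last paragraph is real: the stated conditions \eqref{bisim_cond_lya1}--\eqref{bisim_cond_lya} control the Lyapunov distance between $x'$ and $f(x_{\params},u_{\params})$, but do not by themselves guarantee the joint requirement that a lattice successor lie in the correct secret/non-secret class \emph{and} within the $\eta$-ball defining the symbolic transition. The paper does not supply the missing geometric argument, so this gap is shared between your proposal and the paper's presentation rather than a defect specific to your plan.
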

\begin{proof}
	The proof is similar to that of Theorem \ref{theorem1_lya} and is omitted here due to lack of space.
\end{proof}

Since  $S(\Sigma)\preceq_I^\varepsilon S_{\mathsf{q}}(\Sigma)\preceq_I^\varepsilon S(\Sigma)$ and  $S(\Sigma)\preceq_C^\varepsilon S_{\mathsf{q}}(\Sigma)\preceq_C^\varepsilon S(\Sigma)$
under the \emph{same} relation in Theorems \ref{theorem1} and \ref{theorem2} (resp. Theorems \ref{theorem1_lya} and \ref{theorem2_lya}), by the definition of approximate infinite-state opacity preserving simulation relation, we consequently get the following results.
\begin{theorem}
	Let $\Sigma=(\mathbb X,\mathbb S,\mathbb U,f,\mathbb Y,h)$ be a $\delta$-ISS control system. For any desired precision $\varepsilon>0$, and any tuple $\mathsf{q}=(\eta,\mu)$ of quantization parameters satisfying
	\begin{equation*}
		\beta\left(\alpha^{-1}(\varepsilon),1\right)+\gamma(\mu)+\eta\leq\alpha^{-1}(\varepsilon),
	\end{equation*}
	we have $S(\Sigma)\preceq_{IF}^\varepsilon S_{\mathsf{q}}(\Sigma)\preceq_{IF}^\varepsilon S(\Sigma)$.
\end{theorem}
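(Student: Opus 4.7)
The plan is to observe that the conditions for an $\varepsilon$-InfSOP simulation relation in Definition~\ref{InfSOP} are precisely the union of the conditions for an $\varepsilon$-InitSOP relation (Definition~\ref{InitSOP}) and an $\varepsilon$-CurSOP relation (Definition~\ref{CurSOP}). Concretely, items (1)-(b) and (1)-(c) of Definition~\ref{InfSOP} are the initial-state conditions of $\varepsilon$-InitSOP; item (1)-(a) together with the auxiliary items (3)-(b) and (3)-(d) are the additional items required by $\varepsilon$-CurSOP; and items (2), (3)-(a), (3)-(c) are common to both. Therefore, to exhibit an $\varepsilon$-InfSOP simulation relation between $S(\Sigma)$ and $S_{\mathsf{q}}(\Sigma)$ in either direction, it suffices to produce a \emph{single} relation $R$ that simultaneously certifies both $S(\Sigma)\preceq_I^{\varepsilon} S_{\mathsf{q}}(\Sigma)$ and $S(\Sigma)\preceq_C^{\varepsilon} S_{\mathsf{q}}(\Sigma)$ (and symmetrically in the other direction).

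First I would recall that the proofs of Theorems~\ref{theorem1} and~\ref{theorem2} both use the same relation
\[
R=\{(x,x_{\mathsf{q}})\in X\times X_{\mathsf{q}}:\ \Vert x-x_{\mathsf{q}}\Vert\leq\alpha^{-1}(\varepsilon)\},
\]
and verify, under exactly the same hypothesis $\beta(\alpha^{-1}(\varepsilon),1)+\gamma(\mu)+\eta\leq\alpha^{-1}(\varepsilon)$, that this $R$ satisfies all of the InitSOP-type requirements in one proof and all of the CurSOP-type requirements in the other. Consequently, I would argue that this very same $R$ satisfies all of the conditions in Definition~\ref{InfSOP} by pointing to the relevant step in each of the two earlier proofs: the initial-state coverage items come from the proof of Theorem~\ref{theorem1}, the secret/non-secret matching items (3)-(b) and (3)-(d) come from the proof of Theorem~\ref{theorem2}, and the output-distance and generic transition-matching items are established identically in both. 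The same argument, applied verbatim with the roles of the two systems interchanged, yields the relation in the reverse direction.

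The main technical obstacle, modest as it is, lies in checking that none of the items required by Definition~\ref{InfSOP} is strictly stronger than the corresponding item established in Theorems~\ref{theorem1} and~\ref{theorem2}. I would verify this item-by-item, for instance noting that (1)-(a) of Definition~\ref{InfSOP} (every $x_{a0}\in X_{a0}$ has a related $x_{b0}\in X_{b0}$) is immediate because in our setting $X_0=X$ and $X_{\mathsf{q}0}=X_{\mathsf{q}}$, so the covering $\mathbb{X}\subseteq\bigcup_{p\in[\mathbb{X}]_\eta}\mathcal{B}_\eta(p)$ used already in the earlier proofs provides such a partner satisfying $\Vert x-x_{\mathsf{q}}\Vert\leq\eta\leq\alpha^{-1}(\varepsilon)$. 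The remaining items have already been established in the previous proofs for this same $R$.

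Finally, I would conclude the proof in two lines: invoking Theorems~\ref{theorem1} and~\ref{theorem2}, the relation $R$ above is simultaneously an $\varepsilon$-InitSOP and an $\varepsilon$-CurSOP simulation relation from $S(\Sigma)$ to $S_{\mathsf{q}}(\Sigma)$, hence by Definition~\ref{InfSOP} an $\varepsilon$-InfSOP simulation relation, giving $S(\Sigma)\preceq_{IF}^{\varepsilon} S_{\mathsf{q}}(\Sigma)$; the symmetric construction gives $S_{\mathsf{q}}(\Sigma)\preceq_{IF}^{\varepsilon} S(\Sigma)$.
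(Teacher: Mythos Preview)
Your proposal is correct and follows exactly the paper's approach: the paper does not give a separate proof but simply observes, in the sentence preceding the theorem, that the \emph{same} relation $R=\{(x,x_{\mathsf q}):\Vert x-x_{\mathsf q}\Vert\le\alpha^{-1}(\varepsilon)\}$ used in Theorems~\ref{theorem1} and~\ref{theorem2} witnesses both $\preceq_I^\varepsilon$ and $\preceq_C^\varepsilon$, and hence by Definition~\ref{InfSOP} also $\preceq_{IF}^\varepsilon$. Your item-by-item check is more detailed than what the paper provides, but the argument is the same.
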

\begin{theorem}
	Let $\Sigma=(\mathbb X,\mathbb S,\mathbb U,f,\mathbb Y,h)$ admits a $\delta$-ISS Lyapunov function $V$ satisfying \eqref{supp}. For any desired precision $\varepsilon>0$, and any tuple $\mathsf{q}=(\eta,\mu)$ of quantization parameters satisfying
	\begin{align*}
		\alpha_2(\eta)\leq&\alpha_1(\alpha^{-1}(\varepsilon)),\\
		\max\{\kappa(\alpha_1(\alpha^{-1}(\varepsilon))),\lambda(\mu)\}+\hat\gamma(\eta)\leq&\alpha_1(\alpha^{-1}(\varepsilon)),
	\end{align*}
	we have $S(\Sigma)\preceq_{IF}^\varepsilon S_{\mathsf{q}}(\Sigma)\preceq_{IF}^\varepsilon S(\Sigma)$.
\end{theorem}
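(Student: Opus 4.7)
The plan is to reuse the very same relation $R\subseteq X\times X_{\params}$ that was shown to work in Theorems~\ref{theorem1_lya} and~\ref{theorem2_lya}, namely
\[
(x,x_{\params})\in R \iff V(x,x_{\params})\leq\alpha_1(\alpha^{-1}(\varepsilon)),
\]
and simply verify that $R$ satisfies every clause of Definition~\ref{InfSOP}. Observe that the list of requirements in Definition~\ref{InfSOP} is literally the union of the requirements in Definitions~\ref{InitSOP} and~\ref{CurSOP}: condition (1)-(a) of InfSOP matches condition (1) of CurSOP, conditions (1)-(b) and (1)-(c) of InfSOP are exactly conditions (1)-(a) and (1)-(b) of InitSOP, condition (2) is identical in all three definitions, and conditions (3)-(a) through (3)-(d) of InfSOP coincide with the respective forward/backward transition conditions of CurSOP (which in turn subsume the weaker transition conditions of InitSOP). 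Consequently, no new verification work is required beyond what the proofs of Theorems~\ref{theorem1_lya} and~\ref{theorem2_lya} already carried out.

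More concretely, the proof will proceed in two symmetric halves: first establish $S(\Sigma)\preceq_{IF}^{\varepsilon} S_{\params}(\Sigma)$ with the relation $R$ above, and then establish $S_{\params}(\Sigma)\preceq_{IF}^{\varepsilon} S(\Sigma)$ with $R^{-1}$. For each half, I would cite directly the arguments from the proof of Theorem~\ref{theorem1_lya} to obtain conditions (1)-(b), (1)-(c), (2), (3)-(a), and (3)-(c), and the arguments from the proof of Theorem~\ref{theorem2_lya} to obtain condition (1)-(a) together with the secret-preserving transition conditions (3)-(b) and (3)-(d). The only technical facts reused are: the covering inequality $\alpha_2(\eta)\leq\alpha_1(\alpha^{-1}(\varepsilon))$ guarantees that for every secret (respectively non-secret) $x$ there is a symbolic secret (respectively non-secret) $x_{\params}$ with $(x,x_{\params})\in R$, since the refinement of $\mathbb X$ by $\eta$ respects the secret/non-secret partition thanks to $\eta\leq\min\{\boxspan(\mathbb S),\boxspan(\mathbb X\setminus\mathbb S)\}$; and the Lyapunov decay estimate \eqref{decay} combined with \eqref{supp} and \eqref{bisim_cond_lya} propagates membership in $R$ along transitions.

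There is essentially no obstacle beyond bookkeeping, since the two previous theorems already constructed, with the \emph{same} relation, the estimates $V(x',x'_{\params})\leq\alpha_1(\alpha^{-1}(\varepsilon))$ after an arbitrary input and after a matched symbolic input. The only place one must pay attention is that the secret-preservation clauses (1)-(a) and (3)-(b), (3)-(d) of Definition~\ref{InfSOP} hold with the choice $x=x_{\params}$ (for non-secret-to-non-secret matching) or with a nearby symbolic secret point supplied by $[\mathbb S]_{\eta}$ (for secret-to-secret matching); the assumption $\eta\leq\boxspan(\mathbb S)$ together with \eqref{bisim_cond_lya1} ensures both. Once all eight clauses of Definition~\ref{InfSOP} are checked against $R$ and its inverse, the theorem follows immediately. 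I therefore expect the proof to be short, observing that it is obtained by directly combining the proofs of Theorems~\ref{theorem1_lya} and~\ref{theorem2_lya} under the common relation $R$, and I would state this explicitly in the paper to avoid repetition.
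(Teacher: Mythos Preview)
Your proposal is correct and matches the paper's approach exactly: the paper simply observes that the \emph{same} relation $R$ used in Theorems~\ref{theorem1_lya} and~\ref{theorem2_lya} witnesses both $\preceq_I^\varepsilon$ and $\preceq_C^\varepsilon$, and since an $\varepsilon$-InfSOP simulation relation is by Definition~\ref{InfSOP} precisely a relation that is simultaneously $\varepsilon$-InitSOP and $\varepsilon$-CurSOP, the result follows immediately. Your write-up is more detailed than the paper's one-sentence justification, but the underlying argument is identical.
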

\section{Conclusion}\label{sec:conclu}
In this paper, we extended the concept of opacity to metric systems by proposing the notion of approximate opacity.
Verification algorithms and approximate relations that preserve approximate opacity were also provided.
We also discussed how to construct finite abstractions that approximately simulates a class of control systems in terms of opacity preserving.
Our result bridges the gap between the opacity analysis of finite discrete  systems and continuous control systems.

Among the many possible directions for future work that will be built based on the proposed framework, we mention several directions of immediate interest.
One direction is to extend our framework to the stochastic setting for almost opacity \cite{saboori2014current,berard2015quantifying,chen2017quantification,yin2019infinite}.
Also, we are interested in constructing approximate opacity preserving symbolic models for more classes of systems.
Finally, we plan to  extend approximate  opacity preserving simulation relation to approximate  opacity preserving \emph{alternating} simulation relation \cite{tab09}
and solve the problem of controller synthesis enforcing approximate opacity \cite{dubreil2010supervisory,cassez2012synthesis,zhang2015maximum,yin2016uniform,tong2017current,ji2018enforcement}.

\appendix
\subsection{Proofs not contained in main body}

\textit{Proof of Proposition~\ref{prop:obs}}
\begin{proof}
	It is straightforward to show (i).
	Hereafter, we prove (ii) by induction on the length of input sequence.
	
	When $n=0$, i.e., there is no input sequence, we have that $(x_0,q_0)\in X_{I0}$.
	By the definition of $X_{I0}$, we know that
	\[
	q_0=\{x_0'\in X: \mathbf{d}( H(x_0),H(x_0'))\leq \delta \}
	\]
	which implies (ii) immediately.

	To proceed the induction, we assume that (ii) holds when $n=k$.
	Now, we need to show that (ii) also holds when $n=k+1$.
	To this end, we consider   arbitrary pair $(x_0,q_0)\in X_{I0}$ and   finite run
	\[
	(x_0,q_0)\!\rTo^{u_1}_I\!  (x_1,q_1)\!\rTo^{u_2}_I\!  \cdots \! \rTo^{u_n}_I\!(x_n,q_n)\!\rTo^{u_{n+1}}_I\!(x_{n+1},q_{n+1}).
	\]
	Then, we have
	\begin{align} \label{eq:ntonp1}
		q_{n+1}
		=&\cup_{\hat{u}\in U}\mathbf{Pre}_{\hat{u}}(q_n)\cap \{x\in X:   \mathbf{d}( H(x_{n+1}),H(x))\leq \delta  \}\nonumber \\
		=&\{x\in X: \exists x'\in q_n,u_{n+1}'\in U \text{ s.t. } (x,u_{n+1}',x')\in \!\!\!\rTo \!\!\! \} 
		\cap \{x \in X:   \mathbf{d}( H(x_{n+1}),H(x))\leq \delta  \} \nonumber\\
		=&\left\{\!x\!\in\! X: \!
			[\exists x'\!\in\! q_n,u_{n+1}'\!\in\! U \text{ s.t. }(x,u_{n+1}',x')\in \!\!\!\rTo\!\!\!]
			\wedge [\mathbf{d}( H(x_{n+1}),H(x))\leq \delta]
	\!\right\}.
	\end{align}
	By the induction hypothesis, we know that
	\begin{equation}\label{eq:hyp}
		q_n=\left\{x_0'\in X:
			\exists  x_0'\rTo^{u_{n}'}  x_1'\rTo^{u_{n-1}'} \cdots  \rTo^{u_1'} x_n'
			\text{ s.t. } \max_{i\in\{0,1,\dots,n\}}\mathbf{d}( H(x_i),H(x_{n-i}'))\leq \delta
		\!\right\}.
	\end{equation}
	Therefore, by combing equations~(\ref{eq:ntonp1}) and~(\ref{eq:hyp}), one gets
	\begin{align}\label{eq:res}
		q_{n+1}
		\!=\!&\left\{\!
		x\!\in\! X\!: \!  \!\!\!\!
		\begin{array}{c c c}
			\exists  x\rTo^{u_{n+1}'}x_0'\rTo^{u_{n}'}  x_1'\rTo^{u_{n-1}} \cdots  \rTo^{u_1'} x_n'\\
			\text{ s.t. }  \max_{i\in\{0,1,\dots,n\}}\mathbf{d}( H(x_i),H(x_{n-i}'))\leq \delta  \wedge \mathbf{d}( H(x_{n+1}),H(x))\leq \delta
		\end{array}\!\!\!
		\right\}\nonumber\\
		\!=\!&\left\{\!
		x\!\in\! X\!:   \! 
			\exists  x_0''\rTo^{u_{n+1}'}x_1''\rTo^{u_{n}'}   \cdots  \rTo^{u_1'} x_{n+1}''\text{ s.t. }
			\max_{i\in\{0,1,\dots,n+1\}}\mathbf{d}( H(x_i),H(x_{n+1-i}''))\leq \delta
		\!
		\right\}.
	\end{align}
	Note that, in the second equality of equation~(\ref{eq:res}),
	we choose $x_0''=x$ and $x_{i}''=x_{i-1}',i\geq 1$.
	Therefore, one obtains that the induction step holds, which completes the induction and proof.
\end{proof}

\textit{Proof of Theorem~\ref{thm:int}}
\begin{proof}
	($\Rightarrow$)
	By contraposition:
	suppose that   there exists a state $(x,q)\in X_I$ such that $x\in X_0\cap X_S$ and $q\cap X_0 \subseteq X_S$.
	Let
	\[
	(x_0,q_0)\rTo^{u_1}_I  (x_1,q_1)\rTo^{u_1}_I  \cdots  \rTo^{u_n}_I(x_n,q_n)
	\]
	be a run reaching $(x,q)=:(x_n,q_n)$.
	By Proposition~\ref{prop:obs}, we have
	$x_n\rTo^{u_n}x_{n-1}\rTo^{u_{n-1}}\cdots     \rTo^{u_1}x_{1}$, which is well-defined in $S$ as $x_n\in X_0$.
	Moreover, by Proposition~\ref{prop:obs}, we have
	\[
	q_n=\left\{x_0'\in X:
	\exists  x_0'\rTo^{u_{n}'}  x_1'\rTo^{u_{n-1}'} \cdots  \rTo^{u_1'} x_n' \text{ s.t. }
	\max_{i\in\{0,1,\dots,n\}}\mathbf{d}( H(x_i),H(x_{n-i}'))\leq \delta
	\right\}.
	\]
	However, since $q_n\cap X_0 \subseteq X_S$,
	we know that
	there does not exist  $x_0' \in X_0 \setminus X_S$
	and
	$x_0'\rTo^{u_{n}'}  x_1'\rTo^{u_{n-1}'} \cdots  \rTo^{u_1'} x_n'$
	such that
	$\max_{i\in\{0,1,\dots,n\}}\mathbf{d}( H(x_i),H(x_{n-i}'))\leq \delta$.
	Therefore, by considering
	$x_n\in X_0\cap X_S$ and
	$x_n\rTo^{u_n}x_{n-1}\rTo^{u_{n-1}}\\\cdots     \rTo^{u_1}x_{1}$,
	we know that the system is not $\delta$-approximate initial-state opaque.
	
	($\Leftarrow$)
	By contradiction: suppose that equation~(\ref{eq:thmop}) holds and
	assume, for the sake of contradiction, that $S$ is not $\delta$-approximate initial-state opaque.
	Then, we know that there exists  a secret initial state $x_0\in X_0\cap X_S$
	and a sequence of transitions
	$x_0\rTo^{u_1}x_1\rTo^{u_2}\cdots\rTo^{u_n}x_n$
	such that
	there does  not exist
	a non-secret initial state $x_0'\in X_0\setminus X_{S}$
	and a sequence of transitions $x_0'\rTo^{u_1'}x_1'\rTo^{u_2'}\cdots\rTo^{u_n'}x_n'$
	such that
	$\max_{i\in\{0,1,\dots,n\}}\mathbf{d}( H(x_i),H(x_i'))\leq \delta$.
	Let us consider the following sequence of transitions in $S_I$
	\[
	(x_n,q_0)\rTo^{u_n}_I  (x_{n-1},q_1)\rTo^{u_{n-1}}_I  \cdots  \rTo^{u_1}_I(x_0,q_n).
	\]
	By Proposition~\ref{prop:obs}, we know that
	\[
	q_n=\left\{x_0'\in X:
	\exists  x_0'\rTo^{u_{n}'}  x_1'\rTo^{u_{n-1}'} \cdots  \rTo^{u_1'} x_n'\text{ s.t. }
	\max_{i\in\{0,1,\dots,n\}}\mathbf{d}( H(x_i),H(x_{i}'))\leq \delta
	\right\}.
	\]
	Since  equation~(\ref{eq:thmop}) holds, we know that $q_n\cap X_0\not\subseteq X_S$.
	Therefore,
	there   exists
	a non-secret initial state $x_0'\in X_0\setminus X_{S}$
	and a sequence of transitions $x_0'\rTo^{u_1'}x_1'\rTo^{u_2'}\cdots\rTo^{u_n'}x_n'$
	such that
	$\max_{i\in\{0,1,\dots,n\}}\mathbf{d}( H(x_i),H(x_i'))\leq \delta$,
	which is a contradiction.
	Therefore, $S$ has to be $\delta$-approximate initial-state opaque.
\end{proof}


\textit{Proof of Proposition~\ref{prop:obs-c}}
\begin{proof}
	It is straightforward to show (i).
	Hereafter, we show (ii) by induction on the length of input sequence.
	
	When $n=0$, i.e., there is no input sequence, we get that $(x_0,q_0)\in X_{C0}$.
	By the definition of $X_{C0}$, we know that
	\[
	q_0=\{x_0'\in X_0: \mathbf{d}( H(x_0),H(x_0'))\leq \delta \},
	\]
	which implies (ii) immediately.

	To proceed the induction, we assume that (ii) holds for $n=k$.
	Now, we need to show that (ii) also holds for $n=k+1$.
	To this end, we consider  arbitrary pair $(x_0,q_0)\in X_{C0}$ and   finite run
	\[
	(x_0,q_0)\!\rTo^{u_1}_C\! \! (x_1,q_1)\!\rTo^{u_2}_C\!  \cdots  \!\rTo^{u_n}_C\!\!(x_n,q_n)\!\rTo^{u_{n+1}}_C\!\!(x_{n+1},q_{n+1}).
	\]
	Then, we have
	\begin{align} \label{eq:ntonp1-c}
		q_{n+1}
		=&\cup_{\hat{u}\in U}\!\! \mathbf{Post}_{\hat{u}}(x) \!\cap\! \{x''\!\!\in\!\! X\!: \!  \mathbf{d}(H(x'),H(x''))\!\leq \!\delta  \} \nonumber\\
		=&\{x\in X: \exists x'\!\in\! q_n,u_{n+1}'\!\in\! U \text{ s.t. } (x',u_{n+1}',x)\!\in \!\!\!\rTo \!\!\! \} 
		\cap \{x \in X:   \mathbf{d}( H(x_{n+1}),H(x))\leq \delta  \} \nonumber\\
		=&\!\left\{\!x\!\in\! X\!\!: \!
			[\exists x'\!\in\! q_n,u_{n+1}'\!\in\! U \text{ s.t. }(x',u_{n+1},x)\!\in \!\!\!\rTo\!\!]
			\wedge [\mathbf{d}( H(x_{n+1}),H(x))\leq \delta]
	\right\}.
	\end{align}
	By the induction hypothesis, we know that
	\begin{equation}\label{eq:hyp-c}
		q_n=\left\{x_n'\!\in\! X: \!\!
			\exists x_0'\!\in \!X_{0},\exists  x_0'\rTo^{u_1'}  x_1'\rTo^{u_{2}'} \cdots  \rTo^{u_n'} x_n'
			\text{ s.t. } \max_{i\in\{0,1,\dots,n\}}\mathbf{d}( H(x_i),H(x_{i}'))\leq \delta
		\right\}.
	\end{equation}
	Therefore, by combing equations~(\ref{eq:ntonp1-c}) and~(\ref{eq:hyp-c}), one obtains
	\begin{align}\label{eq:res-c}
		q_{n+1}
		\!=\!&\left\{\!x\!\in\! X\!: \!\! 
		\begin{array}{c c c}  
			\exists x_0'\!\in \!X_{0},\exists  x_0'\!\rTo^{u_{1}'}\!  x_1' \!\rTo^{u_{2}'}\!\cdots \! \rTo^{u_{n}'}\! x_n'\!\rTo^{u_{n+1}'} \! x\notag\\
			\text{ s.t. } \max_{i\in\{0,1,\dots,n\}}\mathbf{d}( H(x_i),H(x_{i}'))\leq \delta\notag
			\wedge \mathbf{d}( H(x_{n+1}),H(x))\leq \delta
		\end{array}\!\!\!\!\right\} \\
		\!=\!&\left\{\!x_{n+1}\!\in\! X\!:\!\!\!\!
		\begin{array}{c c }
			\exists x_0'\in\!X_{0},\exists  x_0'\rTo^{u_{1}'}x_1'\rTo^{u_{2}'}   \cdots  \rTo^{u_{n+1}'} x_{n+1}'
			\text{ s.t. }  \max_{i\in\{0,1,\dots,n+1\}}\mathbf{d}( H(x_i),H(x_{i}''))\leq \delta
		\end{array}\!\!\!\!\right\}.
	\end{align}
	Note that, in the second equality of equation~(\ref{eq:res-c}), we choose $x_{n+1}'=x$.
	Therefore, we conclude that the induction step holds, which completes the proof.
\end{proof}


\textit{Proof of Theorem~\ref{thm:cur-veri}}
\begin{proof}
	($\Rightarrow$)
	By contraposition:
	suppose that   there exists a state $(x,q)\in X_C$ such that  $q\subseteq X_S$.
	Let
	\[
	(x_0,q_0)\rTo^{u_1}_C  (x_1,q_1)\rTo^{u_1}_C  \cdots  \rTo^{u_n}_C(x_n,q_n),
	\]
	be a run reaching $(x,q)=:(x_n,q_n)$.
	By Proposition~\ref{prop:obs-c}, we have $x_0\in X_0$ and $x_0\rTo^{u_1}x_{1}\rTo^{u_{2}}\cdots     \rTo^{u_n}x_{n}$.
	Moreover, one has
	\[
	q_n\!=\!\left\{x_n'\!\in\! X: \!
	\exists x_0'\in X_{0},\exists  x_0'\rTo^{u_1'}  x_1'\rTo^{u_{2}'} \cdots  \rTo^{u_n'} x_n'
	\text{ s.t. } \max_{i\in\{0,1,\dots,n\}}\mathbf{d}( H(x_i),H(x_{i}'))\leq \delta
!\right\}.
	\]
	Since $q_n\subseteq X_S$,
	one obtains that $x_n\in q_n \subseteq X_S$ and
	there does not exist an initial state $x_0' \in X_0$ and a run $x_0'\rTo^{u_{1}'}  x_1'\rTo^{u_{2}'} \cdots  \rTo^{u_n'} x_n'$
	such that $x_n\in X\setminus X_S$ and $\max_{i\in\{0,1,\dots,n\}}\mathbf{d}( H(x_i),H(x_{i}'))\leq \delta$.
	Therefore, by considering $x_0\rTo^{u_1}x_{1}\rTo^{u_{2}}\cdots     \rTo^{u_n}x_{n}$,
	we know that the system is not $\delta$-approximate current-state opaque.
	
	($\Leftarrow$)
	By contradiction: suppose that equation~(\ref{eq:thmop-c}) holds and
	assume, for the sake of contradiction, that $S$ is not $\delta$-approximate current-state opaque.
	Then, we know that there exists  an initial state $x_0\in X_0$ and a run
	$x_0\rTo^{u_1}x_1\rTo^{u_2}\cdots\rTo^{u_n}x_n$, where $x_n\in X_S$,
	such that there do  not exist  an  initial state $x_0'\in X_0$
	and a run $x_0'\rTo^{u_1'}x_1'\rTo^{u_2'}\cdots\rTo^{u_n'}x_n'$
	such that
	$x_n'\in X\setminus X_S$ and
	$\max_{i\in\{0,1,\dots,n\}}\mathbf{d}( H(x_i),H(x_i'))\leq \delta$.
	Let us consider the following sequence of transitions in $S_C$
	\[
	(x_0,q_0)\rTo^{u_1}_C  (x_{1},q_1)\rTo^{u_{2}}_C  \cdots  \rTo^{u_n}_C(x_n,q_n),
	\]
	where
	$q_0=\{x\in X_0: \mathbf{d}(H(x_0),H(x))\leq \delta    \}$.
	By Proposition~\ref{prop:obs-c}, we obtain that
	\[
	q_n\!=\!\left\{\!x_n'\in X\!: \!
	\exists x_0'\in X_{0},\exists  x_0'\rTo^{u_1'}  x_1'\rTo^{u_{2}'} \cdots  \rTo^{u_n'} x_n'
	\text{ s.t. } \max_{i\in\{0,1,\dots,n\}}\mathbf{d}( H(x_i),H(x_{i}'))\leq \delta
		\right\}.
	\]
	Since  equation~(\ref{eq:thmop-c}) holds, we know that $q_n \not\subseteq X_S$.
	Therefore, there   exist an initial state $x_0'\in X_0$
	and a run $x_0'\rTo^{u_1'}x_1'\rTo^{u_2'}\cdots\rTo^{u_n'}x_n'$
	such that
	$\max_{i\in\{0,1,\dots,n\}}\mathbf{d}( H(x_i),H(x_i'))\leq \delta$ and $x_n\in X\setminus X_S$,
	which is a contradiction.
	Therefore, $S$ has to be $\delta$-approximate current-state opaque.
\end{proof}

\textit{Proof of Theorem~\ref{thm:inf-veri}}
\begin{proof}
	By contraposition:
	suppose that  there exist two  states $(x_n,q_n')\in X_I,(x_n,q_n)\in X_C$
	such that $x_n \in X_S $ and $q_n\cap q_n'  \subseteq X_S$.
	Let
	\begin{align}
	(x_0,q_0)\rTo^{u_1}_C  (x_1,q_1)\rTo^{u_2}_C  \cdots  \rTo^{u_n}_C(x_n,q_n)
	(x_{n+m},q_{n+m})\rTo^{u_{n+m}}_I  (x_{n+m-1},q_{n+m-1})\rTo^{u_{n+m-1}}_I 
	 \rTo^{u_{n+1}}_I(x_n,q_n')\nonumber
	\end{align}
	be two runs reaching $(x,q)$ and $(x,q')$, respectively.
	By Propositions~\ref{prop:obs} and~\ref{prop:obs-c}, we have $x_0\in X_0$ and
	\[
	x_0\!\rTo^{u_1}\!\! \cdots  \!\rTo^{u_{n-1}}\!\!x_{n-1}\!\rTo^{u_n}\!\!x_n
	\!\rTo^{u_{n+1}}\!\!x_{n+1}\!\rTo^{u_{n+2}}\!\!\cdots  \!\rTo^{u_{n+m}}\!\!x_{n+m}.\]
	Moreover, one has
	\begin{align}
		q_n\cap q_n' =  
		\left\{x_n'\!\in\! X: \!\!\!\!
		\begin{array}{c c}
			\exists x_0'\in X_{0},\exists  x_0'\rTo^{u_1'}   \cdots  \rTo^{u_{n+m}'} x_{n+m}'\\
			\text{ s.t. } \max_{i\in\{0,1,\dots,{n+m}\}}\mathbf{d}( H(x_i),H(x_{i}'))\leq \delta
		\end{array}\!\!\!\right\}.  \nonumber
	\end{align}
	However, since $q_n\cap q_n' \subseteq X_S$,
	we know that
	there does not exist  $x_0' \in X_0$
	and
	$x_0'\rTo^{u_1'}   \cdots  \rTo^{u_{n+m}'} x_{n+m}'$
	such that
	$x_n'\in X\setminus X_S$ and
	$\max_{i\in\{0,1,\dots,{n+m}\}}\mathbf{d}( H(x_i),H(x_{i}'))\leq \delta$.
	Therefore, the system is not $\delta$-approximate infinite-step opaque.
	
	($\Leftarrow$)
	By contradiction: suppose that equation~(\ref{eq:thmop-if}) holds and
	assume, for the sake of contradiction, that $S$ is not $\delta$-approximate infinite-step opaque.
	Then, we know that there exists  an initial state $x_0\in X_0$, a sequence of transitions
	$x_0\rTo^{u_1}x_1\rTo^{u_2}\cdots\rTo^{u_n}x_n$
	and an index $k\in\{0,\dots,n\}$
	such that
	$x_k\in X_S$ and
	there does  not exist
	an initial state $x_0'\in X_0$
	and a sequence of transitions $x_0'\rTo^{u_1'}x_1'\rTo^{u_2'}\cdots \rTo^{u_n'}x_n'$
	such that
	$x_k'\in X\setminus X_S$ and
	$\max_{i\in\{0,1,\dots,n\}}\mathbf{d}( H(x_i),H(x_i'))\leq \delta$.
	Let us consider  the following sequence of transitions in $S_C$
	\[
	(x_0,q_0)\rTo^{u_1}_C  (x_{1},q_1)\rTo^{u_{2}}_C  \cdots  \rTo^{u_k}_C(x_k,q_k),
	\]
	and the following sequence of transitions in $S_I$
	\[
	(x_n,q_n')\rTo^{u_n}_I  (x_{n-1},q_{n-1}')\rTo^{u_{n-1}}_I  \cdots  \rTo^{u_{k+1}}_I(x_k,q_k').
	\]
	By Propositions~\ref{prop:obs} and~\ref{prop:obs-c}, we know that
	\[
	q_n\cap q_n'\!=\!\left\{\!x_k'\in X:\!
	\exists x_0'\!\in\! X_0,\exists  x_0'\rTo^{u_{1}'}   \cdots  \rTo^{u_n'} x_n'\text{ s.t. }
	\max_{i\in\{0,1,\dots,n\}}\mathbf{d}( H(x_i),H(x_{i}'))\leq \delta\right\}.
	\]
	Since  equation~(\ref{eq:thmop-if}) holds, we know that $q_n\cap q_n' \not\subseteq X_S$.
	Therefore,
	there   exists $x_0'\!\in\! X_0$ and   a sequence of transitions $x_0'\rTo^{u_{1}'}   \cdots  \rTo^{u_n'} x_n'$
	such that
	$x_k\in X\setminus X_S$ and
	$\max_{i\in\{0,1,\dots,n\}}\mathbf{d}( H(x_i),H(x_i'))\leq \delta$ ,
	which is a contradiction.
	Therefore, $S$ has to be $\delta$-approximate infinite-step opaque.
\end{proof}

\bibliographystyle{alpha}
\bibliography{bibliography}

\end{document}